\documentclass[11pt]{article}

\usepackage[T1]{fontenc}
\usepackage{lmodern}
\usepackage[margin=1in]{geometry}
\usepackage{amsmath,amssymb,amsfonts,amsthm}
\usepackage{enumerate}
\usepackage{natbib}
\usepackage{booktabs}
\usepackage{graphicx}
\usepackage{caption}
\usepackage{subcaption}
\usepackage{xcolor}
\usepackage{tikz}
\usepackage{pgfplots}
\usetikzlibrary{calc,patterns,positioning,arrows.meta,decorations.pathreplacing,decorations.markings}
\usepgfplotslibrary{fillbetween}
\pgfplotsset{compat=1.18}
\usepackage{setspace}
\usepackage[colorlinks=true,citecolor=blue,linkcolor=blue,urlcolor=blue]{hyperref}
\hypersetup{
  pdftitle={Equilibrium in closed constant-function market maker economies},
  pdfauthor={Muqiao Huang, Ruodu Wang, and Yiyun Wang},
  pdfkeywords={decentralized finance, automated market makers, constant-function market makers, unilateral no-trade equilibrium, homothetic preferences, reachability}
}

\theoremstyle{plain}
\newtheorem{theorem}{Theorem}
\newtheorem{proposition}{Proposition}
\newtheorem{lemma}{Lemma}
\newtheorem{corollary}{Corollary}
\newtheorem{conjecture}{Conjecture}

\theoremstyle{definition}
\newtheorem{definition}{Definition}
\newtheorem{example}{Example}

\theoremstyle{remark}
\newtheorem{remark}{Remark}

\newcommand{\dd}{\,\mathrm{d}}
\newcommand{\R}{\mathbb{R}}
\newcommand{\N}{\mathbb{N}}
\newcommand{\proofpart}[1]{\par\smallskip\noindent\emph{#1.}}

\begin{document}

\title{Equilibrium in closed constant-function market maker economies}
\date{}
\author{Muqiao Huang\thanks
 {Department of Statistics and Actuarial Science,
 University of Waterloo, Canada.
 \href{mailto:m5huang@uwaterloo.ca}{m5huang@uwaterloo.ca}.} \and Ruodu Wang\thanks
 {Department of Statistics and Actuarial Science,
 University of Waterloo, Canada.
 \href{mailto:wang@uwaterloo.ca}{wang@uwaterloo.ca}.}
 \and Yiyun Wang\thanks
 {Department of Statistics and Actuarial Science,
 University of Waterloo, Canada.
 \href{mailto:yiyunwang217@gmail.com}{yiyunwang217@gmail.com}.}}

\maketitle
\begin{abstract}
We study equilibria in a closed, fee-free constant-function market maker (CFMM) economy with two assets and two traders. An interior state is a unilateral no-trade equilibrium exactly when the CFMM marginal price equals both traders' marginal rates of substitution. For an interior initial state, individually rational unilateral equilibria are Pareto optimal relative to the fixed CFMM invariant. A weak representative agent is obtained at each fixed equilibrium by weighted sup-convolution, whereas a state-independent strong representative agent exists exactly when traders share a common homothetic preference. Every interior feasible state is reachable through finitely many valid trades, and alternating utility-maximizing trades converge to a Pareto optimal unilateral equilibrium. We also derive conditions under which trading order produces a first-mover advantage or disadvantage in the first round.

\bigskip

\noindent\textbf{Keywords:} Decentralized finance, automated market makers, constant-function market makers, unilateral no-trade equilibrium, homothetic preferences, reachability.

\end{abstract}

\section{Introduction}

Automated market makers (AMMs) have become a central trading mechanism in
decentralized finance. Their origin predates blockchain markets: \citet{H03}
introduced an automated liquidity supplier for thin prediction markets, and
Hanson's logarithmic market scoring rule provides a canonical example in
which prices are generated mechanically from a potential function rather
than by matching contemporaneous buyers and sellers; see also
\citet{R2007}. Despite important differences between prediction markets and
decentralized-asset markets, the two settings share a close mathematical
structure \citep{SKM23,FPW23}.

The dominant blockchain implementation is the constant-function market maker
(CFMM), with Uniswap as a leading example
\citep{AZR2020,AZSKR2021}. A CFMM carries an inventory vector and accepts
trades that preserve a prescribed invariant. Its inventory is therefore an
economic state variable: a trade changes not only the allocation of assets
but also the price and the terms available to subsequent traders. This
differs fundamentally from a Walrasian market, in which an individual agent
takes a fixed price vector as given, and from a limit order book, in which
prices emerge from matching orders and liquidity suppliers actively manage
quotes. CFMM liquidity providers are commonly compensated through pro rata
trading fees, whereas liquidity suppliers in order-book markets seek spread
and, in some venues, rebate revenues while bearing inventory and
adverse-selection risk; see \citet{BCCGG2023,LP2025}.

This paper asks what equilibrium means when all exchange takes place through
an inventory-dependent CFMM. We study a closed, fee-free benchmark with one
CFMM, two assets, and two traders. There is no external trading venue, and
the traders do not provide liquidity. The closed-market assumption is a
modeling restriction, not a claim that decentralized-asset markets are
typically isolated. Its purpose is to identify the internal equilibrium
force generated by the liquidity curve. Fees, outside arbitrage, and
strategic liquidity provision introduce additional wedges, but would obscure
the basic mechanism by which a nonlinear inventory constraint replaces a
linear budget set.

At a fixed market state, the status quo action of each trader is the zero
trade, and a unilateral deviation is any valid trade against the current
CFMM inventory while the other trader does not move. We call a market state
a unilateral no-trade equilibrium, or simply a unilateral equilibrium, when
there is no profitable unilateral deviation. Under standard regularity
assumptions, an interior state is a unilateral equilibrium if and only if the
CFMM marginal price equals both traders' marginal rates of substitution. This
equality is local rather than Walrasian: it describes tangency to the
liquidity curve, not optimization against a common linear budget set. For an
interior initial state, every individually rational unilateral equilibrium is
Pareto optimal relative to the fixed CFMM invariant, giving a
mechanism-specific analogue of the First Welfare Theorem.

We next ask when the behavior of several traders can be summarized by a
representative agent. After a particular equilibrium has been fixed, a weak
representative agent is obtained from a weighted sup-convolution of the
traders' utilities. The weights are local planner weights determined by
equilibrium marginal utilities. A stronger, state-independent
representation exists only under a restrictive geometric condition: the
two-parameter family of aggregate frontiers generated by infimal convolution
must collapse to the one-parameter level-curve family of a single utility.
We show that this occurs exactly when the traders share a common homothetic
preference.

Finally, we study reachability and trading dynamics. From an interior
initial state, every interior feasible state relative to that initial state can
be implemented by a finite sequence of valid CFMM trades. Alternating
utility-maximizing trades converge to a Pareto optimal unilateral equilibrium. Trading order matters because a trade changes the state
faced by the next trader. In the first round, the model can display either
a first-mover advantage or a first-mover disadvantage. A disadvantage arises
when the initial CFMM price lies between the two traders' indifference
prices; an advantage arises when both prices lie on the same side of the
CFMM price and neither possible first trade moves the price past the other
trader's initial price. We conjecture that the first-round welfare ordering
persists at the limiting equilibrium; the numerical examples provide
supporting evidence, but the general statement remains open.

The paper's contribution is threefold. First, it develops an equilibrium
framework for a closed CFMM economy in which prices are endogenous to the
inventory state. Second, it characterizes the welfare and aggregation
consequences of replacing a fixed budget line by a nonlinear liquidity
curve. Third, it separates mechanical, individually rational, and
best-response reachability and shows how sequential best responses select a
unilateral equilibrium.

These results identify several genuinely non-Walrasian phenomena. The
aggregate holdings of the traders need not remain fixed because the CFMM is
an inventory-bearing participant, even though total system inventory is
conserved. Equilibrium allocations may therefore differ from the
corresponding Walrasian allocation even when marginal prices agree. In a
numerical example, both traders attain higher utility at a unilateral equilibrium
than at the corresponding Walrasian equilibrium; see
Example~\ref{ex:liquidity-welfare}. This does not mean that the CFMM creates
resources: the two mechanisms compare different feasible sets because the
pool supplies one asset and absorbs the other. The model also generates path
dependence, reachability restrictions, and transaction-order effects solely
through endogenous movement along the liquidity curve.

The closest literature studies related but distinct objects. Early analyses
of CFMM structure and price impact include \citet{AKCNC2019},
\citet{CK2019}, and \citet{CJ2021}; see \citet{BCCGG2023} for a review.
\citet{LP2025} study equilibrium pool size from the liquidity providers'
perspective, while \citet{BF24} analyze how price impact depends on pool
size, pooling, and fee design. \citet{SKM23} characterize constant-function
market makers axiomatically and study invariance properties of their trading
functions, and \citet{FPW23} connect CFMMs with prediction-market mechanisms.
Our object is instead equilibrium among traders for a fixed liquidity curve
in a closed market. The geometric composition of multiple liquidity sources
is studied by \citet{ACDEK2023} and \citet{BFP25}; our aggregation result is
demand-side and asks when heterogeneous trader preferences admit a
state-independent representative utility. Transaction ordering and verifiable sequencing in
decentralized exchanges are studied by \citet{XMP2023}; the order effects
here require neither validators nor outside arbitrage and arise solely
because the first trade changes the CFMM inventory. Other related work
includes \citet{CDM2025}, \citet{AD2021}, and \citet{MMRZ2022}.

The remainder of the paper is organized as follows.
Section~\ref{sec:pre} reviews the Walrasian benchmark, introduces CFMMs, and
compares budget-line trading with liquidity-curve trading.
Section~\ref{sec:uf} develops the utility and level-curve results used
throughout. Section~\ref{sec:eq0} studies static unilateral equilibria and welfare.
Section~\ref{sec:cba} treats aggregation and representative agents.
Section~\ref{sec:tr} studies reachability, alternating best responses, and
first-mover effects. Section~\ref{sec:con} concludes.

\section{Preliminaries}\label{sec:pre}
\subsection{Classical pure exchange economy} \label{sub:ex}
Classical pure exchange economies provide a useful benchmark for studying equilibrium prices
and allocations. In this framework, agents trade assets at exogenously given prices and are
modeled as price takers. This abstraction is appropriate for competitive markets in which no
individual agent can affect prices, and it leads to linear budget constraints and market-clearing
conditions. We recall this benchmark because it highlights the main departure of the CFMM
setting studied in this paper: in a CFMM market, trades move the CFMM inventory and therefore
change the CFMM price.

We consider a pure exchange economy as in \citet[Section 15.B]{MWG1995}, with two consumer agents $i=1,2$ and two tradable assets, denoted by $X$ and $Y$. In the Walrasian equilibrium terminology, these would be referred to as commodities, but here we adopt a financial interpretation.
We write $\R_+=[0,\infty)$, $\R_{++}=(0,\infty)$, $\bar\R=[-\infty,\infty]$, and $\N=\{0,1,2,\ldots\}$.
An allocation $\mathbf{R}=\left(\mathbf{r}_1, \mathbf{r}_2\right)\in \mathbb{R}_{+}^2 \times \mathbb{R}_{+}^2$ specifies each agent's holdings, where $\mathbf{r}_i= \left(x_i, y_i\right)$ represents agent $i$'s holdings in the two assets.
The restriction of $\mathbf{r}_i\in \R_+^2$ reflects that short positions are not allowed on these assets.
We also denote the initial endowments by $\mathbf{R}^{\mathsf{in}}=(\mathbf r_1^{\mathsf{in}}, \mathbf r_2^{\mathsf{in}}) \in \mathbb{R}_{+}^2 \times \mathbb{R}_{+}^2$.
Suppose that each agent's preferences can be represented by a utility function $u_i: \mathbb{R}_{+}^2 \rightarrow \mathbb{R}$, for $i=1,2$.
Let
\[
\Delta_n=\left\{\left(x_1, \ldots, x_n\right) \in \mathbb{R}_{+}^n: \sum_{i=1}^n x_i=1\right\}
\]
denote the standard simplex in $\mathbb{R}^n$. The Walrasian equilibrium in a pure exchange economy is then defined as follows.

\begin{definition}\label{def:1}
Given the initial endowments $\mathbf R^{\mathsf{in}}\in \mathbb{R}_+^2 \times \mathbb{R}_+^2$, a triple of vectors
\[
(\mathbf r_1^*, \mathbf r_2^*, \mathbf q^*) \in \mathbb{R}_+^2 \times \mathbb{R}_+^2 \times \Delta_2
\]
is called a \emph{Walrasian equilibrium} if
\begin{itemize}
\item[(a)] for each agent $i=1,2$, the allocation $\mathbf r_i^*$ solves
\[
\max_{\mathbf r_i \in \mathbb{R}_+^2} u_i(\mathbf r_i)
\quad \text{subject to} \quad
\mathbf q^* \cdot \mathbf r_i \leq \mathbf q^* \cdot \mathbf r_i^{\mathsf{in}};
\]
\item[(b)] $\mathbf r_1^* + \mathbf r_2^* = \mathbf r_1^{\mathsf{in}} + \mathbf r_2^{\mathsf{in}}$.
\end{itemize}
\end{definition}
Walrasian equilibrium assumes a linear trading constraint, where assets are priced linearly according to $\mathbf q^*$. In the standard economic setting of a competitive market economy with multiple agents and homogeneous assets, linear pricing emerges because agents are price-takers: no individual agent has sufficient market power to influence prices, and transactions occur at uniform rates across the market, leading to a single price vector that clears supply and demand. Condition (a) states that, in equilibrium, each agent maximizes their utility subject to a budget constraint, while condition (b) requires that the market clears. Allocations satisfying condition (b) are called \emph{feasible}, as they ensure that the total allocation equals the aggregate initial endowments. The vector $\mathbf{q}^*=(q_1^*,q_2^*)$ is the equilibrium price, at which agents can exchange between assets $X$ and $Y$.

\begin{definition}
An allocation $(\mathbf r_1, \mathbf r_2) \in \mathbb{R}_+^2 \times \mathbb{R}_+^2$ is \emph{Pareto optimal} if it is feasible (i.e., $\mathbf r_1 + \mathbf r_2 = \mathbf r_1^{\mathsf{in}} + \mathbf r_2^{\mathsf{in}}$) and there exists no other feasible allocation $(\mathbf r_1', \mathbf r_2')$ such that $u_i(\mathbf r_i') \geq u_i(\mathbf r_i)$ for both $i=1,2$, with strict inequality for at least one $i$.
\end{definition}
Under standard local-nonsatiation assumptions, every Walrasian equilibrium is Pareto optimal. Results of this type are commonly referred to as the First Fundamental Theorem of Welfare Economics.
For precise statements of this theorem, see \citet[Proposition 16.C.1]{MWG1995} and \citet[Box 2.9]{L2004}.

Assume that $u_1$ and $u_2$ are differentiable and strictly increasing, with nonzero partial
derivatives. For an agent with utility $u$ and holdings $(x,y)$, a marginal
increase $\epsilon>0$ in asset $X$ changes utility by approximately
$\partial_x u(x,y)\epsilon$. Up to higher-order terms, the same utility change
can be obtained by increasing asset $Y$ by
\[
\frac{\partial_x u(x,y)}{\partial_y u(x,y)}\epsilon.
\]
Accordingly, define the marginal rate of substitution, or indifference price
of $X$ in units of $Y$, by
\begin{equation}\label{eq:def-p}
p(u_i,x,y):=\frac{\partial_x u_i(x,y)}{\partial_y u_i(x,y)}.
\end{equation}
At an interior Walrasian equilibrium $(\mathbf r_1^*,\mathbf r_2^*,\mathbf q^*)$,
\begin{equation}\label{eq:mar}
p(u_i,\mathbf r_i^*)=\frac{q_1^*}{q_2^*},\qquad i\in\{1,2\}.
\end{equation}
Otherwise, agent $i$ could trade at the fixed price $q_1^*/q_2^*$ and
strictly increase utility, contradicting optimality. Because utilities are
strictly increasing, both components of the equilibrium price vector are
positive.

\subsection{Constant-function market makers}\label{sec:cfmm}

A CFMM with $n$ assets is specified by a trading function
$\varphi:\R_+^n\to\R$ and a current inventory $\mathbf r_0\in\R_+^n$.
When a trader proposes $\mathbf h\in\R^n$, the CFMM accepts the trade exactly
when
\[
\mathbf r_0+\mathbf h\in\R_+^n,
\qquad
\varphi(\mathbf r_0+\mathbf h)=\varphi(\mathbf r_0).
\]
The inventory is then updated to $\mathbf r_0+\mathbf h$. Thus every accepted
trade preserves the invariant value of $\varphi$.

Although this definition allows any number of assets, the rest of the paper
uses two assets, denoted by $X$ and $Y$, to match the pure-exchange
benchmark. At invariant level $c\in\R$, the admissible CFMM inventories form
the liquidity curve
\[
s_{\varphi,c}:=\{\mathbf r\in\R_+^2:\varphi(\mathbf r)=c\}.
\]
When this set is represented as a graph, we reserve
$\ell_{\varphi,c}$ for its graphing function.

Several widely used AMMs are built from particular choices of trading function. Uniswap popularized the two-asset constant-product design, StableSwap introduced an invariant tailored to assets expected to trade near parity, and Balancer developed weighted constant-mean pools that can contain more than two assets; see \citep{AZR2020,Egorov2019,MartinelliMushegian2019}. We record convenient representations of these trading functions below.
In what follows, $\mathbf r=(x,y)\in \R_+^2$.

\begin{itemize}
 \item \textbf{Uniswap V2} takes the form
 \[
\varphi(\mathbf{r}) = \log(x)+\log(y),
\]
This logarithmic representation has the same level curves as the constant-product invariant in \citet{AZR2020}.
\item \textbf{Uniswap V3} introduces concentrated liquidity through the use of virtual inventory, and its trading function takes the form
\[
\varphi(\mathbf{r}) = \log(x + \alpha)+\log(y + \beta), \quad \alpha,\, \beta > 0.
\]
Here, $\alpha$ and $\beta$ represent virtual inventory, allowing liquidity to be effectively concentrated within a specified price range. This formula should be interpreted as a local shifted-reserve representation within an active liquidity range, as in the virtual-reserve description of concentrated liquidity in \citet{AZSKR2021}. Once a boundary of the active range is reached, the feasible liquidity curve is truncated, and the interior first-order price condition has to be replaced by a one-sided condition; see the discussion in Section~\ref{sec:eq0}.
\item
\textbf{StableSwap} is designed for assets expected to trade near parity. A convenient two-asset stylized StableSwap-type trading function is
\[
\varphi(\mathbf r)=\log\bigl(C(x+y)+xy\bigr),\qquad C>0.
\]
At a fixed level, this is the invariant $C(x+y)+xy=\mathrm{constant}$. It captures the interpolation between near-constant-sum behavior and the reserve protection supplied by a product term, but it should not be identified with the full multi-asset Curve invariant. See \citet{Egorov2019} for the original StableSwap construction and \citet[Appendix B.6]{BF24} for this two-asset representation.
\item \textbf{Balancer} takes the form
\[
\varphi(\mathbf{r}) = \alpha\log(x)+(1-\alpha)\log(y),
\]
with $\alpha\in(0,1)$. It is a weighted version of Uniswap V2. This is the two-asset case of the weighted-pool construction in \citet{MartinelliMushegian2019}.
\end{itemize}

The form of a trading function can vary widely, but important market properties such as the behavior of the pricing oracle or the presence of divergence loss can still be derived in general from its mathematical structure. For example, \citet{BF24} proposed axioms on the underlying utility function and thereby obtained results on the size of swaps and the induced pricing oracle. \citet{SKM23} focused particularly on separability and on different invariance properties under scaling to identify specific subclasses of CFMMs. Finally, \citet{FPW23} introduced an axiomatic framework for market makers and proved that CFMMs equipped with concave trading functions satisfy the desired properties of a good market maker.

If $\varphi$ is differentiable with nonzero partial derivatives, define the
CFMM price of $X$ in units of $Y$ by
\[
p(\varphi,x,y):=\frac{\partial_x \varphi(x,y)}{\partial_y \varphi(x,y)}.
\]
\citet{SKM23} and \citet{FPW23} instead use exchange rates relative to the
``grand bundle'' of all assets. In the two-asset case, their normalized
representation is
\[
\left( \frac{ p(\varphi, x,y) }{1+ p(\varphi, x,y)} \,,\, \frac{1}{1+ p(\varphi, x,y)} \right).
\]

\subsection{Price-taking and liquidity-curve trading}

We first clarify the sense in which the present framework departs from the classical pure
exchange economy.
 In a Walrasian equilibrium, each agent takes a price vector $\mathbf q$ as
given and maximizes utility over the budget set. The boundary of this set is a straight budget line, and
the price vector is not affected by the agent's individual trade.

In a CFMM market, the trading constraint is generated instead by the liquidity curve of the
market maker. If the current CFMM inventory is $\mathbf r_0=(x_0,y_0)$, then a trade
$\mathbf h=(h_x,h_y)$ is accepted by the CFMM if and only if
\[
\varphi(\mathbf r_0+\mathbf h)=\varphi(\mathbf r_0).
\]
Hence a trade moves the CFMM inventory along the liquidity curve
\[
s_{\varphi, \varphi(\mathbf r_0)}
=
\{(x,y)\in \mathbb R^2_+:\varphi(x,y)=\varphi(\mathbf r_0)\}.
\]
At each point of this curve, the CFMM price is given by
\[
p(\varphi,x,y)
=
\frac{\partial_x \varphi(x,y)}{\partial_y \varphi(x,y)}.
\]
The terms of trade therefore vary with the size and direction of the trade.

Given the current CFMM inventory \(\mathbf r_0\) and agent holding
\(\mathbf r_i\), the agent's feasible frontier is
\[
\left\{
\mathbf r_i-\mathbf h:
\mathbf r_0+\mathbf h\in\mathbb R_+^2,\
\mathbf r_i-\mathbf h\in\mathbb R_+^2,\
\varphi(\mathbf r_0+\mathbf h)=\varphi(\mathbf r_0)
\right\}.
\]
Unlike the Walrasian budget line, this feasible frontier is generally a
 nonlinear curve.

\begin{figure}[htbp]
\centering
\begin{minipage}{0.47\textwidth}
\centering
\begin{tikzpicture}
\begin{axis}[
 width=\linewidth,
 height=6.2cm,
 xlabel={$x$},
 ylabel={$y$},
 xmin=0, xmax=8,
 ymin=0, ymax=8,
 axis lines=middle,
 axis line style={thick, black},
 grid=major,
 grid style={dotted, gray!60},
 xtick={0,1,2,3,4,5,6,7,8},
 ytick={0,1,2,3,4,5,6,7,8},
 samples=300,
 smooth,
 no markers,
 restrict y to domain=0:8,
 title={Walrasian budget line}
]
\addplot[blue, thick, domain=0.5:7.5] {8-x};
\addplot[red, dotted, thick, domain=1.5:8] {12/x};
\addplot[red, thick, domain=2:8] {16/x};
\addplot[red, dotted, thick, domain=3:8] {24/x};
\fill[blue] (axis cs:2,6) circle[radius=2pt];

\fill[red] (axis cs:4,4) circle[radius=2pt];

\node[anchor=south west]
 at (axis cs:2,6)
 {$\mathbf r_i^{\mathsf{in}}$};

\node[anchor=south west]
 at (axis cs:4,4)
 {$\mathbf r_i^*$};

\node[
 anchor=south east,
 font=\small
]
 at (axis cs:7.7,2.25)
 {$\mathbf q\cdot\mathbf r_i
 =\mathbf q\cdot\mathbf r_i^{\mathsf{in}}$};
\draw[->, thick]
 (axis cs:2.2,5.8)
 --
 (axis cs:3.8,4.2);

\end{axis}
\end{tikzpicture}
\end{minipage}
\hfill
\begin{minipage}{0.47\textwidth}
\centering
\begin{tikzpicture}
\begin{axis}[
 width=\linewidth,
 height=6.2cm,
 xlabel={$x$},
 ylabel={$y$},
 xmin=0, xmax=8,
 ymin=0, ymax=8,
 axis lines=middle,
 axis line style={thick, black},
 grid=major,
 grid style={dotted, gray!60},
 xtick={0,1,2,3,4,5,6,7,8},
 ytick={0,1,2,3,4,5,6,7,8},
 samples=300,
 smooth,
 no markers,
 restrict y to domain=0:8,
 title={CFMM feasible frontier}
]
\addplot[
 blue,
 thick,
 domain=0.5:5.9
] {8 - 16/(8-x)};
\addplot[
 red,
 dotted,
 thick,
 domain=1.34:8
] {32/(3*x)};

\addplot[
 red,
 thick,
 domain=2:8
] {16/x};

\addplot[
 red,
 dotted,
 thick,
 domain=3:8
] {24/x};
\fill[blue] (axis cs:2,5.333333333) circle[radius=2pt];

\fill[red] (axis cs:4,4) circle[radius=2pt];

\node[anchor=south west]
 at (axis cs:2,5.333333333)
 {$\mathbf r_i^{\mathsf{in}}$};

\node[anchor=south west]
 at (axis cs:4,4)
 {$\mathbf r_i^*$};
\draw[->, thick]
 (axis cs:2.2,5.2)
 --
 (axis cs:3.8,4.1);

\end{axis}
\end{tikzpicture}
\end{minipage}
\caption{
Comparison between the trading constraints faced by an agent in the
Walrasian and CFMM models. In both panels, the blue curve is the relevant
frontier and the red curves are indifference curves. In the Walrasian
model, the agent trades along a linear budget line at a fixed price. In the
CFMM model, the pool's liquidity curve induces a nonlinear, concave feasible
frontier in the agent's holdings, so the terms of trade vary with the size of
the trade.
}
\label{fig:budget-vs-liquidity}
\end{figure}

For small trades, the CFMM constraint admits a local linear approximation. If $\mathbf h$ is
small, then
\(\varphi(\mathbf r_0+\mathbf h)=\varphi(\mathbf r_0)\)
implies
\(\nabla \varphi(\mathbf r_0)\cdot \mathbf h \approx 0\),
which is the tangent-line approximation to the liquidity curve at $\mathbf r_0$. Thus the
Walrasian budget line can be viewed as a local, zero-price-impact approximation of the CFMM
trading constraint. Our framework keeps the curvature of the liquidity curve, and therefore
captures finite-liquidity price impact.

This analogy is only local. In the classical pure exchange economy, agents trade directly with
each other, so their aggregate holdings remain fixed. In the CFMM economy, agents trade with
the pool. Therefore the total inventory of the entire system is conserved, but the aggregate
holdings of the agents alone need not be conserved. This is why the equilibrium analysis below will recover a familiar price-equality condition while allowing different allocations, welfare properties, and trading sequences.

\section{Utility functions}\label{sec:uf}

In this section, we provide some technical tools on utility functions that will be useful in our later analysis. The proofs of results in this section are given in Appendix~\ref{app:utility-proofs}.

Let $D\subseteq\R_+^2$ be a common domain, and let
$u,v:D\to\bar\R$. The utility $u$ represents a preference by ranking
holdings according to their utility values. We call $u$ and $v$
\emph{equivalent} if, for all $\mathbf r,\mathbf r'\in D$,
\[
u(\mathbf r)\geq u(\mathbf r')
\quad\Longleftrightarrow\quad
v(\mathbf r)\geq v(\mathbf r').
\]
Equivalently, there is a strictly increasing bijection
$F:v(D)\to u(D)$ such that $u=F\circ v$. A CFMM trading function can be
viewed as a utility representation of the market maker's inventory
preferences.

For the formal results, let $u:\R_+^2\to\R_+$. We use the following assumptions.
\begin{enumerate}[(I)]
\item \label{as:1}
$u$ is continuous with $u(x,0) = u(0,y)= 0$ for any $x,y\in \R_+$, and strictly increasing on $\R_{++}^2$.
\item \label{as:2}$\lim_{x'\rightarrow \infty} u(x',y)=\infty$, $\lim_{y'\rightarrow \infty} u(x,y')=\infty$ for every $x,y>0$.
\item \label{as:3} $u$ is continuously differentiable and strictly concave on $\R_{++}^2$.
\item \label{as:4} for any $x_1,x_2,y_1,y_2\in\R_{++}$ such that $x_1\leq x_2$ and $y_1\geq y_2$, with at least one inequality strict, we have $p(u,x_1,y_1)>p(u,x_2,y_2)$ whenever the prices are defined.

\end{enumerate}

The normalization to nonnegative utility values is ordinal; the boundary and curvature conditions below are substantive.
Strict monotonicity says that more of either asset is preferred, while strict
concavity imposes curvature on the indifference curves. Differentiability
is used to define prices. Assumption~\eqref{as:4} is a global monotonicity
condition on willingness to pay: moving southeast in the holdings plane---a
larger holding of $X$ and a smaller holding of $Y$---reduces the amount of
$Y$ the agent is marginally willing to exchange for one unit of $X$.
Together with curvature of the CFMM liquidity curve, this condition makes
the price gap in a one-agent problem strictly monotone and the best response
single-valued. Assumptions~\eqref{as:1} and~\eqref{as:2} imply that positive
level curves have the coordinate axes as asymptotes.

The distinction between a utility representation and its level curves will
matter below. Strict concavity and differentiability are cardinal properties
of the chosen representation and need not survive an increasing transform;
the ordering and geometry of level sets are ordinal.

For a real-valued function $f$, write
\[
f^g:=\{(x,f(x)):x\in\mathsf{Domain}(f)\}
\]
for its graph and
\[
f^e:=\{(x,y):x\in\mathsf{Domain}(f),\ y\geq f(x)\}
\]
for its epigraph. For sets $A$ and $B$, addition is in the Minkowski sense:
$A+B:=\{a+b:a\in A,\ b\in B\}$. For $k\in\R$, write
$kA:=\{ka:a\in A\}$ whenever the scalar products are defined.

Given a utility function $u:\R_{+}^2 \to \R_+$, the level set at level $c\in \R_+$ is
\[
s_{u,c}:= u^{-1}(c) = \{(x,y): u(x,y) = c\}.
\]
We may omit $u$ or $c$ if the context is clear.
\begin{proposition}\label{prop:level-curve}
 Let $u: \R_+^2 \to \R_+$ be a utility function satisfying Assumptions \eqref{as:1} and \eqref{as:2}. Then for each $c> 0$, the level set $s_{u,c}$ is of the form $\ell^g$, the graph of a function $\ell: \R_{++} \to \R_{++}$ such that
 \begin{enumerate}[(a)]

 \item \label{curve:cts-dec} the function $\ell$ is continuous and strictly decreasing;

 \item \label{curve:asymptote}the function satisfies $\lim_{x\to 0^+} \ell(x) = \infty $ and $\lim_{x\to \infty} \ell(x) = 0$;
 \item \label{curve:ray-intercept}for every $k >0$, there exists a unique $x$ such that $\ell(x) = kx$, moreover, if $k_1> k_2$, then the corresponding $x_1, x_2$ satisfy $x_1 < x_2$.

 \noindent\hspace*{\dimexpr-\leftmargin\relax} If, in addition, $u$ satisfies Assumption \eqref{as:3}, then:
 \item the function\label{curve:cx} $\ell$ is strictly convex;
 \item \label{curve:diff}the function $\ell$ is continuously differentiable;

 \item \label{curve:axis-tangent}the derivative of $\ell$ satisfies $\lim_{x\to 0^+} \ell'(x) = -\infty$ and $\lim_{x\to \infty} \ell'(x) = 0$, so that for each $p>0$, there exists a unique $x$ such that $\ell'(x) = -p$;
 \item \label{curve:price}the derivative of $\ell$ is related to the marginal exchange rate by
 \[
\ell'(x) = -p(u,x, \ell(x)).
\]
 \end{enumerate}
\end{proposition}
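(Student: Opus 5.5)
Fix $c>0$. The plan is to construct $\ell$ pointwise with the intermediate value theorem, read off (a)--(c) from monotonicity and the boundary conditions, and then obtain (d)--(g) from convexity of upper level sets together with the implicit function theorem.

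\emph{Construction of $\ell$.} For each $x>0$ consider $g_x(y):=u(x,y)$ on $\R_+$. By Assumption~\eqref{as:1}, $g_x$ is continuous, satisfies $g_x(0)=0<c$, and is strictly increasing on $(0,\infty)$. By Assumption~\eqref{as:2}, $g_x(y)\to\infty$ as $y\to\infty$. The intermediate value theorem and strict monotonicity then give a unique $y=\ell(x)>0$ with $u(x,\ell(x))=c$. Points with $x=0$ or $y=0$ have utility $0\neq c$, so $s_{u,c}=\ell^g$.

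\emph{Parts (a)--(c).} For (a), if $x_1<x_2$ and $\ell(x_2)\ge \ell(x_1)$, strict monotonicity gives $u(x_2,\ell(x_2))>c$, a contradiction; so $\ell$ is strictly decreasing. For continuity, I argue by sequences: on a compact subinterval $\ell$ is bounded (by monotonicity), so any limit point $y^*$ of $\ell(x_n)$ with $x_n\to x$ satisfies $u(x,y^*)=c$ by continuity of $u$. Since $c>0$ forces $y^*>0$, uniqueness gives $y^*=\ell(x)$.

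For (b), suppose $\ell(x)\to L<\infty$ as $x\to0^+$. Then $(x,\ell(x))\to(0,L)$, and continuity gives $u(0,L)=c>0$, contradicting $u(0,L)=0$. Suppose instead $\ell(x)\to L>0$ as $x\to\infty$. Then $u(x,\ell(x))\ge u(x,L)\to\infty$ by Assumption~\eqref{as:2}, again a contradiction.

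For (c), the function $h(x)=\ell(x)-kx$ is continuous and strictly decreasing, with $h\to\infty$ at $0^+$ and $h\to-\infty$ at $\infty$, so it has a unique root. If $k_1>k_2$, then $\ell(x_2)-k_1x_2<\ell(x_2)-k_2x_2=0$, which forces $x_1<x_2$ because $\ell(\cdot)-k_1\,\cdot$ is decreasing.

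\emph{Parts (d)--(g).} Under Assumption~\eqref{as:3}, the upper level set $\{u\ge c\}\cap\R_{++}^2$ is convex, and by monotonicity it equals the epigraph $\ell^e$. Hence $\ell$ is convex. For strictness, suppose $\ell$ is affine on some interval. Then two points of $s_{u,c}$ have their midpoint also on $s_{u,c}$, whereas strict concavity gives $u(\text{midpoint})>c$; so $\ell$ is strictly convex.

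For (e), strict increase and differentiability give only $\partial_y u\ge0$, which is the step I expect to be delicate: I need $\partial_y u>0$ to apply the implicit function theorem. I would derive it from concavity. If $\partial_y u(x_0,y_0)=0$, the concavity inequality $u(x_0,y)\le u(x_0,y_0)+\partial_y u\,(y-y_0)=u(x_0,y_0)$ would hold for all $y>y_0$, contradicting strict increase. With $\partial_y u>0$ established, the implicit function theorem gives $\ell\in C^1$ with $\ell'=-\partial_xu/\partial_yu$, which is (g). The same argument gives $\partial_xu>0$, so the prices are defined and $\ell'<0$.

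For (f), $\ell'$ is increasing by strict convexity and negative, so the limits of $\ell'$ at $0^+$ and at $\infty$ exist. If $\ell'\ge -M$ near $0$, then $\ell$ would be bounded near $0$, contradicting (b). If $\ell'\le -\epsilon<0$ for all large $x$, then $\ell\to-\infty$, contradicting $\ell>0$. Continuity and strict monotonicity of $\ell'$ then give a unique solution of $\ell'(x)=-p$ for each $p>0$.
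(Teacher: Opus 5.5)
Your proof is correct and follows essentially the same route as the paper: an intermediate-value construction of $\ell$, monotonicity and limit arguments for (a)--(c), strict concavity for (d), and the implicit function theorem plus a monotone-limit argument on $\ell'$ for (e)--(g). Your explicit derivation of $\partial_y u>0$ (and $\partial_x u>0$) from the concavity gradient inequality is a useful addition, since the paper only asserts this positivity before applying the implicit function theorem.
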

We see that parts \eqref{curve:cx} and \eqref{curve:price} of Proposition~\ref{prop:level-curve} give the monotonicity of $p$ along any fixed level curve, while Assumption \eqref{as:4} is a global condition which in general cannot be deduced from the other assumptions.

For any utility function $u$, the collection of level sets is denoted by
\[
\mathcal L_u = \left\{ s_{u,c}: c\in \mathbb{R}_{+}\right\}.
\] When Assumptions \eqref{as:1} and \eqref{as:2} hold, we will denote by $\ell_{u,c}$ the function constructed such that $\ell^g_{u,c} = s_{u,c}$. In this case, the level sets are referred to as level curves. In the context of $u$ being the trading function of a CFMM, the level curves are called liquidity curves. Let $\mathcal L_u^+ = \{s_{u,c}: c\in \R_{++}\}$ so that $\mathcal L_u = \mathcal L_u^+ \cup \{s_{u,0}\}$. Thus
\[
\mathcal L_u^+ = \{\ell_{u,c}^g: c\in \R_{++}\}.
\]
For aggregation, it is also convenient to keep the corresponding family of graphing functions separate:
\[
\Lambda_u:=\{\ell_{u,c}:c\in\R_{++}\}.
\]
Clearly, if $u$ and $v$ with codomain $\R_+$ are equivalent, we have $\mathcal L_u = \mathcal L_v$. The next proposition provides a partial converse.

\begin{proposition}\label{prop:equivalent-utility}
Let $u:\R_+^2\to\R_+$ satisfy Assumptions~\eqref{as:1}--\eqref{as:3},
and let $v:\R_+^2\to\R$ be a strictly increasing utility function with
$\mathcal L_u=\mathcal L_v$. Then $u$ and $v$ are equivalent: for every
$\mathbf r,\mathbf r'\in\R_+^2$,
\[
u(\mathbf r)\geq u(\mathbf r')
\quad\Longleftrightarrow\quad
v(\mathbf r)\geq v(\mathbf r').
\]
\end{proposition}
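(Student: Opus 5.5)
The argument has two halves: first show that $v$ is constant on each level set of $u$, then show that the induced map between levels is increasing.

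\emph{Step 1: $v$ is constant on each level set of $u$.} Since $\mathcal L_u=\mathcal L_v$, each level set $s_{u,c}$ equals some $s_{v,d}$, so $v\equiv d$ on $s_{u,c}$. Conversely, every level set of $v$ that is nonempty is a level set of $u$. The level sets of $u$ partition $\R_+^2$, and so do those of $v$. Hence the two partitions coincide, and $v=F\circ u$ for a well-defined map $F:u(\R_+^2)\to\R$. Distinct $u$-levels are distinct sets, and each is a single $v$-level, so distinct $u$-values give distinct $v$-values. Thus $F$ is injective.

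\emph{Step 2: $F$ is strictly increasing.} Take $0\le c<c'$ in $u(\R_+^2)=\R_+$; the range is all of $\R_+$ by Assumptions~\eqref{as:1}--\eqref{as:2} and the intermediate value theorem. By Proposition~\ref{prop:level-curve}, $s_{u,c'}$ is the graph of $\ell_{u,c'}$, and $s_{u,c}$ is either $\ell_{u,c}^g$ (if $c>0$) or the union of the axes (if $c=0$). I will find $\mathbf r\in s_{u,c}$ and $\mathbf r'\in s_{u,c'}$ with $\mathbf r'\ge\mathbf r$ componentwise and $\mathbf r'\neq\mathbf r$.

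For $c>0$, fix $x>0$. Strict monotonicity of $u$ on $\R_{++}^2$ forces $\ell_{u,c}(x)<\ell_{u,c'}(x)$: if not, then $u(x,\ell_{u,c'}(x))\le u(x,\ell_{u,c}(x))$, which gives $c'\le c$. So take $\mathbf r=(x,\ell_{u,c}(x))$ and $\mathbf r'=(x,\ell_{u,c'}(x))$.

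For $c=0$, take $\mathbf r=(x,0)$ and $\mathbf r'=(x,\ell_{u,c'}(x))$.

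Since $v$ is strictly increasing, $v(\mathbf r')>v(\mathbf r)$, that is, $F(c')>F(c)$. Here I read ``strictly increasing'' as meaning that componentwise dominance with strict inequality somewhere gives a strictly larger value. If the hypothesis only holds on the interior, the case $c=0$ needs an extra step. I would compare a $u$-level $c''\in(0,c')$ with the boundary using continuity of $F$ along a ray, or simply note that $F(0)$ differs from every $F(c'')$ and that the interior values already increase. The weaker reading still suffices: we then only need strict monotonicity between interior points, and the boundary level is handled by injectivity together with a limiting argument.

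\emph{Conclusion.} With $v=F\circ u$ and $F$ strictly increasing on the range of $u$, we get $u(\mathbf r)\ge u(\mathbf r')\iff v(\mathbf r)\ge v(\mathbf r')$. The main subtlety is Step 1: the hypothesis $\mathcal L_u=\mathcal L_v$ is an equality of families of sets, so we must check that every level set of $v$ is nonempty and belongs to the family. Empty level sets are harmless, since $s_{u,c}$ is nonempty for every $c\in\R_+$ and the two families therefore contain the same nonempty members. Assumption~\eqref{as:3} is not needed beyond the graph structure supplied by Proposition~\ref{prop:level-curve}.
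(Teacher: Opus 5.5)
Your proof is correct and follows the paper's argument: equality of the level families makes $v$ constant on each $u$-level set, and strict monotonicity of $v$ orders the levels when applied to a point of $s_{u,c}$ lying directly below a point of $s_{u,c'}$ (an axis point when $c=0$). This is exactly the paper's observation that $s_{u,c'}$ lies strictly above $s_{u,c}$.

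Your hedge about monotonicity holding only on the interior is unnecessary, since the paper uses strict monotonicity of $v$ across the boundary just as your main argument does. The suggested fallback would also not work as stated: $v$ is not assumed continuous, and injectivity of $F$ alone does not force $F(0)$ below the interior values.
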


When the level sets of $u$ are of the form $s_{u,c} = \ell^g_{u,c}$ and $v$ is equivalent to $u$, we see that the level sets of $v$ must also be of the form $\ell^g_{v,d}$. Any properties of $u$ that can be described by $\mathcal L_u$ are also satisfied by $v$.

A utility function $u$ is \emph{homogeneous of degree one} if $u(kx,ky)=k u(x,y)$ for all $x,y,k>0$. We will show that this is equivalent to the level curves, viewed as subsets of $\R^2$, being dilations of one another.
\begin{proposition}\label{prop:homothetic-preference}
 Let $u: \R_+^2 \to \R_+$ be a utility function satisfying Assumptions \eqref{as:1}, \eqref{as:2} and \eqref{as:3}. The following are equivalent:
 \begin{enumerate}[(i)]
 \item there exists a homogeneous utility function $v$ equivalent to $u$;
 \item for each $c, c'>0$ there exists $k> 0$ such that $\ell^g_{c} = k\ell^g_{c'}$, and $k > 1$ if and only if $c> c'$;
 \item the prices satisfy
 $p(u,x,y) = p(u,kx, ky)$ for any $x,y,k>0$.
 \end{enumerate}
 When the above holds,
 \begin{enumerate}[(a)]
 \item \label{homo:1}Assumption \eqref{as:4} holds for $u$;
 \item \label{homo:2}there is a strictly increasing function $g: \R_{++} \to \R_{++}$ such that
 $p(u,x,y)=g(y/x)$.
 \end{enumerate}
\end{proposition}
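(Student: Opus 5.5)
I plan to prove the cycle (i)$\Rightarrow$(iii)$\Rightarrow$(ii)$\Rightarrow$(i), and then derive (a) and (b) from (iii).

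For (i)$\Rightarrow$(iii), let $v$ be homogeneous of degree one and equivalent to $u$. Equivalence means that $u$ and $v$ share the same level curves, and Proposition~\ref{prop:level-curve}\eqref{curve:price} expresses $p(u,x,\ell(x))$ as $-\ell'(x)$, which depends only on the level curve. So I will argue with level curves rather than differentiate $v$, since $v$ need not be differentiable. Homogeneity sends the level set $s_{v,d}$ to $s_{v,kd}$ under the dilation $\mathbf r\mapsto k\mathbf r$. Hence if $\ell$ is the graphing function of the level curve through $(x,y)$, then $\tilde\ell(t)=k\ell(t/k)$ is the graphing function of the level curve through $(kx,ky)$. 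This gives $\tilde\ell'(kx)=\ell'(x)$, which is exactly (iii).

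For (iii)$\Rightarrow$(ii), fix $c,c'>0$. By Proposition~\ref{prop:level-curve}\eqref{curve:ray-intercept}, the curves $\ell_c$ and $\ell_{c'}$ meet the ray $y=x$ at points $(a,a)$ and $(a',a')$. Set $k=a/a'$ and compare $\ell_c$ with $m(t):=k\ell_{c'}(t/k)$. Both are $C^1$ by Proposition~\ref{prop:level-curve}\eqref{curve:diff}, and both pass through $(a,a)$. Using (iii) together with Proposition~\ref{prop:level-curve}\eqref{curve:price}, each solves the ODE $y'=-p(u,t,y)$. The function $p$ is continuous, but it need not be Lipschitz, so uniqueness of solutions is the main obstacle.

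To get around it, I will use the fact that both curves are level curves of $u$. Scaling shows that $m^g$ is a level curve of the function $\tilde u(\mathbf r):=u(\mathbf r/k)$. Along $m$, the derivative $\frac{d}{dt}u(t,m(t))=\partial_y u\,(m'+p(u,t,m))$ vanishes, because $m'(t)=-p(u,t/k,m(t)/k)=-p(u,t,m(t))$ by (iii). So $u$ is constant along $m$. Since $m$ passes through $(a,a)\in s_{u,c}$, it follows that $m^g=\ell_c^g$, i.e. $\ell^g_c=k\ell^g_{c'}$. The strictly increasing property gives $k>1\iff a>a'\iff c>c'$.

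For (ii)$\Rightarrow$(i), define $v(\mathbf r)$ as the $k$ such that $\mathbf r\in k\,s_{u,1}$, and set $v=0$ on the axes. The hypothesis guarantees that every positive level curve is of the form $k\,s_{u,1}$ with $k$ strictly increasing in $c$. Moreover, each point of $\R_{++}^2$ lies on exactly one level curve, and distinct dilations of $s_{u,1}$ are disjoint because of the ray-intercept uniqueness. Therefore $v$ is well defined, a strictly increasing function of $u$, and homogeneous of degree one.

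For (a), I take $x_1\le x_2$ and $y_1\ge y_2$ with at least one inequality strict. Then $y_1/x_1>y_2/x_2$, so (a) follows from (b). For (b), (iii) gives $p(u,x,y)=p(u,1,y/x)=:g(y/x)$. The map $g$ is strictly increasing because, for a fixed $x=1$, increasing $y$ moves the point to a higher level curve. Its ray intersection then has a steeper slope: by (ii), the higher curve is a dilation, so at the point $(1,y)$ we compare with points on the same level curve. Concretely, take $(1,y)$ and $(1,y')$ with $y'>y$. The point $(1,y')$ equals $k(1/k,y'/k)$, and $(1/k,y'/k)$ lies on the lower curve through $(1,y)$ with $1/k<1$. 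Strict convexity, Proposition~\ref{prop:level-curve}\eqref{curve:cx}, then gives a larger $-\ell'$ at $1/k$ than at $1$, so $g(y')>g(y)$.
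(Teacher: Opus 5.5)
Your proposal is correct and follows essentially the paper's route: the same vanishing-derivative argument along the dilated curve $k\ell_{c'}(\cdot/k)$ for (iii)$\Rightarrow$(ii), the same ray-intercept normalization defining $v$ for (ii)$\Rightarrow$(i), and strict convexity of a single level curve combined with scale invariance of prices for (a) and (b). The differences are organizational. You close the cycle with a direct (i)$\Rightarrow$(iii) instead of passing through (ii). You prove (b) first, for two points on the line $x=1$, and deduce (a) from it, whereas the paper proves (a) directly and reads off (b). You also check the clause $k>1\iff c>c'$ explicitly, which the paper's (iii)$\Rightarrow$(ii) leaves implicit.
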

In such cases, we say that the utility function $u$, and also any utility function equivalent to $u$, represents a \textit{homothetic} preference. Economically, this means that scaling both holdings by the same factor does not change the marginal rate of substitution. More specifically, willingness to exchange $X$ for $Y$ depends only on the portfolio ratio $y/x$, not on the scale of the portfolio. This scale invariance is the key aggregation property used below.
\begin{example}
The following are standard homothetic preferences; the corresponding
functions $v$ and $g$ are as in
Proposition~\ref{prop:homothetic-preference}~\eqref{homo:2}.
\begin{enumerate}[(i)]
\item For $u(x,y)=\log x+\log y$, an equivalent representation satisfying
Assumptions~\eqref{as:1}--\eqref{as:4} is
$\widetilde u(x,y)=(xy)^{1/3}$. Moreover,
\[
v(x,y)=(xy)^{1/2},\qquad p(u,x,y)=\frac{y}{x},
\]
so $g(y/x)=y/x$.

\item For $u(x,y)=x^\alpha y^\beta$, where $\alpha,\beta>0$ and
$\alpha+\beta<1$,
\[
v(x,y)=x^{\alpha/(\alpha+\beta)}y^{\beta/(\alpha+\beta)},
\qquad
p(u,x,y)=\frac{\alpha}{\beta}\frac{y}{x},
\]
so $g(y/x)=(\alpha/\beta)(y/x)$.

\item For $u(x,y)=\bigl(xy/(x+y)\bigr)^\gamma$, where
$\gamma\in(0,1)$,
\[
p(u,x,y)=\left(\frac{y}{x}\right)^2,
\]
so $g(y/x)=(y/x)^2$. This utility satisfies Assumption~\eqref{as:1}
after continuous extension to the axes, but fails Assumption~\eqref{as:2}.
\end{enumerate}
\end{example}

In the numerical examples we sometimes write
\(u(x,y)=\log x+\log y\) because it makes utility changes easy to report.
On \(\R_{++}^2\), this utility represents the same preferences and the same
level curves as \((xy)^{1/3}\), which satisfies
Assumptions~\eqref{as:1}--\eqref{as:4} after continuous extension to
\(\R_+^2\). Whenever those assumptions are invoked for a logarithmic
constant-product example, the equivalent representation \((xy)^{1/3}\) is
understood. Numerical utility differences are reported under the logarithmic
normalization and are not representation-invariant welfare magnitudes; the
preference ordering of the reported outcomes is the economically relevant
conclusion.

Because liquidity provision is not modeled, only the set $\varphi^{-1}(\varphi(\mathbf r_0)) \subseteq \R_{+}^2$ is relevant. As shown by \citet{ACDEK2023}, with the invariant level $c$ fixed, any CFMM with trading function satisfying Assumption \eqref{as:3} can be represented by a homogeneous trading function $u$, by taking all positive two-dimensional dilations of the fixed liquidity curve. In terms of level-curve graphs, this means
\[
\mathcal L_u^+=\{k\ell^g_{\varphi,c}: k>0\}.
\]
Equivalently, in graphing-function notation,
\[
\Lambda_u=\left\{x\mapsto k\ell_{\varphi,c}\!\left(\frac{x}{k}\right): k>0\right\}.
\]

\section{Static equilibria}\label{sec:eq0}

We consider a closed market in which trading takes place exclusively through a CFMM mechanism. There is no external market where the assets can be exchanged, and agents can only interact with the CFMM, which is equipped with a trading function and a given inventory. Under this setting, we interpret $p(\varphi,x,y)$ as the CFMM price. The presence of an external market would allow agents to trade against outside prices, which is a separate problem that we do not consider in this paper.

A central feature of this environment is that prices are endogenous to the trading mechanism. Since a trade directly changes the CFMM inventory, it also changes the slope of the liquidity curve and hence the subsequent CFMM price. Therefore, it is no longer appropriate to model agents as facing an exogenously fixed price vector in the same way as in the classical Walrasian framework. Nevertheless, we are still interested in an equilibrium notion under which no agent has an incentive to trade further against the CFMM. This leads naturally to a unilateral no-trade notion of market equilibrium.

In CFMM-based markets, liquidity providers effectively serve as market makers. Unlike traditional market makers, however, they do not actively set prices; instead, prices are determined mechanically by the trading function and the CFMM inventory. Their primary source of income comes from trading fees rather than directional bets on price movements. By contrast, traders take the CFMM mechanism as given when choosing their trades, but their trades may still have price impact because they alter the CFMM inventory. The price impact in CFMMs in the absence of external markets has been studied in detail by \citet{BF24}.

\subsection{Market states}

Consider a market with two assets, $X$ and $Y$. The fee-free CFMM has
trading function $\varphi$ and inventory
$\mathbf r_0=(x_0,y_0)\in\R_+^2$. Two agents have utilities
$u_i:\R_+^2\to\R$ and holdings
$\mathbf r_i=(x_i,y_i)\in\R_+^2$, $i\in\{1,2\}$. The fee-free assumption is
a benchmark used to isolate the inventory mechanism.

A market state is a vector 
\[
\mathbf R=(\mathbf r_0,\mathbf r_1,\mathbf r_2)\in(\R_+^2)^3.
\]
It is \emph{interior} if $\mathbf R\in(\R_{++}^2)^3$. At an interior
state, write $p_0=p(\varphi,\mathbf r_0)$ and
$p_i=p(u_i,\mathbf r_i)$ for $i=1,2$. Let
$\mathbf R^{\mathsf{in}}=(\mathbf r_0^{\mathsf{in}},
\mathbf r_1^{\mathsf{in}},\mathbf r_2^{\mathsf{in}})$ denote the initial
state.

\begin{definition}
Given $\mathbf R^{\mathsf{in}}$, a market state $\mathbf R$ is
\emph{feasible} if
\begin{enumerate}[(i)]
\item $\mathbf r_0+\mathbf r_1+\mathbf r_2
=\mathbf r_0^{\mathsf{in}}+\mathbf r_1^{\mathsf{in}}
+\mathbf r_2^{\mathsf{in}}$;
\item $\varphi(\mathbf r_0)=\varphi(\mathbf r_0^{\mathsf{in}})$.
\end{enumerate}
Write $\mathcal F(\mathbf R^{\mathsf{in}})$ for the set of feasible market
states.
\end{definition}
Condition (i) is physical conservation of the two assets in the closed system, while condition (ii) is conservation of the CFMM invariant. This differs from feasibility in the classical pure exchange economy: the total holdings of the agents alone may change, because the CFMM is an inventory-bearing participant rather than a passive price vector.

On $\mathcal F(\mathbf R^{\mathsf{in}})$, define the preorder
$\mathbf R\precsim\mathbf R'$ by
\[
u_i(\mathbf r_i)\leq u_i(\mathbf r_i'),\qquad i=1,2.
\]
Write $\mathbf R\prec\mathbf R'$ when at least one inequality is strict.
The feasible-improvement set is
\[
\mathcal U(\mathbf R^{\mathsf{in}})
:=\{\mathbf R\in\mathcal F(\mathbf R^{\mathsf{in}}):
\mathbf R^{\mathsf{in}}\precsim\mathbf R\}.
\]

\subsection{Motivating example}
We present a concrete example within our setting to illustrate how market states evolve. The insights from this example will be formalized and generalized later.

\begin{example}\label{ex:2}
Suppose that $\varphi = u_1 = u_2$ is represented under the logarithmic
normalization by
\[
\varphi(x,y)=\log x+\log y.
\]
The initial inventory of the CFMM is $\mathbf r_0^0 = (100,100)$, and the agents' initial holdings are $\mathbf r_1^0 = (9,1)$ for agent 1 and $\mathbf r_2^0 = (1,9)$ for agent 2.
The initial market state is therefore
\[
\mathbf R^0 = \big((100,100),(9,1),(1,9)\big).
\]

We consider a trading sequence in which the two agents take turns trading with the CFMM. At each step, the active agent chooses a trade that maximizes their utility given the current CFMM inventory. At each step $t\in \N$,
the market state is denoted by $\mathbf{R}^t=(\mathbf{r}_0^t,\mathbf{r}_1^t,\mathbf{r}_2^t) $
and the indifference price of agent $i$ is denoted by
$p_i^t=p(u_i,\mathbf{r}^t_i)$ for $i=0,1,2$, where $u_0=\varphi$.

\begin{table}[ht]
\centering
\begin{tabular}{c c c c}
\toprule
$t$ & $\mathbf r^t_0$ & $\mathbf r^t_1$ & $\mathbf r^t_2$ \\
\midrule
0 & $(100, 100)$ & $(9, 1)$ & $(1, 9)$ \\
1 & $(103.885, 96.260)$ & $(5.115, 4.740)$ & $(1, 9)$ \\
2 & $(99.822, 100.179)$ & $(5.115, 4.740)$ & $(5.063, 5.082)$ \\
3 & $(100.009, 99.991)$ & $(4.928, 4.927)$ & $(5.063, 5.082)$ \\
4 & $(100.000, 100.000)$ & $(4.928, 4.927)$ & $(5.072, 5.073)$ \\
\bottomrule
\end{tabular}
\caption{The trading sequence that alternates utility-maximizing trades}
\label{tab:example-1}
\end{table}

The resulting sequence of market states is summarized in Table~\ref{tab:example-1}.
Theorem~\ref{thm:max-reach-limit} below proves that the sequence has a limit, which we denote by $\mathbf R^\infty$. Numerical computation gives, with coordinates rounded to three decimal places,
\begin{equation} \label{eq:conv}
 \mathbf R^\infty\approx\bigl((100,100),(4.928,4.928),(5.072,5.072)\bigr).
\end{equation}
This implies $\lim_{t\rightarrow\infty} p_i^t =1$ for $i\in\{0,1,2\}$.
As recalled in Section~\ref{sub:ex}, the pure exchange economy with the same agents, preferences, and initial holdings admits a unique Walrasian equilibrium given by
\[
\mathbf r_1^* = \mathbf r_2^* = (5,5),
\qquad
\mathbf q^* = (0.5,0.5).
\]
Hence, the limiting relative price agrees with the Walrasian equilibrium price, but the limiting allocation of the agents differs from the Walrasian equilibrium allocation.
Indeed, the allocation $\mathbf r_1^*=\mathbf r_2^*=(5,5)$ can be reached by a trading sequence if the two agents cooperate. The trading sequence is given in Table \ref{tab:market_states}.
\begin{table}[htbp]
\centering
\begin{tabular}{c ccc}
\toprule
$t$ & $\mathbf r^t_0$ & $\mathbf r^t_1$ & $\mathbf r^t_2$ \\
\midrule
0 & $(100, 100)$
 & $(9, 1)$ & $(1, 9)$\\
1 & $(98.020, 102.020)$
 & $(9, 1)$ & $(2.980, 6.980)$\\
2 & $(102.020, 98.020)$
 & $(5, 5)$ & $(2.980, 6.980)$ \\
3 & $(100, 100)$
 & $(5, 5)$
 & $(5, 5)$ \\
\bottomrule
\end{tabular}
\caption{The trading sequence that reaches the Walrasian equilibrium}
\label{tab:market_states}
\end{table}
\end{example}

This example isolates endogenous price impact. The sequential best-response path converges to a state in which the CFMM price and both agents' marginal rates of substitution agree, but different trading sequences can produce different terminal allocations at the same terminal price. Table~\ref{tab:example-1} also suggests a first-mover disadvantage: the first trade changes the terms available to the second trader. These observations are formalized below.

\subsection{Equilibria}\label{sec:eq}
In this subsection, we characterize market equilibria under the discrete-time CFMM mechanism and examine whether these equilibria are efficient.

A trade $\mathbf h\in \R^2$ is called \emph{valid} for a market state $\mathbf R = (\mathbf r_0, \mathbf r_1, \mathbf r_2)$ and agent $i\in \{1, 2\}$ if
\[
\mathbf r_0+\mathbf h \in \R^2_{+}, \quad \varphi(\mathbf r_0 + \mathbf h) = \varphi(\mathbf r_0)\quad \text{and} \quad \mathbf r_i - \mathbf h \in \R^2_{+}.
\]
It is called \emph{utility-improving} if, in addition,
\(u_i(\mathbf r_i-\mathbf h)\geq u_i(\mathbf r_i)\).

At a fixed market state, regard the zero trade as each agent's status quo
action. A unilateral deviation by agent \(i\) is any valid trade against the
current CFMM inventory while the other agent does not trade. The equilibrium
notion below therefore asks whether the zero-trade profile admits a
profitable unilateral deviation; in our setting, we do not allow 
multiple simultaneous nonzero orders.

\begin{definition}\label{def:4}
A market state $\mathbf R=(\mathbf{r}_0,\mathbf{r}_1,\mathbf{r}_2)$ is called a \emph{unilateral no-trade equilibrium}, or simply a \emph{unilateral equilibrium}, if for each agent $i$, no valid trade by agent $i$ against the CFMM can strictly improve agent $i$'s utility.
If $\mathbf R \in \mathcal F(\mathbf R^{\mathsf{in}})$,
then we say it is a unilateral equilibrium for the initial market state $\mathbf R^{\mathsf{in}}$.

\end{definition}

The next proposition records the one-agent optimization problem that underlies the unilateral no-trade
condition. It shows that a single utility-maximizing trade against the CFMM is characterized
by equality between the CFMM price and the agent's indifference price.

\begin{proposition}\label{prop:single-trade}
Let $\varphi$ and $u$ satisfy Assumptions~\eqref{as:1}--\eqref{as:4}. Fix an interior CFMM
inventory $\mathbf r_0\in\mathbb R^2_{++}$ and an interior agent holding
$\mathbf r\in\mathbb R^2_{++}$.
\begin{enumerate}[(a)]
\item \label{trade:exist} The problem
\[
\max_{\mathbf h} u(\mathbf r-\mathbf h)
\quad\text{subject to } \mathbf h \text{ valid}
\]
has an interior solution $\mathbf h^*$. That is, $\mathbf r-\mathbf h^*$ and $\mathbf r_0 + \mathbf h^*$ are interior.

\item \label{trade:optimal-implies-price} A utility-maximizing trade $\mathbf h^*$ satisfies $\mathbf r_0 + \mathbf h^*, \mathbf r - \mathbf h^* \in \R^2_{++}$ and the price equality
\[
p(\varphi, \mathbf r_0 + \mathbf h^*) = p(u, \mathbf r - \mathbf h^*).
\]
\item \label{trade:unique} The utility-maximizing trade $\mathbf h^*$ is unique.
\item \label{trade:price-implies-optimal}If a valid trade $\mathbf h'$ satisfies $\mathbf r_0 + \mathbf h', \mathbf r - \mathbf h' \in \R^2_{++}$ and the price equality
\[
p(\varphi, \mathbf r_0 + \mathbf h') = p(u, \mathbf r - \mathbf h'),
\] then $\mathbf h'$ is the unique
utility-maximizing trade.

\item \label{trade:zero}The zero trade is the unique utility-maximizing trade if and only if
\[
p(\varphi,\mathbf r_0)=p(u,\mathbf r).
\]
\item \label{trade:sign} Let $\mathbf h^*$ be the unique utility-maximizing trade. If $p(\varphi,\mathbf r_0)>p(u,\mathbf r)$, then
$\operatorname{sgn}(\mathbf h^*)=(1,-1)$; if
$p(\varphi,\mathbf r_0)<p(u,\mathbf r)$, then
$\operatorname{sgn}(\mathbf h^*)=(-1,1)$.
\end{enumerate}
\end{proposition}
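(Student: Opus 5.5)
The plan is to reduce the trade problem to a one-dimensional optimization along the liquidity curve. By Proposition~\ref{prop:level-curve}, with $c=\varphi(\mathbf r_0)>0$ (positive because $\mathbf r_0$ is interior and $\varphi$ is strictly increasing with $\varphi=0$ on the axes), the valid CFMM inventories with positive coordinates are exactly the points $(x,\ell(x))$ on the graph of $\ell=\ell_{\varphi,c}$. Here $\ell$ is strictly convex, $C^1$, strictly decreasing, and $\ell'(x)=-p(\varphi,x,\ell(x))$. Boundary inventories cannot satisfy $\varphi=c>0$, so every valid trade has the form $\mathbf h(x)=(x-x_0,\ell(x)-\ell(x_0))$. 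The agent's resulting holding is $\mathbf r-\mathbf h(x)$. The admissible set is the $x>0$ for which this holding is nonnegative, namely $x\le x_0+x_r$ and $\ell(x)\le y_0+y_r$. This is a compact interval $[a,b]\subset\mathbb R_{++}$ containing $x_0$, with $a>0$ because $\ell\to\infty$ at $0^+$.

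Define $f(x)=u(x_0+x_r-x,\;y_0+y_r-\ell(x))$ on $[a,b]$. At the endpoints one coordinate of the holding is zero, so $f(a)=f(b)=0$ by Assumption~\eqref{as:1}. Meanwhile $f(x_0)=u(\mathbf r)>0$. Continuity then gives a maximizer, and every maximizer lies in $(a,b)$, where both the holding and the inventory are interior. This proves \eqref{trade:exist}. On $(a,b)$, $f$ is $C^1$ with
\[
f'(x)=\partial_y u\cdot\bigl(p(\varphi,x,\ell(x))-p(u,\mathbf r-\mathbf h(x))\bigr),
\]
up to the factor $-\partial_x u/p(u,\cdot)$ rewritten: directly, $f'(x)=-\partial_x u-\partial_y u\,\ell'(x)=\partial_y u\,[\,p(\varphi,\cdot)-p(u,\cdot)\,]$, and $\partial_y u>0$. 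The first-order condition at an interior maximizer gives \eqref{trade:optimal-implies-price}.

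The key step, and the main obstacle, is showing that the price gap $G(x)=p(\varphi,x,\ell(x))-p(u,\mathbf r-\mathbf h(x))$ is strictly decreasing on $(a,b)$. The first term equals $-\ell'(x)$, which is strictly decreasing by the strict convexity of $\ell$. For the second term, as $x$ increases the agent's holding $(x_0+x_r-x,\,y_0+y_r-\ell(x))$ has strictly smaller $X$ and strictly larger $Y$. Assumption~\eqref{as:4} for $u$, read in the reverse direction, then says $p(u,\cdot)$ strictly increases. Hence $G$ is strictly decreasing and has at most one zero. Since $\partial_y u>0$, $f$ is strictly increasing before that zero and strictly decreasing after it. This yields uniqueness \eqref{trade:unique}. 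It also yields \eqref{trade:price-implies-optimal}: any interior zero of $G$ is the global maximizer.

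Part \eqref{trade:zero} follows by applying \eqref{trade:optimal-implies-price} and \eqref{trade:price-implies-optimal} at $x=x_0$, where $\mathbf h=0$. For \eqref{trade:sign}, suppose $p(\varphi,\mathbf r_0)>p(u,\mathbf r)$, so $G(x_0)>0$. The zero $x^*$ of $G$ must then satisfy $x^*>x_0$, which means $h_x>0$ and $h_y=\ell(x^*)-\ell(x_0)<0$ because $\ell$ is strictly decreasing. The opposite case is symmetric. Only part \eqref{as:1}--\eqref{as:3} of $\varphi$ is needed, while Assumption~\eqref{as:4} on $u$ is essential for the monotonicity step.
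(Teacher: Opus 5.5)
Your proposal is correct and follows essentially the same route as the paper. You reduce to a one-dimensional problem along the liquidity curve, obtain an interior maximizer from compactness together with zero utility on the axes, and derive the price equality from $f'(x)=\partial_y u\,[p(\varphi,\cdot)-p(u,\cdot)]$; parts (c)--(f) then follow from strict monotonicity of the price gap, using strict convexity of $\ell$ and Assumption~\eqref{as:4} for $u$. The differences are cosmetic: you write the feasible set as the explicit interval $[a,b]$ with $f(a)=f(b)=0$ instead of the compact set $K$, and you spell out the sign of $f'$ on either side of the unique zero. You should also delete the stray phrase about the factor $-\partial_x u/p(u,\cdot)$.
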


\begin{proof}
Let $c=\varphi(\mathbf r_0)$ be the CFMM invariant level and let $\mathbf T=\mathbf r_0+\mathbf r=(T_x,T_y)$ be total inventory in the single-agent economy. It is convenient to parameterize a trade $\mathbf h$ by the post-trade CFMM inventory $\mathbf z=\mathbf r_0+\mathbf h$.

\proofpart{Part~\eqref{trade:exist}} The feasible post-trade CFMM inventories form $K:=\{\mathbf z\in\R_+^2:\varphi(\mathbf z)=c,\ 0\leq\mathbf z\leq\mathbf T\}$, where inequalities are coordinatewise. This set is nonempty because it contains $\mathbf r_0$, and it is compact as a closed subset of $[0,T_x]\times[0,T_y]$. Hence the continuous function $\mathbf z\mapsto u(\mathbf T-\mathbf z)$ attains a maximum on $K$.

Every maximizer is interior. The zero trade is valid, so the positive value $u(\mathbf r)$ is attainable. If $\mathbf T-\mathbf z$ lay on the boundary of $\R_+^2$, its utility would be zero by Assumption~\eqref{as:1}. Also, $\mathbf z$ cannot lie on the boundary because $\varphi(\mathbf z)=c>0$, whereas $\varphi$ is zero on the coordinate axes.

\proofpart{Part~\eqref{trade:optimal-implies-price}} Write the post-trade CFMM inventory as $\mathbf z=(x,\ell_{\varphi,c}(x))$. The corresponding agent holding is $\mathbf T-\mathbf z=(T_x-x,T_y-\ell_{\varphi,c}(x))$, so the problem reduces to maximizing $F(x):=u(T_x-x,T_y-\ell_{\varphi,c}(x))$ on the relevant interval. At an interior maximizer $x^*$, the chain rule gives
\[
F'(x)=-\frac{\partial u}{\partial x}(T_x-x,T_y-\ell_{\varphi,c}(x))
-\frac{\partial u}{\partial y}(T_x-x,T_y-\ell_{\varphi,c}(x))\ell'_{\varphi,c}(x).
\]
Using $\ell'_{\varphi,c}(x)=-p(\varphi,x,\ell_{\varphi,c}(x))$, this becomes
\[
F'(x)=\frac{\partial u}{\partial y}(T_x-x,T_y-\ell_{\varphi,c}(x))
\left[p(\varphi,x,\ell_{\varphi,c}(x))-p(u,T_x-x,T_y-\ell_{\varphi,c}(x))\right].
\]
Since $\partial u/\partial y>0$, the condition $F'(x^*)=0$ implies $p(\varphi,x^*,\ell_{\varphi,c}(x^*))=p(u,T_x-x^*,T_y-\ell_{\varphi,c}(x^*))$. Equivalently, $p(\varphi,\mathbf r_0+\mathbf h^*)=p(u,\mathbf r-\mathbf h^*)$.

\proofpart{Parts~\eqref{trade:unique}--\eqref{trade:sign}} As $x$ increases along the liquidity curve, the CFMM holds more $X$ and less $Y$, so $p(\varphi,x,\ell_{\varphi,c}(x))$ is strictly decreasing. The agent then holds less $X$ and more $Y$, so Assumption~\eqref{as:4} makes $p(u,T_x-x,T_y-\ell_{\varphi,c}(x))$ strictly increasing. Hence $D(x):=p(\varphi,x,\ell_{\varphi,c}(x))-p(u,T_x-x,T_y-\ell_{\varphi,c}(x))$ is strictly decreasing. The equation $D(x)=0$ has at most one solution. Every utility maximizer is interior and satisfies this equation, so the utility-maximizing trade is unique; conversely, any interior valid trade satisfying the price equality is that unique maximizer.

The zero trade corresponds to $x=x_0$ when $\mathbf r_0=(x_0,y_0)$. It is the unique maximizer exactly when $D(x_0)=p(\varphi,\mathbf r_0)-p(u,\mathbf r)=0$. If $p(\varphi,\mathbf r_0)>p(u,\mathbf r)$, then $D(x_0)>0$, and strict decrease of $D$ implies $x^*>x_0$. Thus $h_x^*=x^*-x_0>0$ and $h_y^*=\ell_{\varphi,c}(x^*)-\ell_{\varphi,c}(x_0)<0$, so $\operatorname{sgn}(\mathbf h^*)=(1,-1)$. The opposite price inequality gives $\operatorname{sgn}(\mathbf h^*)=(-1,1)$.
\end{proof}

The unique trade in Proposition~\ref{prop:single-trade} is called \emph{utility-maximizing}. Part~\eqref{trade:sign} has a direct economic interpretation. If the CFMM price of $X$ exceeds the agent's marginal willingness to pay, $X$ is relatively expensive at the pool and the agent optimally sells $X$ to the CFMM; if the CFMM price is lower, the agent buys $X$. Along either best response, the CFMM price and the agent's indifference price move toward one another until they coincide. The next theorem gives an equivalent characterization of the unilateral equilibrium as stated in Definition~\ref{def:4}.

\begin{theorem}\label{thm:1}
Let $\varphi,u_1,u_2$ satisfy Assumptions~\eqref{as:1}--\eqref{as:4}.
An interior market state
$\mathbf R=(\mathbf r_0,\mathbf r_1,\mathbf r_2)$ is a unilateral equilibrium if
and only if
\[
p(\varphi,\mathbf r_0)=p(u_1,\mathbf r_1)=p(u_2,\mathbf r_2).
\]
\end{theorem}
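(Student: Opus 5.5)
The plan is to reduce the theorem to the single-agent result, Proposition~\ref{prop:single-trade}, applied separately to each agent. By Definition~\ref{def:4}, $\mathbf R$ is a unilateral equilibrium exactly when, for each $i\in\{1,2\}$ separately, no valid trade $\mathbf h$ for agent $i$ against the current inventory $\mathbf r_0$ gives $u_i(\mathbf r_i-\mathbf h)>u_i(\mathbf r_i)$. Validity for agent $i$ involves only $\mathbf r_0$, $\mathbf r_i$ and $\varphi$; the other agent's holding plays no role. So, for each fixed $i$, the condition is a statement about the one-agent problem with CFMM inventory $\mathbf r_0$ and agent holding $\mathbf r_i$. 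Since the state is interior, $\mathbf r_0,\mathbf r_i\in\R_{++}^2$, and Proposition~\ref{prop:single-trade} applies with $u=u_i$.

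The key step is the equivalence, for each $i$:
\[
\text{no valid trade strictly improves } u_i
\iff
\mathbf h=\mathbf 0 \text{ is the unique utility-maximizing trade}.
\]
\emph{Forward direction.} If no valid trade strictly improves $u_i$, the zero trade (which is valid) attains the maximum of $u_i(\mathbf r_i-\mathbf h)$ over valid $\mathbf h$. By part~\eqref{trade:unique} the maximizer is unique, so it is $\mathbf 0$.

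\emph{Reverse direction.} If $\mathbf 0$ is the maximizer, then by definition no valid trade gives strictly higher utility.

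Combining this with part~\eqref{trade:zero}, the no-improvement condition for agent $i$ is equivalent to $p(\varphi,\mathbf r_0)=p(u_i,\mathbf r_i)$. Imposing it for both $i=1,2$ gives $p(\varphi,\mathbf r_0)=p(u_1,\mathbf r_1)=p(u_2,\mathbf r_2)$, and each step reverses.

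The only point needing care is that Proposition~\ref{prop:single-trade}\eqref{trade:exist} guarantees a maximizer exists, so "no strictly improving trade" really identifies $\mathbf 0$ as a maximizer rather than holding vacuously. Existence comes from compactness of the feasible set $K$ in that proof. The prices are defined because the state is interior and the partial derivatives are nonzero there by strict monotonicity and differentiability. I do not expect a genuine obstacle; the argument is bookkeeping around Proposition~\ref{prop:single-trade}.
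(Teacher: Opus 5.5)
Your proposal is correct and is essentially the paper's proof: you apply Proposition~\ref{prop:single-trade} to each agent separately, using uniqueness (part~\eqref{trade:unique}) and the zero-trade criterion (part~\eqref{trade:zero}) to identify ``no strictly improving valid trade'' with $p(\varphi,\mathbf r_0)=p(u_i,\mathbf r_i)$. Two minor side points: the appeal to part~\eqref{trade:exist} is unnecessary, because if nothing beats the zero trade then the zero trade is already a maximizer; and the partial derivatives are nonzero because of concavity together with strict monotonicity, not because of monotonicity and differentiability alone.
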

\begin{proof}
If $p(\varphi,\mathbf r_0)=p(u_1,\mathbf r_1)=p(u_2,\mathbf r_2)$, Proposition~\ref{prop:single-trade} says that both agents' unique utility-maximizing trades are the zero trade, so neither agent has a strictly improving valid trade. Conversely, if no valid trade strictly improves either agent's utility, each agent's utility-maximizing trade must be zero. Proposition~\ref{prop:single-trade} then gives the same price equality.
\end{proof}

Theorem~\ref{thm:1} shows that, at a unilateral equilibrium, all agents share the same indifference price, which coincides with the CFMM price.
The economic content of this equality is local rather than Walrasian. It says that each agent's indifference curve is tangent to the feasible frontier induced by the CFMM liquidity curve, so no single agent can improve by moving the pool along that curve. It does not say that agents face a common linear budget set or that their aggregate holdings clear independently of the CFMM inventory. This motivates the following definition.

\begin{definition}\label{def:5}
For an interior unilateral equilibrium $\mathbf R\in\mathcal F(\mathbf R^{\mathsf{in}})$, the common value of the CFMM price $p_0$ and the agents' indifference prices $p_1, p_2$ is called the \emph{unilateral equilibrium price}.
\end{definition}

Suppose that $\mathbf R^*,\, \tilde{\mathbf R}^*\in\mathcal F(\mathbf R^{\mathsf{in}})$ are both interior unilateral equilibria.
A natural question is whether their unilateral equilibrium prices necessarily coincide, i.e., whether $p_0^* = \tilde{p}_0^*$.
The following example shows that this is not the case.

\begin{example}\label{ex:3}
Let
\[
 \varphi(x,y)=(xy)^{1/4},\qquad
 u_1(x,y)=x^{1/8}y^{1/2},\qquad
 u_2(x,y)=x^{1/4}y^{1/4}.
\]
All functions satisfy Assumptions~\eqref{as:1}--\eqref{as:4}.
Then
\[
 p(\varphi,x,y)=\frac{y}{x},\qquad
 p(u_1,x,y)=\frac{1}{4}\frac{y}{x},\qquad
 p(u_2,x,y)=\frac{y}{x}.
\]
Consider the two market states
\[
\mathbf R^1=
\left(
(2,1/2),\;(5/3,5/3),\;(1/3,1/12)
\right)
\]
and
\[
\mathbf R^2=
\left(
(\sqrt 2,1/\sqrt 2),\;(1/6,1/3),\;(23/6-\sqrt 2,\;23/12-\sqrt 2/2)
\right).
\]
They have the same total holdings,
\[
 \mathbf r_0^1+\mathbf r_1^1+\mathbf r_2^1
 =\mathbf r_0^2+\mathbf r_1^2+\mathbf r_2^2
 =(4,9/4),
\]
and the same invariant level,
\[
 \varphi(\mathbf r_0^1)=\varphi(\mathbf r_0^2)=1.
\]
Moreover,
\[
 p(\varphi,\mathbf r_0^1)=p(u_1,\mathbf r_1^1)=p(u_2,\mathbf r_2^1)=\frac14,
\]
while
\[
 p(\varphi,\mathbf r_0^2)=p(u_1,\mathbf r_1^2)=p(u_2,\mathbf r_2^2)=\frac12.
\]
Thus the same feasible set contains two unilateral equilibria with different unilateral equilibrium prices. The reason is that \(u_1\) and \(u_2\) do not represent the same homothetic preference, even though each utility separately represents a homothetic preference.
\end{example}

\begin{proposition}\label{prop:unique-price}
 Let \(u_1\) and \(u_2\) represent the same homothetic preference, and
suppose that \(\varphi,u_1,u_2\) satisfy Assumptions~\eqref{as:1}--\eqref{as:4}. Then, for any initial market state
\(\mathbf R^{\mathsf{in}}\), any two interior unilateral equilibria
\(\mathbf R^1,\mathbf R^2\in\mathcal F(\mathbf R^{\mathsf{in}})\)
have the same unilateral equilibrium price.
\end{proposition}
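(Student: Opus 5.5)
The plan is to reduce the unilateral equilibrium condition of Theorem~\ref{thm:1} to a statement about rays through the origin, and then obtain a contradiction from the monotonicity of the CFMM price along the fixed liquidity curve. The function $\varphi$ itself need not be homothetic; homotheticity is used only for the traders.

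\textbf{Step 1 (common price function).} Since $u_1$ and $u_2$ represent the same preference, they have the same level curves. By Proposition~\ref{prop:level-curve}\eqref{curve:price}, the price at any interior point equals minus the slope of the level curve through it. Hence $p(u_1,\cdot)=p(u_2,\cdot)$ on $\R_{++}^2$. By Proposition~\ref{prop:homothetic-preference}\eqref{homo:2}, there is a single strictly increasing $g:\R_{++}\to\R_{++}$ with $p(u_i,x,y)=g(y/x)$ for $i=1,2$.

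\textbf{Step 2 (equilibrium holdings on a common ray).} Let $\mathbf R^j=(\mathbf r_0^j,\mathbf r_1^j,\mathbf r_2^j)$, $j=1,2$, be interior unilateral equilibria with prices $q^1,q^2$. By Theorem~\ref{thm:1}, $g(y_i^j/x_i^j)=q^j$ for $i=1,2$. Since $g$ is injective, both agents' holdings in state $j$ lie on the ray $y=k^jx$, where $k^j:=g^{-1}(q^j)$. The aggregate agent holding $\mathbf A^j:=\mathbf r_1^j+\mathbf r_2^j=\mathbf T-\mathbf r_0^j$, with $\mathbf T$ the conserved total, is a sum of two vectors on this ray. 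So $\mathbf A^j$ lies on the same ray, has strictly positive coordinates, and satisfies $A^j_y/A^j_x=k^j$. Because $g$ is strictly increasing, $q^1<q^2$ if and only if $k^1<k^2$.

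\textbf{Step 3 (contradiction via the liquidity curve).} Suppose, for contradiction, that $q^1<q^2$. Both $\mathbf r_0^1$ and $\mathbf r_0^2$ lie on the liquidity curve $\ell_{\varphi,c}^g$ with $c=\varphi(\mathbf r_0^{\mathsf{in}})$. By Proposition~\ref{prop:level-curve}\eqref{curve:cx} and~\eqref{curve:price}, $x\mapsto p(\varphi,x,\ell_{\varphi,c}(x))$ is strictly decreasing. Hence $q^1<q^2$ forces $x_0^1>x_0^2$, and then $y_0^1<y_0^2$ because $\ell_{\varphi,c}$ is strictly decreasing. Subtracting from $\mathbf T$ gives $A_x^1<A_x^2$ and $A_y^1>A_y^2$, all coordinates positive. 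Therefore
\[
k^1=A_y^1/A_x^1>A_y^2/A_x^2=k^2,
\]
which contradicts $k^1<k^2$ from Step~2. The case $q^1>q^2$ is symmetric, so $q^1=q^2$.

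The only delicate point is Step 1. It requires checking that two equivalent representations give identical marginal rates of substitution. This follows from the level-curve slope formula rather than from differentiating $F\circ u$, since $F$ need not be differentiable. The remaining steps are elementary monotonicity arguments.
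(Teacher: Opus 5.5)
Your proposal is correct and follows essentially the same route as the paper: a common strictly increasing ratio-price map $g$ puts both agents' equilibrium holdings, and hence their aggregate, on a ray whose slope increases with the price. Monotonicity of the CFMM price along the fixed liquidity curve, combined with conservation, then moves the aggregate ratio in the opposite direction. Your Step~1 also spells out why $p(u_1,\cdot)=p(u_2,\cdot)$, which the paper takes for granted.
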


\begin{proof}
Suppose, without loss of generality, that $p_0^1>p_0^2$. Proposition~\ref{prop:homothetic-preference}~\eqref{homo:2} gives a common strictly increasing ratio-price function $g$ satisfying $g(y/x)=p(u_1,x,y)=p(u_2,x,y)$. Therefore
\[
\frac{y_1^1+y_2^1}{x_1^1+x_2^1}=\frac{y_1^1}{x_1^1}=\frac{y_2^1}{x_2^1}
>\frac{y_1^2}{x_1^2}=\frac{y_2^2}{x_2^2}=\frac{y_1^2+y_2^2}{x_1^2+x_2^2}.
\]
On the other hand, $x_0^1<x_0^2$ and $y_0^1>y_0^2$. Conservation then gives $x_1^1+x_2^1>x_1^2+x_2^2$ and $y_1^1+y_2^1<y_1^2+y_2^2$, contradicting the displayed ratio inequality.
\end{proof}

The proposition separates price uniqueness from allocation uniqueness. Common homothetic preferences force every agent's equilibrium holdings to have the same asset ratio at a given common price, so aggregate conservation pins down that price even though the division of aggregate holdings across agents can remain nonunique. Example~\ref{ex:3} shows why a common ratio-price map, rather than homogeneity of each utility separately, is necessary.

As illustrated by Examples~\ref{ex:2} and~\ref{ex:3}, unilateral equilibria are generally not unique. Our primary interest lies in those unilateral equilibria that belong to the set $\mathcal{U}(\mathbf{R}^{\mathsf{in}})$, which encode individual rationality relative to the initial state. We therefore ask whether such equilibria are Pareto optimal.

\begin{definition}
A market state $\mathbf R^*$ is called \emph{Pareto optimal} if there does not exist any $\mathbf R \in \mathcal F(\mathbf R^*)$ such that
\[
u_i(\mathbf r_i) \ge u_i(\mathbf r^*_i) \quad \text{for all } i=1,2,
\]
with strict inequality for at least one agent.
\end{definition}
Pareto optimality is therefore defined relative to the total inventory and CFMM invariant of the state itself. If $\mathbf R^*\in\mathcal F(\mathbf R^{\mathsf{in}})$, then $\mathcal F(\mathbf R^*)=\mathcal F(\mathbf R^{\mathsf{in}})$. In the welfare results below, $\mathbf R^{\mathsf{in}}$ is interior and $\mathbf R^*\in\mathcal U(\mathbf R^{\mathsf{in}})$; such a state is automatically interior, because the CFMM remains on a positive invariant level and each agent's utility is at least her positive initial utility, whereas boundary holdings have utility zero.

Only the two traders' utilities enter this Pareto comparison. The CFMM stays on the same invariant level throughout $\mathcal F(\mathbf R^*)$, so if $\varphi$ is interpreted as the CFMM's inventory utility, it is constant across the comparison; fee revenue and liquidity-provider welfare are outside the present model.

We next work toward an analogue of the First Welfare Theorem.
\begin{lemma}
\label{lem:compact-feasible-improvement}
Let \(\varphi,u_1,u_2\) be continuous. For every initial market state $\mathbf R^{\mathsf{in}}$, the set
$\mathcal U(\mathbf R^{\mathsf{in}})$ is nonempty and compact.
\end{lemma}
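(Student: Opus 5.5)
Nonemptiness is immediate, because $\mathbf R^{\mathsf{in}}$ itself belongs to $\mathcal U(\mathbf R^{\mathsf{in}})$. It satisfies both feasibility conditions trivially, and $\mathbf R^{\mathsf{in}}\precsim\mathbf R^{\mathsf{in}}$ since the preorder is reflexive. For compactness we regard $\mathcal U(\mathbf R^{\mathsf{in}})$ as a subset of $(\R_+^2)^3\subseteq\R^6$ and use Heine--Borel, which reduces the claim to boundedness and closedness.

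\emph{Boundedness.} Write $\mathbf T=\mathbf r_0^{\mathsf{in}}+\mathbf r_1^{\mathsf{in}}+\mathbf r_2^{\mathsf{in}}$. Every feasible state has three nonnegative vectors summing to $\mathbf T$, so each $\mathbf r_j$ lies in the box $[0,T_x]\times[0,T_y]$. Hence $\mathcal F(\mathbf R^{\mathsf{in}})$, and therefore $\mathcal U(\mathbf R^{\mathsf{in}})$, is contained in a bounded box in $\R^6$.

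\emph{Closedness.} We write $\mathcal U(\mathbf R^{\mathsf{in}})$ as an intersection of preimages of closed sets under continuous maps:
\[
\mathcal U(\mathbf R^{\mathsf{in}})=(\R_+^2)^3\cap\{\mathbf r_0+\mathbf r_1+\mathbf r_2=\mathbf T\}\cap\{\varphi(\mathbf r_0)=\varphi(\mathbf r_0^{\mathsf{in}})\}\cap\bigcap_{i=1,2}\{u_i(\mathbf r_i)\geq u_i(\mathbf r_i^{\mathsf{in}})\}.
\]
The first set is closed. The second is the preimage of a point under a linear map. The third is the preimage of a point under the continuous map $\mathbf R\mapsto\varphi(\mathbf r_0)$. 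The last two are preimages of closed half-lines under the continuous maps $\mathbf R\mapsto u_i(\mathbf r_i)$. A finite intersection of closed sets is closed, so $\mathcal U(\mathbf R^{\mathsf{in}})$ is compact.

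The only step needing care is the domain of continuity. The lemma assumes $\varphi,u_1,u_2$ are continuous on all of $\R_+^2$, including the axes, and not merely on $\R_{++}^2$. This matters because limits of interior states may lie on the boundary, and the level-set and upper-contour sets must be closed in the full orthant $\R_+^2$ for the argument to go through. The lemma also does not require $\mathbf R^{\mathsf{in}}$ to be interior or the utilities to be real-valued beyond continuity, so no further assumptions are used.
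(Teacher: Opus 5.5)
Your proof is correct and follows essentially the same route as the paper: you bound every feasible state inside the box determined by total inventory, obtain closedness from continuity of $\varphi$ and the $u_i$, and get nonemptiness from $\mathbf R^{\mathsf{in}}\in\mathcal U(\mathbf R^{\mathsf{in}})$. The only cosmetic difference is that the paper first shows $\mathcal F(\mathbf R^{\mathsf{in}})$ is compact and then cuts out $\mathcal U(\mathbf R^{\mathsf{in}})$ as a closed subset, whereas you write the whole set as one intersection of closed sets.
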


The proof of Lemma~\ref{lem:compact-feasible-improvement} is given in
Appendix~\ref{app:compact-feasible-improvement}.

\begin{theorem}\label{thm:2}
Let \(\varphi,u_1,u_2\) be continuous. Given an initial market state $\mathbf R^{\mathsf{in}}$, the following
statements hold:
\begin{enumerate}[(i)]
\item there exists at least one Pareto optimal market state in
$\mathcal U(\mathbf R^{\mathsf{in}})$;
\item every Pareto optimal market state in
$\mathcal U(\mathbf R^{\mathsf{in}})$ is a unilateral equilibrium.
\end{enumerate}
\end{theorem}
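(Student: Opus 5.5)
For part (i), I would use Lemma~\ref{lem:compact-feasible-improvement}, which says $\mathcal U(\mathbf R^{\mathsf{in}})$ is nonempty and compact. The continuous function $\Phi(\mathbf R):=u_1(\mathbf r_1)+u_2(\mathbf r_2)$ then attains a maximum on this set at some $\mathbf R^*$. I claim $\mathbf R^*$ is Pareto optimal.

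Suppose some $\mathbf R\in\mathcal F(\mathbf R^*)$ weakly improves both agents, with at least one strict improvement. Since $\mathbf R^*\in\mathcal F(\mathbf R^{\mathsf{in}})$, we have $\mathcal F(\mathbf R^*)=\mathcal F(\mathbf R^{\mathsf{in}})$. Transitivity of $\precsim$ gives $\mathbf R^{\mathsf{in}}\precsim\mathbf R^*\precsim\mathbf R$, so $\mathbf R\in\mathcal U(\mathbf R^{\mathsf{in}})$. But then $\Phi(\mathbf R)>\Phi(\mathbf R^*)$, contradicting maximality. Only finiteness of utility values and continuity are used here, which matches the hypotheses.

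For part (ii), I would argue directly from Definition~\ref{def:4} rather than through Theorem~\ref{thm:1}, since only continuity is assumed. Let $\mathbf R^*\in\mathcal U(\mathbf R^{\mathsf{in}})$ be Pareto optimal, and suppose it is not a unilateral equilibrium. Then some agent, say agent $1$, has a valid trade $\mathbf h$ with $u_1(\mathbf r_1^*-\mathbf h)>u_1(\mathbf r_1^*)$. Define
\[
\mathbf R'=(\mathbf r_0^*+\mathbf h,\ \mathbf r_1^*-\mathbf h,\ \mathbf r_2^*).
\]
Validity ensures all three components lie in $\R_+^2$ and that $\varphi(\mathbf r_0^*+\mathbf h)=\varphi(\mathbf r_0^*)$. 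The total inventory is unchanged, so $\mathbf R'\in\mathcal F(\mathbf R^*)$. Agent $2$'s utility is unchanged and agent $1$'s strictly increases. This contradicts Pareto optimality, and the case of agent $2$ is symmetric.

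The main point to watch is bookkeeping rather than a genuine obstacle. In part (i), Pareto optimality is defined relative to $\mathcal F(\mathbf R^*)$, not relative to $\mathcal U$. I bridge this by the identity $\mathcal F(\mathbf R^*)=\mathcal F(\mathbf R^{\mathsf{in}})$, which holds because both the total inventory and the invariant level are preserved. I would also remark that part (ii) does not require interiority or differentiability. Combining it with Theorem~\ref{thm:1} under Assumptions~\eqref{as:1}--\eqref{as:4} then yields the price-equality characterization for interior initial states.
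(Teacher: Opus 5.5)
Your proof is correct and follows essentially the same route as the paper. Both use compactness of $\mathcal U(\mathbf R^{\mathsf{in}})$ from Lemma~\ref{lem:compact-feasible-improvement} and the identity $\mathcal F(\mathbf R^*)=\mathcal F(\mathbf R^{\mathsf{in}})$ to pass from ``undominated in $\mathcal U$'' to Pareto optimal, and both use the same one-trade contradiction for part~(ii). The only difference is that you maximize $u_1+u_2$, whereas the paper maximizes lexicographically (first $u_1$, then $u_2$ among the $u_1$-maximizers); both work because the utilities are continuous and real-valued.
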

\begin{proof}
\proofpart{Part~(i)} By Lemma~\ref{lem:compact-feasible-improvement},
$\mathcal U(\mathbf R^{\mathsf{in}})$ is nonempty and compact. Choose a
state $\mathbf R^1$ that maximizes agent~1's utility on this set, and then,
among all such maximizers, choose $\mathbf R^*$ to maximize agent~2's
utility. If another state in $\mathcal U(\mathbf R^{\mathsf{in}})$ weakly improved both agents and strictly improved
one, maximality of agent~1's utility would force equality for agent~1, while
the second-stage choice would rule out strict improvement for agent~2.
Thus $\mathbf R^*$ is undominated within $\mathcal U(\mathbf R^{\mathsf{in}})$.
Since $\mathbf R^*\in\mathcal U(\mathbf R^{\mathsf{in}})$, any state in
$\mathcal F(\mathbf R^*)=\mathcal F(\mathbf R^{\mathsf{in}})$ that weakly
improves both agents also belongs to $\mathcal U(\mathbf R^{\mathsf{in}})$.
Therefore $\mathbf R^*$ is Pareto optimal.

\proofpart{Part~(ii)} If a Pareto optimal state were not a unilateral equilibrium,
one agent would have a valid trade that strictly raises her utility while
leaving the other agent's holding unchanged. The resulting state would
remain in $\mathcal U(\mathbf R^{\mathsf{in}})$ and would strictly Pareto
dominate the original state, a contradiction.
\end{proof}

Moreover, under suitable conditions, every unilateral equilibrium in $\mathcal U(\mathbf R^{\mathsf{in}})$ is Pareto optimal.
Here $\mathcal U(\mathbf R^{\mathsf{in}})$ restricts which equilibria are regarded as individually rational; it does not restrict the Pareto comparison once the candidate state lies in $\mathcal U(\mathbf R^{\mathsf{in}})$, because any feasible state dominating such a candidate automatically belongs to the same feasible-improvement set.

\begin{proposition}\label{prop:unilateral-implies-Pareto}
 Let $\varphi, u_1, u_2$ satisfy Assumptions~\eqref{as:1}--\eqref{as:4}. For any interior initial state $\mathbf R^{\mathsf{in}}$, every unilateral equilibrium in $\mathcal U(\mathbf R^{\mathsf{in}})$ is Pareto optimal.
\end{proposition}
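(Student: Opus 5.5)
The plan is a supporting-hyperplane argument, the usual proof of the First Welfare Theorem, with the CFMM liquidity curve contributing a third inequality. Fix a unilateral equilibrium $\mathbf R^*=(\mathbf r_0^*,\mathbf r_1^*,\mathbf r_2^*)\in\mathcal U(\mathbf R^{\mathsf{in}})$.

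First, $\mathbf R^*$ is interior. This follows from the remark after the definition of Pareto optimality: the CFMM stays on a positive invariant level, and each agent's utility is at least her positive initial utility, whereas boundary holdings have utility zero. Theorem~\ref{thm:1} then gives a common price $p:=p(\varphi,\mathbf r_0^*)=p(u_1,\mathbf r_1^*)=p(u_2,\mathbf r_2^*)>0$. Set $\mathbf q:=(p,1)$. Suppose, for contradiction, that some $\mathbf R\in\mathcal F(\mathbf R^*)$ has $u_i(\mathbf r_i)\ge u_i(\mathbf r_i^*)$ for $i=1,2$, with strict inequality for some $i$. The goal is to show
\[
\mathbf q\cdot(\mathbf r_0+\mathbf r_1+\mathbf r_2)>\mathbf q\cdot(\mathbf r_0^*+\mathbf r_1^*+\mathbf r_2^*),
\]
which contradicts condition (i) of feasibility.

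\emph{Agents.} Each $u_i$ is concave on $\R_{++}^2$ by Assumption~\eqref{as:3}, and by continuity it is concave on $\R_+^2$. The gradient inequality at the interior point $\mathbf r_i^*$ therefore holds for all $\mathbf r_i\in\R_+^2$, including boundary holdings:
\[
u_i(\mathbf r_i)\le u_i(\mathbf r_i^*)+\nabla u_i(\mathbf r_i^*)\cdot(\mathbf r_i-\mathbf r_i^*)
= u_i(\mathbf r_i^*)+\partial_y u_i(\mathbf r_i^*)\,\mathbf q\cdot(\mathbf r_i-\mathbf r_i^*).
\]
Since $\partial_y u_i(\mathbf r_i^*)>0$, the weak inequality $u_i(\mathbf r_i)\ge u_i(\mathbf r_i^*)$ gives $\mathbf q\cdot\mathbf r_i\ge\mathbf q\cdot\mathbf r_i^*$. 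For the agent with strict improvement, the same display gives $\mathbf q\cdot\mathbf r_i>\mathbf q\cdot\mathbf r_i^*$.

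\emph{CFMM.} Since $\varphi(\mathbf r_0)=\varphi(\mathbf r_0^*)=c>0$ and $\varphi$ vanishes on the axes, both points lie on the graph of $\ell_{\varphi,c}$. This function is strictly convex and differentiable with $\ell_{\varphi,c}'(x_0^*)=-p$, by Proposition~\ref{prop:level-curve} parts \eqref{curve:cx} and \eqref{curve:price}. Hence
\[
\ell_{\varphi,c}(x_0)\ge \ell_{\varphi,c}(x_0^*)-p(x_0-x_0^*),
\]
which is exactly $\mathbf q\cdot\mathbf r_0\ge\mathbf q\cdot\mathbf r_0^*$. The inequality is strict unless $\mathbf r_0=\mathbf r_0^*$, though strictness is not needed here.

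Summing the three inequalities, with one strict, yields the desired strict inequality and hence the contradiction. I expect the main obstacle to be the boundary case: a dominating feasible state could a priori put an agent on the boundary of $\R_+^2$. This is handled by extending concavity to the closure and using the gradient inequality at the interior base point $\mathbf r_i^*$, so no interiority of $\mathbf R$ is required. Assumption~\eqref{as:4} is not needed beyond its use in Theorem~\ref{thm:1}.
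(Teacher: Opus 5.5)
Your proof is correct, and it follows a genuinely different route from the paper.

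\textbf{The paper's route.} The paper argues by contradiction through Theorem~\ref{thm:2}. It picks a Pareto-optimal state $\mathbf R^2\in\mathcal U(\mathbf R^1)$ that dominates the given equilibrium $\mathbf R^1$. It notes that $\mathbf R^2$ is itself a unilateral equilibrium, and then compares the two equilibria by a case analysis on whether the CFMM inventories coincide. Assumption~\eqref{as:4} is used repeatedly to order first the prices and then the holdings: agent~1 must hold strictly more $X$, so agent~2 holds strictly less $X$, and then either monotonicity or \eqref{as:4} rules out agent~2 being weakly better off.

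\textbf{Your route.} You give the direct supporting-line proof of the First Welfare Theorem. The common price vector $(p,1)$ supports each agent's upper contour set at $\mathbf r_i^*$. It also supports the liquidity curve from below at $\mathbf r_0^*$, because $\ell_{\varphi,c}$ is convex. Summing the three inequalities contradicts conservation of total inventory.

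\textbf{What your route buys.} It needs neither the compactness-based existence result (Theorem~\ref{thm:2}) nor the case split. It makes the welfare-theorem analogy precise by treating the pool as a third participant whose admissible set is supported by the same price line. It extends essentially verbatim to $n$ agents, whereas the paper's coordinate-ordering step ("agent~1 gains $X$, so agent~2 loses $X$") is tied to two agents. It uses only convexity of the agents' upper contour sets and of the liquidity curve, plus differentiability at the equilibrium, rather than the global condition~\eqref{as:4}. Your gradient inequality can also be rephrased ordinally, via tangent lines to the convex level curves of Proposition~\ref{prop:level-curve}.

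You are right that \eqref{as:4} plays no role. In fact, the direction of Theorem~\ref{thm:1} you invoke reduces to the first-order condition in Proposition~\ref{prop:single-trade}\eqref{trade:optimal-implies-price}, which does not use \eqref{as:4} either.

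\textbf{What the paper's route buys.} It is phrased entirely through ordinal comparisons of marginal rates of substitution and holdings. This matches how \eqref{as:4} is used elsewhere in the paper, for uniqueness of best responses and for price bracketing, but it is the more specialized argument.

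\textbf{A small simplification.} Your boundary concern is moot. Any dominating holding satisfies $u_i(\mathbf r_i)\ge u_i(\mathbf r_i^*)\ge u_i(\mathbf r_i^{\mathsf{in}})>0$, so it is automatically interior.
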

\begin{proof}
Suppose, for a contradiction, that $\mathbf R^1\in\mathcal U(\mathbf R^{\mathsf{in}})$ is a unilateral equilibrium that is not Pareto optimal. By Theorem~\ref{thm:2}, choose a Pareto-optimal state $\mathbf R^2\in\mathcal U(\mathbf R^1)\subseteq\mathcal U(\mathbf R^{\mathsf{in}})$; it is also a unilateral equilibrium. Without loss of generality, agent~1 is strictly better off at $\mathbf R^2$. Since $\mathbf R^{\mathsf{in}}$ is interior and $\mathbf R^1,\mathbf R^2\in\mathcal U(\mathbf R^{\mathsf{in}})$, both states are interior and satisfy
\begin{equation}\label{eq:stuff}
\mathbf r_0^1+\mathbf r_1^1+\mathbf r_2^1
=\mathbf r_0^{\mathsf{in}}+\mathbf r_1^{\mathsf{in}}+\mathbf r_2^{\mathsf{in}}
=\mathbf r_0^2+\mathbf r_1^2+\mathbf r_2^2.
\end{equation}

Suppose first that $\mathbf r_0^1=\mathbf r_0^2$. The two unilateral equilibria then have the same common price. Because $p(u_1,\mathbf r_1^1)=p(u_1,\mathbf r_1^2)$, Assumption~\eqref{as:4} rules out one of $\mathbf r_1^1,\mathbf r_1^2$ having weakly less $X$ and weakly more $Y$ than the other, with one inequality strict. The strict utility improvement also rules out a coordinatewise decrease. Hence $\mathbf r_1^2>\mathbf r_1^1$ coordinatewise. Conservation with the same CFMM inventory gives $\mathbf r_2^2<\mathbf r_2^1$, so strict monotonicity implies $u_2(\mathbf r_2^2)<u_2(\mathbf r_2^1)$, a contradiction. Therefore $\mathbf r_0^1\neq\mathbf r_0^2$.

Let $c=\varphi(\mathbf r_0^{\mathsf{in}})$. Both CFMM inventories lie on the level $c$. Without loss of generality, suppose $x_0^1<x_0^2$. Strict monotonicity of $\varphi$ then gives $y_0^1>y_0^2$, and Assumption~\eqref{as:4} gives $p_0^1=p(\varphi,x_0^1,y_0^1)>p(\varphi,x_0^2,y_0^2)=p_0^2$.

Since $u_1(\mathbf r_1^1)<u_1(\mathbf r_1^2)$, it cannot be that $x_1^1\geq x_1^2$ and $y_1^1\geq y_1^2$. Since $p(u_1,x_1^1,y_1^1)=p_0^1>p_0^2=p(u_1,x_1^2,y_1^2)$, it also cannot be that $x_1^1\geq x_1^2$ and $y_1^1\leq y_1^2$. Thus $x_1^1<x_1^2$.

Equation~\eqref{eq:stuff} now implies $x_2^1>x_2^2$. If $y_2^1\geq y_2^2$, strict monotonicity rules out $u_2(\mathbf r_2^1)\leq u_2(\mathbf r_2^2)$. If $y_2^1\leq y_2^2$, Assumption~\eqref{as:4} rules out $p(u_2,x_2^1,y_2^1)=p_0^1>p_0^2=p(u_2,x_2^2,y_2^2)$. Both possibilities are impossible.
\end{proof}

This result is analogous to, but not identical with, the First Welfare Theorem. Pareto comparisons are taken over market states that keep total inventory and the CFMM invariant fixed, and the CFMM inventory is part of the feasible state. The proposition therefore does not claim efficiency relative to the larger pure-exchange feasible set in which the agents can reallocate their aggregate endowment directly at a fixed price.

\subsection{Informal discussion for shifted CFMMs}

The assumptions imposed above are convenient because they imply that each liquidity curve has
the coordinate axes as asymptotes and that the induced price ranges over all of $(0,\infty)$.
This excludes, at least in its literal form, shifted or truncated constant-product market makers,
which arise in stylized descriptions of concentrated liquidity such as the virtual-reserve formulation in \citet{AZSKR2021}. We do not develop a full boundary
theory in this paper. The purpose of this subsection is only to indicate how the interior
price-equality condition should be interpreted when the relevant liquidity curve is truncated.
None of the results in later sections relies on this informal discussion.

Consider, for example,
\[
\varphi_{\alpha,\beta}(x,y)=\log(x+\alpha)+\log(y+\beta),
\qquad \alpha,\beta>0.
\]
For a fixed level $c$, the liquidity curve is
\[
(x+\alpha)(y+\beta)=e^c,
\]
or equivalently
\[
y=\frac{e^c}{x+\alpha}-\beta.
\]
Because actual holdings must satisfy $x,y\geq 0$, this curve is a shifted and truncated
constant-product curve. In particular, it is defined only for
\[
0\leq x\leq \frac{e^c}{\beta}-\alpha,
\]
provided that the right endpoint is positive. The induced price is
\[
p(\varphi_{\alpha,\beta},x,y)
=
\frac{y+\beta}{x+\alpha}.
\]
Along the liquidity curve this becomes
\[
p(\varphi_{\alpha,\beta},x,y)
=
\frac{e^c}{(x+\alpha)^2}.
\]
Hence the attainable prices lie in the bounded interval
\[
\left[
\frac{\beta^2}{e^c},
\frac{e^c}{\alpha^2}
\right].
\]

This bounded price range is the main new feature. If a unilateral equilibrium occurs at an interior
point of the truncated liquidity curve, then the same argument as in Theorem~\ref{thm:1} gives the usual
price-equality condition
\[
p(\varphi_{\alpha,\beta},\mathbf r_0)
=
p(u_1,\mathbf r_1)
=
p(u_2,\mathbf r_2).
\]
At an endpoint, however, the first-order price condition need not hold. The correct condition is
instead one-sided: no valid trade along the truncated liquidity curve can strictly improve the
active agent's utility. Thus boundary equilibria should be interpreted as constrained optima.
Equivalently, for shifted CFMMs, the unilateral no-trade condition consists of price equality at interior
equilibria and a no-improving-trade condition at boundary equilibria.
\section{Aggregation and representative agents}\label{sec:cba}

After characterizing static equilibria, we next ask whether the aggregate behavior of several agents can be represented by a single representative agent. This question is useful for two reasons. First, it clarifies when a multi-agent CFMM economy can be reduced to a one-agent problem. Second, it identifies the geometric obstruction to such a reduction: aggregate frontiers are obtained by infimal convolution and need not form the level curves of a single utility function.

The question is related to the classical exact-aggregation problem initiated by \citet{Gorman1953,Gorman1961}, but the behavioral object is different. Classical Gorman aggregation asks when market demand at a common price can be written as the demand of one consumer independently of the distribution of income. Here agents face a nonlinear, inventory-dependent CFMM feasible frontier, and strong representation asks whether one utility reproduces aggregate best responses across all initial pool states and equilibria. The common-homothetic-preference condition derived below is therefore a mechanism-specific aggregation criterion, not a restatement of the Gorman polar-form theorem.

\subsection{Geometric motivation}
We now turn from existence and efficiency to aggregation. In traditional financial markets, a natural question is whether there exists a representative (or artificial) agent whose behavior replicates that of the aggregate economy. We investigate an analogous question in the setting of a CFMM. The main difference from the Walrasian setting is that the agent does not choose from a budget set with a linear frontier; instead, valid CFMM trades generate a translated nonlinear liquidity curve. This makes the geometry of level curves central.

Consider first one agent trading against a CFMM. Fix an initial market state and write
\[
\mathbf T=\mathbf r_0^{\mathsf{in}}+\mathbf r_1^{\mathsf{in}}, \qquad c_0=\varphi(\mathbf r_0^{\mathsf{in}}).
\]
If the CFMM inventory is \(\mathbf r_0=(x_0,y_0)\), then \(y_0=\ell_{\varphi,c_0}(x_0)\). Hence the agent's possible post-trade holdings, written in the agent's coordinates \(\mathbf r_1=(x,y)\), are described by
\begin{equation}\label{eq:one_agent_feasible_curve}
 y=T_y-\ell_{\varphi,c_0}(T_x-x).
\end{equation}
We call this the (one-agent) \emph{feasible frontier}.
By Proposition~\ref{prop:level-curve}, \(\ell_{\varphi,c_0}\) is strictly decreasing and strictly convex, so the frontier in \eqref{eq:one_agent_feasible_curve} is strictly decreasing and strictly concave. A utility-maximizing trade is obtained by moving to higher level curves of the agent until a level curve is tangent to this feasible frontier. At an interior optimum, this tangency condition is precisely
\(p(\varphi,\mathbf r_0)=p(u_1,\mathbf r_1)\),
which is the one-agent version of Theorem~\ref{thm:1}.

\begin{figure}[htbp]
\centering
\begin{tikzpicture}
\begin{axis}[
 width=7.8cm, height=7.8cm,
 xlabel={$x$}, ylabel={$y$},
 xmin=0, xmax=6,
 ymin=0, ymax=6,
 axis lines=middle,
 axis line style={thick, black},
 grid=major,
 grid style={dotted, gray!60},
 xtick={0,1,2,3,4,5,6},
 ytick={0,1,2,3,4,5,6},
 samples=300,
 smooth,
 no markers,
 restrict y to domain=0:6,
 title={One agent against a CFMM}
]
\addplot[black, very thick, domain=0.65:5.15] {6 - 4/(6-x)};
\addplot[blue!35, thick, domain=0.7:6] {8/x};
\addplot[blue, thick, domain=0.7:6] {16/x};
\addplot[blue!70, thick, domain=0.7:6] {24/x};
\fill (axis cs:4,4) circle[radius=1.8pt];
\node[anchor=north east] at (axis cs:5.05,1.45) {feasible frontier};
\node[anchor=south west, blue] at (axis cs:1.15,5.25) {level curves};

\end{axis}
\end{tikzpicture}
\caption{For a single agent, the liquidity curve induces a concave feasible frontier for the agent. The optimum occurs when the highest attainable level curve is tangent to the feasible frontier.}
\label{fig:one_agent_curve}
\end{figure}

For two agents, the same picture can be drawn in aggregate holdings. Fix
target utility levels \(c_1,c_2>0\). The set of holdings that give agent
\(i\) utility at least \(c_i\) is the epigraph
\(\ell_{u_i,c_i}^e\). Hence the aggregate holdings that can be split between
the two agents while giving utilities at least \(c_1\) and \(c_2\) form the
Minkowski sum
\[
\ell_{u_1,c_1}^e+\ell_{u_2,c_2}^e.
\]
The lower boundary of this set is the graph of the infimal convolution
(for standard convex-analytic background, see \citealp{HUL2001})
\begin{equation}\label{eq:inf_convolution}
(\ell_{u_1,c_1}\boxplus \ell_{u_2,c_2})(x)
:=
\inf_{x_1+x_2=x}
\left\{
\ell_{u_1,c_1}(x_1)+\ell_{u_2,c_2}(x_2)
\right\}.
\end{equation}

\begin{lemma}
\label{lem:infimal-convolution-regularity}
Let \(u_1,u_2\) satisfy Assumptions~\eqref{as:1}--\eqref{as:3}, fix
\(c_1,c_2>0\), and set
\[
F:=\ell_{u_1,c_1}\boxplus\ell_{u_2,c_2}.
\]
For every \(x>0\), the infimum defining \(F(x)\) is attained at a unique
interior split \(x=x_1(x)+x_2(x)\), with \(x_1(x),x_2(x)>0\). The function
\(F\) is continuously differentiable and strictly convex, and
\begin{equation}\label{eq:aggregate-frontier-derivative}
F'(x)
=
\ell_{u_1,c_1}'(x_1(x))
=
\ell_{u_2,c_2}'(x_2(x)).
\end{equation}
Consequently, if
\(\mathbf r_i=(x_i(x),\ell_{u_i,c_i}(x_i(x)))\), then
\(p(u_1,\mathbf r_1)=p(u_2,\mathbf r_2)=-F'(x)\).
\end{lemma}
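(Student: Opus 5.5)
Write $\ell_i:=\ell_{u_i,c_i}$. Proposition~\ref{prop:level-curve} gives that each $\ell_i:\R_{++}\to\R_{++}$ is continuous, strictly decreasing, strictly convex and continuously differentiable. It also gives $\ell_i(t)\to\infty$ and $\ell_i'(t)\to-\infty$ as $t\to0^+$, while $\ell_i'(t)\to0$ as $t\to\infty$. The plan is to fix $x>0$ and study the one-variable function
\[
G_x(t):=\ell_1(t)+\ell_2(x-t),\qquad t\in(0,x),
\]
so that $F(x)=\inf_{0<t<x}G_x(t)$. The splits with $x_1=0$ or $x_2=0$ are not admissible, because $\ell_i$ is only defined on $\R_{++}$. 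In any case the epigraph formulation excludes them, since the level curves lie in the open quadrant.

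\emph{Attainment and uniqueness.} $G_x$ is a sum of strictly convex functions and hence strictly convex on $(0,x)$. As $t\to0^+$ we have $G_x(t)\ge\ell_1(t)\to\infty$, and as $t\to x^-$ we have $G_x(t)\ge\ell_2(x-t)\to\infty$; here $\ell_i>0$ is used. The infimum is therefore attained on a compact subinterval, and strict convexity makes the minimizer $x_1(x)$ unique; set $x_2(x)=x-x_1(x)$. Both are positive by construction. The first-order condition $G_x'(x_1(x))=0$ reads $\ell_1'(x_1(x))=\ell_2'(x_2(x))$. Equivalently, the minimizer is characterized as the unique zero of the strictly increasing continuous map $t\mapsto\ell_1'(t)-\ell_2'(x-t)$, which tends to $-\infty$ at $0^+$ and to $+\infty$ at $x^-$.

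\emph{Regularity of $F$.} I would first show that $x\mapsto x_1(x)$ is continuous, and in fact nondecreasing, as is $x_2$. If $x<\tilde x$ and $x_1(\tilde x)<x_1(x)$, then $x_2(\tilde x)>x_2(x)$. Strict monotonicity of $\ell_i'$ would then give $\ell_1'(x_1(\tilde x))<\ell_1'(x_1(x))=\ell_2'(x_2(x))<\ell_2'(x_2(\tilde x))$, contradicting the first-order condition at $\tilde x$. Continuity then follows from the zero characterization by a standard compactness argument: along a sequence $x^n\to x$, the minimizers stay in a compact subset of $(0,x)$ by the boundary blow-up, and every limit point satisfies the first-order condition.

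The main obstacle is differentiability of $F$ without assuming $\ell_i$ is $C^2$, so I would avoid the implicit function theorem. Instead I would use an envelope sandwich. For small $h$, the split $(x_1(x)+h,\,x_2(x))$ is admissible at $x+h$, and the split $(x_1(x+h)-h,\,x_2(x+h))$ is admissible at $x$. These give
\[
\ell_1(x_1(x+h))-\ell_1(x_1(x+h)-h)\;\le\;F(x+h)-F(x)\;\le\;\ell_1(x_1(x)+h)-\ell_1(x_1(x)).
\]
Dividing by $h$ and using the mean value theorem, continuity of $x_1(\cdot)$ and continuity of $\ell_1'$, both bounds tend to $\ell_1'(x_1(x))$. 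Hence $F'(x)=\ell_1'(x_1(x))=\ell_2'(x_2(x))$, and $F'$ is continuous because it is a composition of continuous maps.

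\emph{Strict convexity.} For $x\neq\tilde x$ and $\lambda\in(0,1)$, the split $\lambda x_i(x)+(1-\lambda)x_i(\tilde x)$ is admissible at $\lambda x+(1-\lambda)\tilde x$. Convexity of each $\ell_i$ then gives $F(\lambda x+(1-\lambda)\tilde x)\le\lambda F(x)+(1-\lambda)F(\tilde x)$. The inequality is strict because $x_1(x)\ne x_1(\tilde x)$ or $x_2(x)\ne x_2(\tilde x)$ must hold, as their sums differ, and strict convexity applies in that coordinate.

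Finally, Proposition~\ref{prop:level-curve}\eqref{curve:price} gives $\ell_i'(x_i(x))=-p(u_i,\mathbf r_i)$. Combining this with \eqref{eq:aggregate-frontier-derivative} yields $p(u_1,\mathbf r_1)=p(u_2,\mathbf r_2)=-F'(x)$.
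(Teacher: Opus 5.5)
Your proposal is correct and follows the paper's proof essentially step for step. It uses the coercive, strictly convex one-variable minimization for the unique interior split, a compactness argument for continuity of the split, and the same two-sided envelope sandwich to obtain $F'$ without any $C^2$ assumption (you perturb $\ell_{u_1,c_1}$ where the paper perturbs $\ell_{u_2,c_2}$), and it gets strict convexity from convex combinations of the optimal splits. Your extra observation that $x_1(\cdot)$ and $x_2(\cdot)$ are both nondecreasing is not needed, but it would give continuity of the split at once, since two nondecreasing functions summing to the identity cannot jump.
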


The proof of Lemma~\ref{lem:infimal-convolution-regularity} is given in
Appendix~\ref{app:infimal-convolution-regularity}.

For a utility function $u: \R_+^2 \to \R_+$ satisfying Assumptions \eqref{as:1}, \eqref{as:2} and \eqref{as:3}, write
$\mathcal L^e_{u}:= \{\ell^e_{u,c}: c >0\}$
for the collection of upper sets. The Minkowski sum of the epigraphs corresponds to infimal convolutions.
\begin{equation}
\mathcal L^e_{u_1}+\mathcal L^e_{u_2}
:= \left\{\ell^e_{u_1,c_1}+ \ell^e_{u_2,c_2}:c_1,c_2>0\right\}= \left\{\left(\ell_{u_1,c_1}\boxplus \ell_{u_2,c_2}\right)^e:c_1,c_2>0\right\}.
\end{equation}
The functions in
\begin{equation}\label{eq:aggregate_level_family}
 \Lambda_{u_1} \boxplus \Lambda_{u_2}:= \left\{\ell_{u_1,c_1}\boxplus \ell_{u_2,c_2}:c_1,c_2>0\right\}
\end{equation}
are called aggregate frontiers.
In general this encodes a two-parameter family of curves. It need not behave like the one-parameter family of level curves of a single utility function. The curves may cross, which is impossible for the level curves of one strictly increasing utility function. The strong representative-agent question is therefore equivalent to asking when the family in \eqref{eq:aggregate_level_family} collapses to an ordinary level-curve family.

\begin{figure}[htbp]
\centering
\begin{tikzpicture}
\begin{axis}[
 width=7.8cm, height=7.8cm,
 xlabel={$x$}, ylabel={$y$},
 xmin=0, xmax=6,
 ymin=0, ymax=6,
 axis lines=middle,
 axis line style={thick, black},
 grid=major,
 grid style={dotted, gray!60},
 xtick={0,1,2,3,4,5,6},
 ytick={0,1,2,3,4,5,6},
 samples=300,
 smooth,
 no markers,
 restrict y to domain=0:6,
 title={Two agents against a CFMM}
]
\addplot[black, very thick, domain=0.65:5.15] {6 - 4/(6-x)};
\addplot[blue!35, thick, domain=0.7:6] {2.8/x + 2.95};
\addplot[blue, thick, domain=0.7:6] {4/x + 10/3};
\addplot[blue!70, thick, domain=0.7:6] {5.4/x + 3.65};
\addplot[purple!35, thick, domain=0.9:6] {16/x - 1.25};
\addplot[purple, thick, domain=0.9:6] {196/(9*x) - 38/27};
\addplot[purple!70, thick, domain=0.9:6] {28/x - 1.55};
\fill (axis cs:3,14/3) circle[radius=1.8pt];
\fill (axis cs:4.15,3.85) circle[radius=1.8pt];

\node[anchor=north east] at (axis cs:5.05,1.45) {feasible frontier};

\end{axis}
\end{tikzpicture}
\caption{Schematic illustration. For two agents, aggregate frontiers are obtained by infimal convolution. Two different one-parameter subfamilies may each contain a curve tangent to the same feasible frontier, illustrating that aggregation can depend on which aggregate frontier is relevant.}
\label{fig:two_agent_noncollapse}
\end{figure}

\subsection{Weak representative agents}

We first record a weak, fixed-equilibrium notion of representative agent. This construction is useful because it works for arbitrary heterogeneous agents and arbitrary numbers of agents, although the resulting utility generally depends on the particular equilibrium being represented.

Fix a CFMM trading function \(\varphi\) and an \(n\)-agent economy with
utilities \(u_1,\ldots,u_n\). Let \([n]=\{1,\ldots,n\}\).
Definition~\ref{def:4} and Theorem~\ref{thm:1} generalize naturally to this
setting. For a market state
\[
\mathbf R=(\mathbf r_0,\mathbf r_1,\ldots,\mathbf r_n),
\]
write
\[
\mathbf r_A:=\sum_{i=1}^n\mathbf r_i
\]
for the agents' aggregate holding.

Let
\(\mathbf R^{\mathsf{in}}\in(\R_{++}^2)^{n+1}\) be an interior initial
market state, and let
\(\mathbf R^*\in\mathcal U(\mathbf R^{\mathsf{in}})\) be a unilateral equilibrium
with unilateral equilibrium price \(p_0^*\). The aggregate trade seen by the CFMM
is
\[
\mathbf h^*:=\mathbf r_0^*-\mathbf r_0^{\mathsf{in}},
\]
and feasibility gives
\[
\mathbf r_A^*=\mathbf r_A^{\mathsf{in}}-\mathbf h^*.
\]

A utility function \(u_A\) is called a \emph{weak representative agent} for
\((\mathbf R^{\mathsf{in}},\mathbf R^*)\) if \(\mathbf h^*\) solves
\[
\max_{\mathbf h}\;u_A(\mathbf r_A^{\mathsf{in}}-\mathbf h)
\]
over all valid trades \(\mathbf h\in\R^2\), that is, over trades satisfying
\[
\mathbf r_0^{\mathsf{in}}+\mathbf h\in\R_+^2,\qquad
\varphi(\mathbf r_0^{\mathsf{in}}+\mathbf h)
=\varphi(\mathbf r_0^{\mathsf{in}}),\qquad
\mathbf r_A^{\mathsf{in}}-\mathbf h\in\R_+^2.
\]
Equivalently, starting from the single-agent state
\((\mathbf r_0^{\mathsf{in}},\mathbf r_A^{\mathsf{in}})\), the agent with
utility \(u_A\) optimally reaches
\((\mathbf r_0^*,\mathbf r_A^*)\).

A utility function \(u_A\) is called a \emph{strong representative agent}
for \(u_1,\ldots,u_n\) relative to \(\varphi\) if the same \(u_A\),
independent of \(\mathbf R^{\mathsf{in}}\) and \(\mathbf R^*\), is a weak
representative agent for every interior initial state and every unilateral
equilibrium in \(\mathcal U(\mathbf R^{\mathsf{in}})\).

\begin{lemma}\label{lem:weighted-sup-convolution}
Let \(u_1,\ldots,u_n:\R_+^2\to\R\) be continuous and concave, and let
\(w_i>0\). Define
\[
U^w(\mathbf r)
:=
\sup\left\{
\sum_{i=1}^n w_i u_i(\mathbf r_i):
\mathbf r_i\in\R_+^2,\
\sum_{i=1}^n\mathbf r_i=\mathbf r
\right\}.
\]
Then the supremum is finite and attained for every \(\mathbf r\in\R_+^2\),
and \(U^w\) is continuous and concave. If each \(u_i\) is increasing, then
\(U^w\) is increasing. If, in addition, each \(u_i\) satisfies
Assumption~\eqref{as:1}, then \(U^w\) vanishes on the coordinate axes. Moreover, suppose each $u_i$ is differentiable at
\(\mathbf r_i^*\in\R_{++}^2\), put
\(\mathbf r_A^*=\sum_i\mathbf r_i^*\), and assume that
\(w_i\nabla u_i(\mathbf r_i^*)=\boldsymbol\lambda\) for every \(i\). Then
\((\mathbf r_1^*,\ldots,\mathbf r_n^*)\) solves the allocation problem at
\(\mathbf r_A^*\), and \(\boldsymbol\lambda\) is a supergradient of \(U^w\)
there:
\begin{equation}\label{eq:weighted-sup-supergradient}
U^w(\mathbf r)
\leq
U^w(\mathbf r_A^*)
+\boldsymbol\lambda\cdot(\mathbf r-\mathbf r_A^*),
\qquad \mathbf r\in\R_+^2.
\end{equation}
\end{lemma}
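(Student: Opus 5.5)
The plan is to verify the claims in order, starting with finiteness and attainment. For fixed \(\mathbf r\in\R_+^2\), the set of admissible splits
\[
S(\mathbf r)=\Bigl\{(\mathbf r_1,\ldots,\mathbf r_n)\in(\R_+^2)^n:\ \sum_i\mathbf r_i=\mathbf r\Bigr\}
\]
is compact, since each \(\mathbf r_i\) lies in \([0,x]\times[0,y]\). The objective \(\sum_i w_iu_i(\mathbf r_i)\) is continuous on this set. Hence the supremum is a finite maximum.

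Concavity comes next. Take \(\mathbf r,\mathbf r'\) with optimal splits \((\mathbf r_i)\) and \((\mathbf r_i')\), and fix \(t\in[0,1]\). The convex combination \(t\mathbf r_i+(1-t)\mathbf r_i'\) is an admissible split of \(t\mathbf r+(1-t)\mathbf r'\). Concavity of each \(u_i\) then gives \(U^w(t\mathbf r+(1-t)\mathbf r')\geq tU^w(\mathbf r)+(1-t)U^w(\mathbf r')\).

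Monotonicity is similar. If \(\mathbf r'\geq\mathbf r\) coordinatewise, add \(\mathbf r'-\mathbf r\) to \(\mathbf r_1\) in an optimal split of \(\mathbf r\). This gives an admissible split of \(\mathbf r'\) with weakly larger value, since the \(u_i\) are increasing.

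For vanishing on the axes, note that if \(\mathbf r=(x,0)\), every admissible split has all \(\mathbf r_i\) on the \(x\)-axis. Each term is then zero by Assumption~\eqref{as:1}, so the sum is zero. The \(y\)-axis case is the same.

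Continuity is the main obstacle. A concave function is continuous on the interior \(\R_{++}^2\), so only the boundary needs work. For upper semicontinuity, take \(\mathbf r^k\to\mathbf r\) with optimal splits \((\mathbf r_i^k)\). These splits are bounded, so a subsequence converges to a split of \(\mathbf r\). Continuity of the \(u_i\) gives \(\limsup_k U^w(\mathbf r^k)\leq U^w(\mathbf r)\).

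For lower semicontinuity, fix an optimal split \((\mathbf r_i)\) of \(\mathbf r\). Given \(\mathbf r^k\to\mathbf r\), I will build splits of \(\mathbf r^k\) that converge to \((\mathbf r_i)\). Write \(\mathbf r^k=(x^k,y^k)\) and \(\mathbf r_i=(x_i,y_i)\).
\begin{itemize}
\item If \(x>0\), set the \(x\)-components to \(x_i\,x^k/x\); otherwise take \(x^k/n\) each.
\item Do the same for the \(y\)-components.
\end{itemize}
These splits are admissible, nonnegative, and converge to \((\mathbf r_i)\). Continuity of the \(u_i\) then yields \(\liminf_k U^w(\mathbf r^k)\geq U^w(\mathbf r)\).

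The supergradient claim is the last step. Since each \(u_i\) is concave and differentiable at the interior point \(\mathbf r_i^*\), its gradient is a supergradient there:
\[
u_i(\mathbf s)\leq u_i(\mathbf r_i^*)+\nabla u_i(\mathbf r_i^*)\cdot(\mathbf s-\mathbf r_i^*),\qquad \mathbf s\in\R_+^2.
\]
This holds on all of \(\R_+^2\), because a concave function on a convex set lies below its tangent plane at any interior point of differentiability.

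Multiply by \(w_i\) and use \(w_i\nabla u_i(\mathbf r_i^*)=\boldsymbol\lambda\). For any split \((\mathbf r_i)\) of any \(\mathbf r\), summing over \(i\) gives
\[
\sum_i w_iu_i(\mathbf r_i)\leq\sum_i w_iu_i(\mathbf r_i^*)+\boldsymbol\lambda\cdot(\mathbf r-\mathbf r_A^*).
\]
Taking \(\mathbf r=\mathbf r_A^*\) shows that \((\mathbf r_i^*)\) attains \(U^w(\mathbf r_A^*)\), so it solves the allocation problem there. Taking the supremum over splits for general \(\mathbf r\) then gives~\eqref{eq:weighted-sup-supergradient}.
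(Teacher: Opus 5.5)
Your proof is correct and follows the paper's argument essentially step for step. Attainment comes from compactness of the split set. Concavity comes from convex combinations of optimal splits. Continuity uses a subsequence argument for upper semicontinuity and proportional rescaling of an optimal split for lower semicontinuity. The supergradient claim comes from summing the tangent-plane inequalities under the common weighted gradient. The differences are cosmetic: you spread a vanishing coordinate evenly rather than giving it to one agent, and in the upper semicontinuity step you should first pass to a subsequence along which $U^w(\mathbf r^k)$ tends to its $\limsup$ before extracting a convergent subsequence of splits, as the paper does.
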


The proof of Lemma~\ref{lem:weighted-sup-convolution} is given in
Appendix~\ref{app:weighted-sup-convolution}.

\begin{proposition}\label{prop:weak-rep}
Let \(\mathbf R^{\mathsf{in}}\) be an interior initial market state in the \(n\)-agent economy. Let \(\mathbf R^*\in \mathcal U(\mathbf R^{\mathsf{in}})\) be a unilateral equilibrium with unilateral equilibrium price \(p_0^*\). Suppose that \(u_i\) satisfies Assumptions~\eqref{as:1}--\eqref{as:4} for each \(i\in[n]\). Define weights
\[
 w_i:=\frac{1}{\partial_y u_i(\mathbf r_i^*)}>0.
\]
Then the utility
\begin{equation}\label{eq:sup_convolution_rep}
 u_A^w(\mathbf r)
:=\sup\left\{\sum_{i=1}^n w_i u_i(\mathbf r_i):
 \sum_{i=1}^n \mathbf r_i=\mathbf r,\; \mathbf r_i\in\R_+^2\right\},
 \qquad \mathbf r\in\R_+^2,
\end{equation}
is a weak representative agent for \((\mathbf R^{\mathsf{in}},\mathbf R^*)\).
\end{proposition}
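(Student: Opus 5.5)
We need to show that \(\mathbf h^*=\mathbf r_0^*-\mathbf r_0^{\mathsf{in}}\) maximizes \(u_A^w(\mathbf r_A^{\mathsf{in}}-\mathbf h)\) over valid trades. The plan combines the supergradient inequality of Lemma~\ref{lem:weighted-sup-convolution} with a supporting-line argument for the convex liquidity curve.

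\emph{Step 1 (common supergradient).} Since \(\mathbf R^*\in\mathcal U(\mathbf R^{\mathsf{in}})\) and \(\mathbf R^{\mathsf{in}}\) is interior, \(\mathbf R^*\) is interior, as noted after the definition of Pareto optimality: each agent's utility is at least her positive initial utility, and the CFMM stays on a positive level. By the \(n\)-agent version of Theorem~\ref{thm:1}, \(p(u_i,\mathbf r_i^*)=p_0^*\) for all \(i\). With \(w_i=1/\partial_y u_i(\mathbf r_i^*)\), this gives
\[
w_i\nabla u_i(\mathbf r_i^*)=\bigl(p(u_i,\mathbf r_i^*),1\bigr)=(p_0^*,1)=:\boldsymbol\lambda
\]
for every \(i\). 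Each \(u_i\) is continuous and concave by Assumptions~\eqref{as:1} and~\eqref{as:3}, where concavity extends from \(\R_{++}^2\) to \(\R_+^2\) by continuity, and each is differentiable at the interior point \(\mathbf r_i^*\). Lemma~\ref{lem:weighted-sup-convolution} therefore yields \eqref{eq:weighted-sup-supergradient} at \(\mathbf r_A^*\):
\[
u_A^w(\mathbf r)\le u_A^w(\mathbf r_A^*)+\boldsymbol\lambda\cdot(\mathbf r-\mathbf r_A^*).
\]

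\emph{Step 2 (the liquidity curve lies on one side of the tangent line).} Let \(\mathbf h\) be any valid trade and \(\mathbf z=\mathbf r_0^{\mathsf{in}}+\mathbf h\). By feasibility, \(\mathbf r_A^{\mathsf{in}}-\mathbf h-\mathbf r_A^*=\mathbf h^*-\mathbf h=\mathbf r_0^*-\mathbf z\). It therefore suffices to show
\[
\boldsymbol\lambda\cdot(\mathbf r_0^*-\mathbf z)\le 0,
\qquad\text{i.e.}\qquad
p_0^*x_z+y_z\ge p_0^*x_0^*+y_0^* .
\]
Here \(\mathbf z\) lies on the level set \(s_{\varphi,c}\) with \(c=\varphi(\mathbf r_0^{\mathsf{in}})>0\). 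By Proposition~\ref{prop:level-curve}, this set is the graph of the strictly convex \(C^1\) function \(\ell=\ell_{\varphi,c}\) on \(\R_{++}\), and \(\ell'(x_0^*)=-p(\varphi,\mathbf r_0^*)=-p_0^*\). Convexity gives
\[
\ell(x)\ge \ell(x_0^*)-p_0^*(x-x_0^*),
\]
which is exactly the required inequality. Boundary points are excluded: \(\varphi\) vanishes on the axes while \(c>0\), so \(\mathbf z\) is interior.

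\emph{Step 3 (conclusion).} Combining Steps 1 and 2 gives \(u_A^w(\mathbf r_A^{\mathsf{in}}-\mathbf h)\le u_A^w(\mathbf r_A^*)=u_A^w(\mathbf r_A^{\mathsf{in}}-\mathbf h^*)\) for every valid \(\mathbf h\). It remains to check that \(\mathbf h^*\) is itself valid. This holds because \(\mathbf r_0^*\in\R_+^2\), \(\varphi(\mathbf r_0^*)=c\), and \(\mathbf r_A^*\in\R_+^2\).

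The main subtlety is Step 1. The assumptions impose concavity only on \(\R_{++}^2\), whereas Lemma~\ref{lem:weighted-sup-convolution} requires concavity on \(\R_+^2\). I would handle this by a continuity-limit argument, taking limits of the concavity inequality from interior points. The remaining steps are routine.
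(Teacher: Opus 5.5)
Your proof is correct and follows the paper's argument: the common supergradient $(p_0^*,1)$ from Theorem~\ref{thm:1} and Lemma~\ref{lem:weighted-sup-convolution}, together with the fact that every attainable aggregate holding lies on the correct side of the tangent line at $\mathbf r_A^*$. The differences are cosmetic. You verify the tangent-line inequality through convexity of $\ell_{\varphi,c}$ in pool coordinates, whereas the paper phrases it as concavity of the representative agent's feasible frontier. You also make explicit two points the paper leaves implicit: the extension of concavity to the boundary, and the validity of $\mathbf h^*$.
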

\begin{proof}
Theorem~\ref{thm:1} gives
\(p(\varphi,\mathbf r_0^*)=p(u_i,\mathbf r_i^*)=p_0^*\) for every
\(i\in[n]\). Hence the choice
\(w_i=1/\partial_y u_i(\mathbf r_i^*)\) yields
\(w_i\nabla u_i(\mathbf r_i^*)=(p_0^*,1)\). Applying
Lemma~\ref{lem:weighted-sup-convolution} with
\(\boldsymbol\lambda=(p_0^*,1)\) shows that the equilibrium allocation
attains the supremum in~\eqref{eq:sup_convolution_rep} at
\(\mathbf r_A^*=\sum_i\mathbf r_i^*\), and that
\[
u_A^w(\mathbf r)
\leq
u_A^w(\mathbf r_A^*)+(p_0^*,1)\cdot(\mathbf r-\mathbf r_A^*).
\]

By feasibility, the aggregate trade
\(\mathbf h^*=\mathbf r_0^*-\mathbf r_0^{\mathsf{in}}\) moves the
representative agent from \(\mathbf r_A^{\mathsf{in}}\) to
\(\mathbf r_A^*\). In representative-agent coordinates, the feasible
frontier is concave and has slope \(-p_0^*\) at \(\mathbf r_A^*\).
Therefore every other attainable aggregate holding \(\mathbf r\) satisfies
\((p_0^*,1)\cdot(\mathbf r-\mathbf r_A^*)\leq0\). The preceding
supergradient inequality gives
\(u_A^w(\mathbf r)\leq u_A^w(\mathbf r_A^*)\), so the aggregate trade is
utility-maximizing.
\end{proof}

\begin{remark}
Proposition~\ref{prop:weak-rep} is weaker than the state-independent results below, but it is still useful for two reasons. First, it applies to an arbitrary number of heterogeneous agents and to an arbitrary fixed unilateral equilibrium. Second, it gives a planner interpretation: for a fixed aggregate holding \(\mathbf r\), the representative utility \(u_A^w(\mathbf r)\) is the maximal weighted total utility attainable by splitting \(\mathbf r\) among the agents. The limitation is that the weights depend on \(\mathbf R^*\), so this construction is generally not a strong representative agent. The normalization $w_i=1/\partial_y u_i(\mathbf r_i^*)$ makes every weighted marginal-utility vector equal to $(p_0^*,1)$. Thus the weights are precisely local planner weights supporting the equilibrium allocation of the aggregate bundle; their dependence on equilibrium marginal utilities explains why weak aggregation need not extend across states.
\end{remark}

\begin{remark}\label{rmk:2}
The weighted sup-convolution in \eqref{eq:sup_convolution_rep} is the demand-side analogue of a cost-minimization problem for combined market makers. It maximizes weighted utility over all allocations of a fixed aggregate holding. Dually, one can allocate a fixed trade or inventory across multiple liquidity sources so as to minimize cost. This is closely related to the geometric viewpoint in \citet{ACDEK2023} and to the infimal-convolution construction of combined cost functions in \citet{BFP25}. The important distinction is that Proposition~\ref{prop:weak-rep} is local to a chosen equilibrium through the weights \(w_i\), while the strong results below characterize when no such state-dependent weights are needed.
\end{remark}

\subsection{Strong representative agents and collapse of aggregate curves}

For the rest of this section we return to the two-agent economy. The weak construction above always produces a representative utility after the equilibrium is fixed. We now ask when there is a single utility \(v\) that represents the two agents for every initial market state.

The following results form the convex-geometric core. They show that the two-parameter family of aggregate frontiers collapses to a one-parameter level-curve family exactly in the common homothetic-preference case. For a curve \(f\) satisfying the conclusions of Proposition~\ref{prop:level-curve}, define its tangent-intercept transform by
\[
 A_f(q):=\inf_{x>0}\{qx+f(x)\}, \qquad q>0.
\]

\begin{lemma}\label{lem:cont-of-transform}
Let \(u\) satisfy Assumptions~\eqref{as:1}--\eqref{as:4}. For each fixed
\(q_0>0\), the map
\[
\alpha_u:(0,\infty)\to(0,\infty),
\qquad
\alpha_u(c):=A_{\ell_{u,c}}(q_0),
\]
is continuous, strictly increasing, and onto. In particular, for every
\(t>0\), there is a unique level curve \(\ell_{u,c}\) satisfying
\(A_{\ell_{u,c}}(q_0)=t\).
\end{lemma}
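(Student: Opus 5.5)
The plan is to identify $A_{\ell_{u,c}}(q_0)$ with the value of $u$'s level curves against the linear functional $(q_0,1)$, and to read off the three properties from the geometry of nested level curves.

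\emph{Step 1: the infimum is attained at a tangency point.} For fixed $c>0$, the function $x\mapsto q_0x+\ell_{u,c}(x)$ is strictly convex and continuously differentiable by Proposition~\ref{prop:level-curve}\eqref{curve:cx}--\eqref{curve:diff}. By Proposition~\ref{prop:level-curve}\eqref{curve:axis-tangent}, its derivative $q_0+\ell_{u,c}'(x)$ tends to $-\infty$ as $x\to0^+$ and to $q_0>0$ as $x\to\infty$. Hence there is a unique minimizer $x_c$, characterized by $\ell_{u,c}'(x_c)=-q_0$. It follows that
\[
\alpha_u(c)=q_0x_c+\ell_{u,c}(x_c)>0 .
\]
Equivalently,
\[
\alpha_u(c)=\min\{q_0x+y:\ (x,y)\in\R_{++}^2,\ u(x,y)\ge c\},
\]
because the minimum of a linear functional with positive coefficients over the upper set $\ell_{u,c}^e$ is attained on its boundary curve.

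\emph{Step 2: strict monotonicity.} Take $c<c'$ and let $(x',y')$ be the minimizer for $c'$. Since $u$ is strictly increasing and continuous, the level curves are nested: $\ell_{u,c}(x')<\ell_{u,c'}(x')$. Indeed, $u(x',\ell_{u,c}(x'))=c<c'=u(x',y')$ forces $\ell_{u,c}(x')<y'$. Therefore
\[
\alpha_u(c)\le q_0x'+\ell_{u,c}(x')<q_0x'+y'=\alpha_u(c').
\]

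\emph{Step 3: continuity.} This is the main technical step. Using the representation from Step 1, upper semicontinuity is immediate: for $c_n\to c$, evaluating at the fixed point $x_c$ gives $\limsup\alpha_u(c_n)\le\lim q_0x_c+\ell_{u,c_n}(x_c)=\alpha_u(c)$. Here pointwise continuity of $c\mapsto\ell_{u,c}(x)$ follows from continuity and strict monotonicity of $u(x,\cdot)$ via the intermediate value theorem.

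For the lower bound, I would use monotonicity. Since $\alpha_u$ is increasing, it suffices to show that no jump occurs. Left-continuity at $c$ follows from upper semicontinuity together with monotonicity, because $\alpha_u(c^-)\le\alpha_u(c)$ and $\limsup_{c_n\uparrow c}\alpha_u(c_n)\le\alpha_u(c)$ give only the trivial inequality. So I instead argue directly as follows. Let $c_n\uparrow c$ with minimizers $(x_n,y_n)$. These lie in the bounded set $\{q_0x+y\le\alpha_u(c)\}$, so a subsequence converges to some $(\bar x,\bar y)$. By continuity of $u$, $u(\bar x,\bar y)=c>0$, so $(\bar x,\bar y)$ is interior by Assumption~\eqref{as:1}. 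Then $\alpha_u(c)\le q_0\bar x+\bar y=\lim\alpha_u(c_n)$.

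For right-continuity, take $c_n\downarrow c$. The minimizers for $c_n$ are bounded by $\alpha_u(c_1)$, and the same compactness argument gives $\liminf\alpha_u(c_n)\ge\alpha_u(c)$, which combined with upper semicontinuity gives convergence.

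\emph{Step 4: onto $(0,\infty)$.} For large values: as $c\to\infty$, any point with $q_0x+y\le M$ has bounded $u$, since $u$ is continuous on the compact triangle. Hence $\alpha_u(c)>M$ once $c$ exceeds $\max u$ on that triangle.

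For small values: as $c\to0^+$, fix a small point $(\epsilon,\epsilon)$. Since $u(\epsilon,\epsilon)>0$, for $c\le u(\epsilon,\epsilon)$ we get $\alpha_u(c)\le(q_0+1)\epsilon$. Thus $\alpha_u(c)\to0$.

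Continuity and the intermediate value theorem then give surjectivity, and strict monotonicity gives uniqueness of the level curve with $A_{\ell_{u,c}}(q_0)=t$. Assumption~\eqref{as:4} is not needed beyond what Proposition~\ref{prop:level-curve} already provides.
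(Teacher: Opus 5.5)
Your proof is correct, and its skeleton matches the paper's: attainment of the infimum, strict monotonicity by evaluating the lower level curve at the higher level's minimizer, continuity, and the limits $0$ and $\infty$ followed by the intermediate value theorem. Your Steps 2 and 4 are essentially the paper's arguments.

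The genuine difference is the continuity step. The paper fixes $c_-<c_0<c_+$ and traps all minimizers for $c\in[c_-,c_+]$ in a compact interval $K$. It then uses the implicit function theorem to get joint continuity of $(c,x)\mapsto\ell_{u,c}(x)$ on $[c_-,c_+]\times K$. Finally it bounds $|\alpha_u(c_n)-\alpha_u(c_0)|$ by $\sup_K|\ell_{u,c_n}-\ell_{u,c_0}|$, which vanishes by uniform continuity.

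You instead prove two semicontinuities. The upper bound comes from evaluating at the fixed minimizer $x_c$, which needs only pointwise continuity of $c\mapsto\ell_{u,c}(x)$. The lower bound comes from extracting a convergent subsequence of minimizers in the compact triangle $\{q_0x+y\le M\}\cap\R_+^2$. Continuity of $u$ on all of $\R_+^2$, together with $u=0$ on the axes, then places the limit point on the interior level curve $s_{u,c}$.

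This avoids the implicit function theorem and any uniform estimate. If you also replace the derivative argument in Step 1 by the paper's coercivity argument from Proposition~\ref{prop:level-curve}\eqref{curve:asymptote}, your proof uses only Assumptions~\eqref{as:1} and~\eqref{as:2}. The paper's route, in exchange, gives a direct sup-norm comparison of the transforms.

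One presentational fix: your sentence on left-continuity is garbled. For $c_n\uparrow c$, monotonicity and upper semicontinuity give the same inequality $\lim\alpha_u(c_n)\le\alpha_u(c)$, so your compactness argument is exactly what is needed there. For $c_n\downarrow c$, monotonicity already gives $\alpha_u(c_n)>\alpha_u(c)$. The compactness argument on that side is therefore redundant, and upper semicontinuity alone finishes.
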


The proof of Lemma~\ref{lem:cont-of-transform} is given in Appendix~\ref{app:tangent-transform}.

\begin{lemma}\label{lem:collapse}
Let \(u_1,u_2,v\) satisfy Assumptions~\eqref{as:1}--\eqref{as:4}. Then
\begin{equation}\label{eq:collapse_statement}
 \Lambda_{u_1} \boxplus \Lambda_{u_2}=\Lambda_v
\end{equation}
if and only if \(u_1,u_2\), and \(v\) are equivalent to the same homothetic preference.
\end{lemma}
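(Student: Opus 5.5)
The plan is to linearize infimal convolution with the tangent-intercept transform \(A_f(q)=\inf_{x>0}\{qx+f(x)\}\). Since \(-A_f(q)\) is the convex conjugate of \(f\) at \(-q\), the standard identity \((f\boxplus g)^*=f^*+g^*\) gives
\[
A_{f\boxplus g}(q)=A_f(q)+A_g(q),\qquad q>0.
\]
Conversely, a curve with the properties in Proposition~\ref{prop:level-curve} is strictly convex, and by Proposition~\ref{prop:level-curve}\eqref{curve:axis-tangent} its slopes range over all of \((-\infty,0)\). Biconjugation therefore recovers it from \(A_f\) on \((0,\infty)\):
\[
f(x)=\sup_{q>0}\{A_f(q)-qx\}.
\]
So two such curves coincide if and only if their transforms coincide. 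Lemma~\ref{lem:infimal-convolution-regularity} guarantees that every aggregate frontier is again such a curve, so this dictionary applies to it.

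Two further facts about dilations will be used. First, a direct substitution shows
\[
A_{k f(\cdot/k)}(q)=k\,A_f(q).
\]
Second, by Proposition~\ref{prop:homothetic-preference}(ii), the level curves of a homothetic \(u\) are exactly the dilations \(\{k\ell_{u,1}(\cdot/k)\}\). The scale \(k\) covers all of \((0,\infty)\): Lemma~\ref{lem:cont-of-transform} says \(c\mapsto A_{\ell_{u,c}}(q_0)=kA_{\ell_{u,1}}(q_0)\) is onto \((0,\infty)\).

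For the "if" direction, suppose \(u_1,u_2,v\) all represent the homothetic preference with base curve \(\ell\). Then each of \(\Lambda_{u_1},\Lambda_{u_2},\Lambda_v\) is the family \(\{k\ell(\cdot/k):k>0\}\), whose transforms are \(\{kA_\ell\}\). Inf-convolving the members with scales \(k_1\) and \(k_2\) gives a curve with transform \((k_1+k_2)A_\ell\). That curve is the member with scale \(k_1+k_2\), so it lies in \(\Lambda_v\). Conversely, every member \(k\) of \(\Lambda_v\) arises by taking \(k_1=k_2=k/2\). This establishes \eqref{eq:collapse_statement}.

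For the "only if" direction, fix \(q_0\) and write \(a_i(c,q)=A_{\ell_{u_i,c}}(q)\). By Lemma~\ref{lem:cont-of-transform}, I reparametrize the level curves of \(u_i\) by \(s=a_i(c,q_0)\in(0,\infty)\), and write \(f_i(s,q)\) for the transform at \(q\) of the curve with parameter \(s\). Do the same for \(v\): its level curve is uniquely determined by \(t=A(q_0)\), giving a function \(\Phi(q,t)\). The collapse hypothesis then reads
\[
f_1(s_1,q)+f_2(s_2,q)=\Phi(q,s_1+s_2)\qquad\text{for all } s_1,s_2,q>0.
\]
For each fixed \(q\) this is a Pexider equation. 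Moreover, \(f_i(\cdot,q)\) is increasing in \(s\), because \(s\) increases with \(c\), higher level curves lie above lower ones, and \(A\) is monotone in the curve. Monotone Pexider solutions are affine:
\[
f_i(s,q)=\beta(q)s+\gamma_i(q).
\]
To kill \(\gamma_i\), I would show \(f_i(s,q)\to0\) as \(s\to0\). Since \(s\to0\) means \(c\to0\), this amounts to \(A_{\ell_{u,c}}(q)\to0\) as \(c\to0\) for every \(q\), which is the "onto \((0,\infty)\)" conclusion of Lemma~\ref{lem:cont-of-transform} applied at an arbitrary \(q_0\). Hence \(f_i(s,\cdot)=s\,\beta(\cdot)\) for \(i=1,2\), and likewise \(\Phi(\cdot,t)=t\,\beta(\cdot)\).

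Let \(\ell\) be the curve whose transform is \(\beta\); this is simply the level curve with \(s=1\). By injectivity of the transform, every level curve of \(u_1\), \(u_2\) and \(v\) equals \(s\ell(\cdot/s)\) for some \(s\), and larger \(s\) corresponds to larger \(c\). Proposition~\ref{prop:homothetic-preference}(ii) then shows each utility is homothetic. All three share the level-curve family \(\{s\ell(\cdot/s)\}\), so Proposition~\ref{prop:equivalent-utility} makes them equivalent, i.e., they represent the same homothetic preference.

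The main obstacle is the Pexider step and its boundary normalization. It needs monotonicity of \(f_i\) in \(s\) and the vanishing limit \(A_{\ell_{u,c}}(q)\to0\) as \(c\to0\) uniformly in the choice of \(q\). The injectivity of \(A\) via biconjugation is standard but needs the slope-range property from Proposition~\ref{prop:level-curve}\eqref{curve:axis-tangent}.
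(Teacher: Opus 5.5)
Your proposal is correct and follows essentially the paper's route. The tangent-intercept transform turns infimal convolution into pointwise addition. Injectivity via the supporting-line representation, together with the \(q_0\)-intercept parametrization of Lemma~\ref{lem:cont-of-transform}, reduces the collapse condition to a one-variable additive functional equation. The resulting scaling \(A_{f_t}=tA_{f_1}\) yields a dilation family, and Propositions~\ref{prop:homothetic-preference} and~\ref{prop:equivalent-utility} finish the argument.

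The only difference lies in how the functional equation is solved. You solve the Pexider equation directly, using monotonicity in \(s\) and the limit \(A_{\ell_{u,c}}(q)\to0\) as \(c\downarrow0\); this limit is needed only pointwise in \(q\), not uniformly. The paper instead first shows \(\mathcal A_{u_1}=\mathcal A_{u_2}=\mathcal A_v\) by adding vanishing elements \(C_n\), and then solves Cauchy's equation \(B_s+B_t=B_{s+t}\) using continuity.
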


The tangent-intercept transform converts infimal convolution into pointwise addition. The proof shows that a one-dimensional family closed under this addition must be a dilation family; details are given in Appendix~\ref{app:collapse-proof}.

\begin{remark}\label{rmk:3}
The transform \(A_f(q)=\inf_x\{qx+f(x)\}\) can be interpreted as the minimum value of the linear cost \(qx+y\) required to reach the graph of \(f\). The identity \(A_{f\boxplus g}=A_f+A_g\) says that aggregation of agents by Minkowski summation becomes addition of these dual cost functions. Lemma~\ref{lem:collapse} therefore has a cost-minimization interpretation: the only way the dual family can be closed under addition while remaining one-dimensional is for all level curves to be homothetic. This is the demand-side counterpart of infimal-convolution constructions for combined market makers.
\end{remark}

Next, we characterize when a strong representative agent exists.

\begin{theorem}\label{thm:strong-rep}
Fix a CFMM trading function \(\varphi\) satisfying Assumptions~\eqref{as:1}--\eqref{as:4}. Let \(u_1,u_2,v\) satisfy Assumptions~\eqref{as:1}--\eqref{as:4}. Then the following are equivalent:
\begin{enumerate}[(i)]
 \item \(v\) is a strong representative agent for \(u_1,u_2\);
 \item \(\Lambda_{u_1} \boxplus \Lambda_{u_2}=\Lambda_v\);
 \item \(u_1,u_2\), and \(v\) are equivalent to the same homothetic preference.
\end{enumerate}
\end{theorem}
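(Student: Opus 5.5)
The plan is to take (ii)$\Leftrightarrow$(iii) directly from Lemma~\ref{lem:collapse}. It then suffices to prove (iii)$\Rightarrow$(i) and (i)$\Rightarrow$(ii).

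\emph{(iii)$\Rightarrow$(i).} Let $\mathbf R^{\mathsf{in}}$ be interior and let $\mathbf R^*\in\mathcal U(\mathbf R^{\mathsf{in}})$ be a unilateral equilibrium. As noted after the definition of Pareto optimality, $\mathbf R^*$ is interior.
\begin{itemize}
\item By Theorem~\ref{thm:1}, $p(\varphi,\mathbf r_0^*)=p(u_1,\mathbf r_1^*)=p(u_2,\mathbf r_2^*)=p_0^*$.
\item By Proposition~\ref{prop:homothetic-preference}~\eqref{homo:2}, there is a common strictly increasing $g$ with $p(u_i,x,y)=p(v,x,y)=g(y/x)$. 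Since $g$ is injective, $y_1^*/x_1^*=y_2^*/x_2^*$.
\item This common ratio is also the ratio of the aggregate $\mathbf r_A^*=\mathbf r_1^*+\mathbf r_2^*$. Hence $p(v,\mathbf r_A^*)=g(y_A^*/x_A^*)=p_0^*=p(\varphi,\mathbf r_0^*)$.
\end{itemize}
Now view $\mathbf h^*=\mathbf r_0^*-\mathbf r_0^{\mathsf{in}}$ as a trade by a single agent with utility $v$ and holding $\mathbf r_A^{\mathsf{in}}$. This trade is valid and interior on both sides, and it satisfies the price equality. By Proposition~\ref{prop:single-trade}~\eqref{trade:price-implies-optimal}, it is the unique utility-maximizing trade, so $v$ is a weak representative agent for every such pair. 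Hence $v$ is strong.

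\emph{(i)$\Rightarrow$(ii), first step: a pointwise price identity.} Take any interior $\mathbf r_1,\mathbf r_2$ with $p(u_1,\mathbf r_1)=p(u_2,\mathbf r_2)=p$. By Proposition~\ref{prop:level-curve}~\eqref{curve:axis-tangent},\eqref{curve:price}, choose $\mathbf r_0$ on any positive liquidity curve of $\varphi$ with CFMM price $p$. Set $\mathbf R^{\mathsf{in}}=\mathbf R^*=(\mathbf r_0,\mathbf r_1,\mathbf r_2)$. This state lies trivially in $\mathcal U(\mathbf R^{\mathsf{in}})$ and is a unilateral equilibrium by Theorem~\ref{thm:1}. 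Strong representation says the zero trade is optimal for $v$ from $(\mathbf r_0,\mathbf r_1+\mathbf r_2)$. Proposition~\ref{prop:single-trade}~\eqref{trade:zero} then gives $p(v,\mathbf r_1+\mathbf r_2)=p$.

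\emph{(i)$\Rightarrow$(ii), second step: each aggregate frontier is a level curve of $v$.} Fix $c_1,c_2>0$ and put $F=\ell_{u_1,c_1}\boxplus\ell_{u_2,c_2}$. By Lemma~\ref{lem:infimal-convolution-regularity}, each point $(x,F(x))$ is the sum of two interior holdings whose common price is $-F'(x)$. The identity above gives $p(v,x,F(x))=-F'(x)$. Then $G(x):=v(x,F(x))$ is $C^1$ with
\[
G'(x)=\partial_y v\,\bigl(p(v,x,F(x))+F'(x)\bigr)=0,
\]
so $G$ is constant. Thus $F^g$ is contained in a single level set of $v$. Both $F^g$ and $\ell_{v,c}^g$ are graphs over all of $\R_{++}$, so $F=\ell_{v,c}$. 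This shows $\Lambda_{u_1}\boxplus\Lambda_{u_2}\subseteq\Lambda_v$.

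\emph{(i)$\Rightarrow$(ii), third step: every level curve of $v$ is hit.} This surjectivity is the step I expect to need most care. Fix $q_0>0$ and use the identity $A_{f\boxplus g}=A_f+A_g$ from Remark~\ref{rmk:3}; I would verify it directly from the definitions. Then $A_F(q_0)=\alpha_{u_1}(c_1)+\alpha_{u_2}(c_2)$. By Lemma~\ref{lem:cont-of-transform}, each $\alpha_{u_i}$ is onto $(0,\infty)$, so these sums cover every $t>0$. Since $\alpha_v$ is injective, each level curve $\ell_{v,c}$ is the unique member of $\Lambda_v$ with transform value $\alpha_v(c)$. That curve therefore coincides with the aggregate frontier having the same value, which gives equality in (ii). Lemma~\ref{lem:collapse} then yields (iii), closing the cycle.
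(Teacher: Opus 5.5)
Your proof is correct, but it is organized differently from the paper's in two places. The paper proves (i)$\Leftrightarrow$(ii) directly and uses Lemma~\ref{lem:collapse} only for (ii)$\Leftrightarrow$(iii). You instead run the cycle (i)$\Rightarrow$(ii)$\Rightarrow$(iii)$\Rightarrow$(i).

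\emph{The step into (i).} The paper argues from (ii). With $c_i=u_i(\mathbf r_i^*)$, Lemma~\ref{lem:infimal-convolution-regularity} identifies $\mathbf r_1^*+\mathbf r_2^*$ as the minimizing split of $\ell_{u_1,c_1}\boxplus\ell_{u_2,c_2}$, whose slope there is $-p^*$. Since that frontier is a level curve of $v$, this gives $p(v,\mathbf r_A^*)=p^*$. You argue from (iii) via the common ratio map $g$, as in Proposition~\ref{prop:unique-price} and Example~\ref{ex:4}. For this you implicitly use that mutually equivalent utilities share level curves and hence, by Proposition~\ref{prop:level-curve}, share price functions. Your version is more elementary and makes the aggregation mechanism explicit. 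The cost is that (ii)$\Rightarrow$(i) is then obtained only through the hard (dilation) direction of Lemma~\ref{lem:collapse}. The paper's route shows that strong representation is equivalent to collapse of the frontier family independently of the homothetic characterization.

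\emph{The reverse inclusion} $\Lambda_v\subseteq\Lambda_{u_1}\boxplus\Lambda_{u_2}$. The paper's argument is constructive. For any $\mathbf z\in\R_{++}^2$ it maximizes $u_1(\mathbf r_1)u_2(\mathbf z-\mathbf r_1)$ to obtain a price-equalizing split, so some aggregate frontier passes through $\mathbf z$; the no-crossing property of $v$'s level curves then finishes. You use a parametrization argument instead. Additivity of the tangent-intercept transform and surjectivity of $\alpha_{u_1},\alpha_{u_2}$ show that the $q_0$-intercepts of aggregate frontiers exhaust $(0,\infty)$. Injectivity of $\alpha_v$, combined with the forward inclusion, then matches each $\ell_{v,c}$ to one of these frontiers. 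This is rigorous and reuses the machinery of Lemma~\ref{lem:cont-of-transform}, at the price of depending on that lemma rather than on a pointwise tangency construction.

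Your first two steps of (i)$\Rightarrow$(ii), the pointwise price identity and the derivative argument that each frontier is a level curve of $v$, coincide with the paper's.
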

\begin{proof}
Lemma~\ref{lem:collapse} gives the equivalence of (ii) and (iii). We prove
the equivalence between (i) and (ii).

\proofpart{(ii) \(\Rightarrow\) (i)}
Let \(\mathbf R^*\in\mathcal U(\mathbf R^{\mathsf{in}})\) be a unilateral
equilibrium, set \(c_i=u_i(\mathbf r_i^*)\), and write
\[
\mathbf r_A^*=\mathbf r_1^*+\mathbf r_2^*,
\qquad
F=\ell_{u_1,c_1}\boxplus\ell_{u_2,c_2}.
\]
Theorem~\ref{thm:1} gives a common price \(p^*\).
Lemma~\ref{lem:infimal-convolution-regularity} then shows that the decomposition
\(\mathbf r_A^*=\mathbf r_1^*+\mathbf r_2^*\) is the unique minimizing split defining \(F\), and
that \(F\) has slope \(-p^*\) at the $x$-coordinate of \(\mathbf r_A^*\). Under (ii), \(F\) is a
level curve of \(v\), so
\(p(v,\mathbf r_A^*)=p^*=p(\varphi,\mathbf r_0^*)\). The representative
agent's level curve is tangent to her feasible frontier at
\(\mathbf r_A^*\). Proposition~\ref{prop:single-trade} therefore makes the
aggregate trade utility-maximizing, and \(v\) is a strong representative
agent.

\proofpart{(i) \(\Rightarrow\) (ii)}
Fix \(c_1,c_2>0\), set
\(F=\ell_{u_1,c_1}\boxplus\ell_{u_2,c_2}\), and take
\(\mathbf z=(x,F(x))\) on its graph. By
Lemma~\ref{lem:infimal-convolution-regularity}, the unique minimizing split
\(\mathbf z=\mathbf r_1+\mathbf r_2\) satisfies
\(p(u_1,\mathbf r_1)=p(u_2,\mathbf r_2)=:p\). Choose a CFMM inventory
\(\mathbf r_0\) on a liquidity curve with
\(p(\varphi,\mathbf r_0)=p\). Then
\((\mathbf r_0,\mathbf r_1,\mathbf r_2)\) is a unilateral equilibrium.
Taking this state as the initial state, strong representation makes the zero
aggregate trade optimal for \(v\). Proposition~\ref{prop:single-trade} then gives
\(p(v,\mathbf z)=p=-F'(x)\). Hence the derivative of
\(x\mapsto v(x,F(x))\) is zero, and \(F\) is a level curve of \(v\). This
proves
\(\Lambda_{u_1}\boxplus\Lambda_{u_2}\subseteq\Lambda_v\).

For the reverse inclusion, fix \(\mathbf z\in\R_{++}^2\). The continuous
function
\(\mathbf r_1\mapsto u_1(\mathbf r_1)u_2(\mathbf z-\mathbf r_1)\)
vanishes on the boundary of the allocation rectangle and is positive in its
interior, so it has an interior maximizer. At that maximizer, the two
indifference prices agree. By Lemma~\ref{lem:infimal-convolution-regularity},
equality of the two slopes identifies this allocation as the unique
minimizing split in the relevant infimal convolution. Hence the
corresponding aggregate frontier passes through \(\mathbf z\), and the
preceding paragraph shows that it is a level curve of \(v\). Now fix any
positive level curve of \(v\) and choose \(\mathbf z\) on it. The aggregate
frontier just constructed is also a level curve of \(v\) through
\(\mathbf z\). Since distinct level curves of a strictly increasing utility
cannot intersect, the two curves coincide. Thus every positive level curve
of \(v\) belongs to the aggregate family, proving the reverse inclusion.
\end{proof}

The economic force behind Theorem~\ref{thm:strong-rep} is the absence of scale-dependent composition effects. When all agents share one homothetic preference, a common marginal price determines the same desired $Y/X$ ratio for every agent, regardless of the size of the agent's position. Aggregate holdings therefore inherit that ratio. With heterogeneous ratio-price maps, the distribution of holdings across agents remains an economically relevant state variable, so no state-independent utility can summarize all aggregate responses.

\begin{remark}
Theorem~\ref{thm:strong-rep} is stated for representative agents, but the same geometric obstruction appears when liquidity sources are combined. A one-time pooled frontier can be formed by the corresponding Minkowski sum or infimal convolution. The stronger requirement is that the entire two-parameter family of pooled frontiers again be representable as the one-parameter level-curve family of a single CFMM, so that pooling liquidity sources is compatible with ordinary state-independent liquidity provision to the pooled market. This is exactly the same collapse condition as $\Lambda_{u_1}\boxplus\Lambda_{u_2}=\Lambda_v$. Hence, under the same regularity assumptions, such strong pooling is possible only when the component trading functions represent the same homothetic preference. Heterogeneous liquidity curves can still be combined locally or routed across, but the combined object need not be equivalent to a single CFMM with a state-independent liquidity family.
\end{remark}

\begin{remark}\label{rmk:4}
The obstruction can be seen directly. Suppose two aggregate frontiers \(F,G\in\Lambda_{u_1} \boxplus \Lambda_{u_2}\) cross at the same aggregate holding \(\mathbf z\), but have different tangent prices there. The split generating \(F\) can be made into an initial market state that is already a unilateral equilibrium by choosing a CFMM inventory with the same price. Strong representation would force \(p(v,\mathbf z)\) to equal that tangent price. Doing the same with the split generating \(G\) would force \(p(v,\mathbf z)\) to equal the other tangent price. Since a differentiable utility has only one indifference price at \(\mathbf z\), this is impossible. Thus non-collapsing geometry is exactly the failure of state-independent representative behavior.
\end{remark}

\begin{example}\label{ex:4}
Suppose \(u_i(x,y)=\log x+\log y\) for \(i\in[n]\). Then
\(p(u_i,x,y)=y/x\). If the agents have a common unilateral equilibrium price
\(p^*\), then \(y_i^*/x_i^*=p^*\) for every \(i\in[n]\). Summing over the
agents gives
\[
\frac{\sum_{i=1}^n y_i^*}{\sum_{i=1}^n x_i^*}=p^*.
\]
Hence the aggregate utility \(u_A(x,y)=\log x+\log y\) has the same
indifference price at the aggregate allocation. This is the classical
constant-product case of the common homothetic-preference result.
\end{example}

\section{Reachability and trading dynamics}\label{sec:tr}
The preceding sections characterize equilibrium states. We now ask whether unilateral equilibria in $\mathcal{U}(\mathbf{R}^{\mathsf{in}})$ can be reached through a trading sequence, and how trading order affects welfare. Throughout this section, we continue to focus on the two-agent setting.

\subsection{Trading sequences and reachability}

The trading sequence can be characterized as a sequence of market states, defined as follows.

\begin{definition}\label{def:7}
A sequence of market states $(\mathbf R^t)_{t=0}^T$ with $T\in\mathbb N\cup\{\infty\}$ is called a \emph{trading sequence} if for every $t<T$ there exists a trade $\mathbf h^t \in \mathbb R^2$, executed by one of the agents $i\in\{1,2\}$, such that
\[
\mathbf r^{t+1}_0 = \mathbf r^{t}_0 + \mathbf h^t,\quad
\mathbf r^{t+1}_i = \mathbf r^{t}_i - \mathbf h^t,\quad
\mathbf r^{t+1}_j = \mathbf r^{t}_j \ \text{for } j \neq i.
\]
We abuse language slightly and say that the trading sequence consists of trades $\mathbf h^0, \mathbf h^1, \dots, \mathbf h^{T-1}$.
 If $T=\infty$ and the limit exists, we write $\mathbf R^T = \mathbf R^\infty = \lim_{t\to\infty}\mathbf R^t$.
\end{definition}

The index \(t\) counts individual trades. In an alternating sequence, a
\emph{round} consists of two consecutive trades, one by each agent.

The notion of a trading sequence allows us to characterize which market states can arise from a given initial market state.
\begin{definition}
Let $\mathbf R^{\mathsf{in}}$ be an initial market state.
\begin{enumerate}[(i)]
\item A market state $\mathbf R$ is said to be \emph{validly reachable} from $\mathbf R^{\mathsf{in}}$
if there exists a trading sequence of valid trades $(\mathbf R^t)_{t=0}^T$
with $\mathbf R^0=\mathbf R^{\mathsf{in}}$ and $\mathbf R^T=\mathbf R$ where $T \in \mathbb N \cup \{\infty\}$. The collection of validly reachable states from $\mathbf R^{\mathsf{in}}$ is denoted by $\mathcal R^{\mathsf{val}}(\mathbf R^{\mathsf{in}})$.

\item A market state $\mathbf R$ is said to be \emph{improvement reachable} from $\mathbf R^{\mathsf{in}}$
if there exists a sequence of utility-improving trades $(\mathbf R^t)_{t=0}^T$
with $\mathbf R^0=\mathbf R^{\mathsf{in}}$ and $\mathbf R^T=\mathbf R$ where $T \in \mathbb N \cup \{\infty\}$. The collection of improvement reachable states from $\mathbf R^{\mathsf{in}}$ is denoted by $\mathcal R^{\mathsf{imp}}(\mathbf R^{\mathsf{in}})$.

\item A market state $\mathbf R$ is said to be \emph{utility-maximization reachable} from $\mathbf R^{\mathsf{in}}$
if there exists a sequence of utility-maximizing trades $(\mathbf R^t)_{t=0}^T$
with $\mathbf R^0=\mathbf R^{\mathsf{in}}$ and $\mathbf R^T=\mathbf R$ where $T \in \mathbb N \cup \{\infty\}$. The collection of utility-maximization reachable states from $\mathbf R^{\mathsf{in}}$ is denoted by $\mathcal R^{\mathsf{max}}(\mathbf R^{\mathsf{in}})$.
\end{enumerate}
\end{definition}

The distinction between these reachability notions has an economic interpretation. Valid reachability describes what the mechanism permits mechanically; improvement reachability imposes individual rationality along the path; utility-maximization reachability imposes myopic optimality at every trade. The three notions therefore separate technological feasibility from behavioral restrictions. We first record some trivial inclusions without proof.
\begin{proposition}
Let $\varphi,u_1,u_2$ be continuous. For any initial state $\mathbf R^{\mathsf{in}}$, we have
\begin{enumerate}[(i)]
 \item $\mathcal R^{\mathsf{max}}(\mathbf R^{\mathsf{in}}) \subseteq \mathcal R^{\mathsf{imp}}(\mathbf R^{\mathsf{in}}) \subseteq \mathcal R^{\mathsf{val}}(\mathbf R^{\mathsf{in}})$;
 \item $\mathcal R^{\mathsf{imp}}(\mathbf R^{\mathsf{in}}) \subseteq \mathcal U(\mathbf R^{\mathsf{in}})$;
 \item $\mathcal R^{\mathsf{val}}(\mathbf R^{\mathsf{in}})\subseteq \mathcal F(\mathbf R^{\mathsf{in}})$.
\end{enumerate}

\end{proposition}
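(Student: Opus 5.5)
The three inclusions follow directly from the definitions. I would prove each in turn, taking care only with the case $T=\infty$, where the reached state is defined as a limit.

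\emph{Inclusion (i).} A utility-maximizing trade is by definition a valid trade. It is also utility-improving, because the zero trade is valid and the maximizing trade does at least as well as it. Similarly, every utility-improving trade is valid by definition. So any sequence of utility-maximizing trades is a sequence of utility-improving trades, and any sequence of utility-improving trades is a sequence of valid trades. The same reaching sequence, including its limit when $T=\infty$, therefore witnesses membership in the larger set. Note that continuity is not needed for existence of a maximizer here: the definition of $\mathcal R^{\mathsf{max}}$ presupposes that the maximizing trades exist.

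\emph{Inclusion (iii).} I would argue by induction on $t$ that every $\mathbf R^t$ in a valid trading sequence lies in $\mathcal F(\mathbf R^{\mathsf{in}})$. A trade $\mathbf h^t$ adds to $\mathbf r_0$ exactly what it subtracts from $\mathbf r_i$, so total inventory is conserved. Validity gives $\varphi(\mathbf r_0^{t+1})=\varphi(\mathbf r_0^t)$, so the invariant is conserved. Nonnegativity of the holdings is also part of validity. For $T=\infty$, the limit has to be handled separately. The conservation equation is preserved under limits. Nonnegativity passes to the limit because $\R_+^2$ is closed. The invariant passes to the limit by continuity of $\varphi$, since $\varphi(\mathbf r_0^t)=\varphi(\mathbf r_0^{\mathsf{in}})$ for all $t$.

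\emph{Inclusion (ii).} Along a sequence of utility-improving trades, the active agent's utility weakly increases at each step and the passive agent's holding is unchanged. By induction, $u_i(\mathbf r_i^t)\ge u_i(\mathbf r_i^{\mathsf{in}})$ for both agents and all $t$. Combined with (iii), each $\mathbf R^t$ lies in $\mathcal U(\mathbf R^{\mathsf{in}})$. For the limit, continuity of $u_i$ preserves the weak inequalities, and (iii) already gives feasibility of the limit. Equivalently, one can use that $\mathcal U(\mathbf R^{\mathsf{in}})$ is closed, which follows from Lemma~\ref{lem:compact-feasible-improvement}.

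The only step requiring any care is passing the invariant and utility constraints to the limit when $T=\infty$. This is exactly where the continuity hypotheses on $\varphi,u_1,u_2$ are used, and where closedness of $\mathcal U(\mathbf R^{\mathsf{in}})$ from Lemma~\ref{lem:compact-feasible-improvement} can be invoked.
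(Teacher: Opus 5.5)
Your proposal is correct and takes the same approach as the paper. The paper records these inclusions without proof, noting that they follow directly from the definitions and that continuity of $\varphi,u_1,u_2$ is needed only to pass the invariant and utility constraints to the limit when $T=\infty$; your induction-plus-continuity argument spells out exactly that reasoning.
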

The continuity assumption is needed only because the reachability definitions allow infinite trading sequences. For finite sequences, the inclusions follow directly from the definitions.

\begin{example}\label{ex:5}
We display some examples demonstrating the strictness of inclusions.
\begin{enumerate}[(i)]
\item We first exhibit a market state in $\mathcal R^{\mathsf{val}}(\mathbf R^{\mathsf{in}}) \setminus \mathcal U (\mathbf R^{\mathsf{in}}) $ with equal marginal exchange rates.
Let $\varphi = u_1 = u_2$ be the function $(x,y)\mapsto\log x + \log y$ and
\[
\mathbf R^{\mathsf{in}} = \big((1,1),(9,1),(1,9)\big).
\]
There exists a trading sequence starting from $\mathbf R^0 = \mathbf R^{\mathsf{in}}$ in which agent~1 undertakes trades that reduce her own utility to benefit agent~2. The resulting states remain validly reachable and thus belong to $\mathcal R^\mathsf{val}(\mathbf R^0)$ but violate individual rationality. In particular,
\[
\mathbf R^3 \in \mathcal R^{\mathsf{val}}(\mathbf R^0)
\quad \text{but} \quad
\mathbf R^3 \notin \mathcal U(\mathbf R^0),
\]
where the trades are given in Table~\ref{tab:reachable-not-U}.

\begin{table}[ht]
\centering
\begin{tabular}{c c c c}
\toprule
$t$ & $\mathbf r^t_0$ & $\mathbf r^t_1$ & $\mathbf r^t_2$ \\
\midrule
$0$
&
$(1.000,1.000)$
&
$(9.000,1.000)$
&
$(1.000,9.000)$
\\
$1$
&
$(9.900,0.101)$
&
$(0.100,1.899)$
&
$(1.000,9.000)$
\\
$2$
&
$(2.245,0.445)$
&
$(0.100,1.899)$
&
$(8.655,8.655)$
\\
$3$
&
$(1.000,1.000)$
&
$(1.345,1.345)$
&
$(8.655,8.655)$
\\
\bottomrule
\end{tabular}
\caption{A trading sequence that reaches a market state outside $\mathcal U(\mathbf R^{\mathsf{in}})$.}
\label{tab:reachable-not-U}
\end{table}
This example illustrates why we require unilateral equilibria to lie in $\mathcal U(\mathbf R^{\mathsf{in}})$: if agents can take actions that reduce their own utility, the inability to improve utility afterward is not a meaningful stopping criterion.
\item \label{ex5pt2} $\mathcal F(\mathbf R^{\mathsf{in}}) \not \subseteq \mathcal R^{\mathsf{val}}(\mathbf R^{\mathsf{in}})$.

Let $\varphi=u_1=u_2$ be $(x,y)\mapsto(xy)^{1/3}$, and take
\[
\mathbf R^0=\big((100,100),(0,0),(10,10)\big),\qquad
\mathbf R'=\big((100,100),(5,5),(5,5)\big).
\]
The target $\mathbf R'$ is feasible. However, agent~1 starts with zero holdings and cannot execute any nonzero valid trade: from $\mathbf r_1=(0,0)$, the constraint $\mathbf r_1-\mathbf h\geq0$ requires $\mathbf h\leq0$ coordinatewise, whereas strict monotonicity of $\varphi$ rules out a nonzero invariant-preserving trade that weakly reduces both CFMM reserves. Trades by agent~2 do not change agent~1's holdings. Hence agent~1 remains at $(0,0)$ after every finite trading sequence and also in the limit of every infinite trading sequence, so $\mathbf R'$ is not validly reachable.

\end{enumerate}

\end{example}

Part \eqref{ex5pt2} of Example~\ref{ex:5} relies on using non-interior market states. The following theorem holds if we restrict to interior market states.

\begin{lemma}
\label{lem:local-valid-reachability}
Let \(\varphi\) satisfy Assumptions~\eqref{as:1}--\eqref{as:3}.
Fix \(c>0\) and a total inventory \(\overline{\mathbf r}\), and write the
corresponding liquidity curve as
\[
\mathbf z(x)=(x,\ell_{\varphi,c}(x)),\qquad x>0.
\]
We use coordinates \((x,\mathbf a)\) for market states on this curve, with
\[
\mathbf r_0=\mathbf z(x),\qquad
\mathbf r_1=\mathbf a,\qquad
\mathbf r_2=\overline{\mathbf r}-\mathbf z(x)-\mathbf a.
\]
Thus, \(x\) is the CFMM's inventory of asset \(X\), \(\mathbf a\) is
agent~1's inventory, and agent~2's inventory is determined by total
inventory. Define
\[
\Omega
:=
\left\{
(x,\mathbf a):
x>0,\
\mathbf a\in\R_{++}^2,\
\overline{\mathbf r}-\mathbf z(x)-\mathbf a\in\R_{++}^2
\right\}.
\]
For every \((x,\mathbf a)\in\Omega\), there is a neighborhood \(V\) in
\(\Omega\) such that every point of \(V\) is reachable from
\((x,\mathbf a)\) by finitely many valid trades whose intermediate states
remain in \(\Omega\).
\end{lemma}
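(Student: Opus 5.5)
The state space $\Omega$ is an open subset of $\R^3$, with coordinates $(x,\mathbf a)=(x,a_x,a_y)$. A single valid trade by agent~1 moves the CFMM from $\mathbf z(x)$ to $\mathbf z(x')$ and changes $\mathbf a$ by $-(\mathbf z(x')-\mathbf z(x))$, leaving agent~2 fixed. A single valid trade by agent~2 moves $x$ to $x'$ with $\mathbf a$ fixed. The plan is to compose two such moves so that the net effect is an arbitrary small displacement in all three coordinates.

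Given a target $(x',\mathbf a')$ near $(x,\mathbf a)$, I would first let agent~1 trade the CFMM from $x$ to some intermediate $s$. This sends $\mathbf a$ to $\mathbf a-\mathbf z(s)+\mathbf z(x)$. Next, agent~2 trades from $s$ to $x'$, which leaves agent~1 unchanged. The composite reaches $(x',\mathbf a-\mathbf z(s)+\mathbf z(x))$. This only reaches targets with $\mathbf a'-\mathbf a$ on the translated curve $\mathbf z(x)-\mathbf z(s)$, which is one-dimensional. I would therefore use a sequence of three trades: agent~1 moves $x\to s$, agent~2 moves $s\to t$, and agent~1 moves $t\to x'$. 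Agent~1's total change is then
\[
\mathbf a'-\mathbf a=\mathbf z(x)-\mathbf z(s)+\mathbf z(t)-\mathbf z(x').
\]
So it suffices to show that the map $(s,t)\mapsto \mathbf z(t)-\mathbf z(s)$, for $s,t$ near $x$, covers a neighborhood of $\mathbf 0$ in $\R^2$, uniformly for $x'$ near $x$. Concretely, I need to solve $\mathbf z(t)-\mathbf z(s)=\mathbf d:=\mathbf a'-\mathbf a+\mathbf z(x')-\mathbf z(x)$, where $\mathbf d$ is small.

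The main obstacle is that the Jacobian of $(s,t)\mapsto \mathbf z(t)-\mathbf z(s)$ at $s=t=x$ has columns $-\mathbf z'(x)$ and $\mathbf z'(x)$. These are parallel, so the inverse function theorem fails at the diagonal. I would instead base the construction at a nearby off-diagonal point: fix a small $\delta>0$ and work around $s_0=x-\delta$, $t_0=x+\delta$. The Jacobian there has columns $-\mathbf z'(x-\delta)$ and $\mathbf z'(x+\delta)$. Since $\mathbf z'(u)=(1,-p(\varphi,\mathbf z(u)))$ by Proposition~\ref{prop:level-curve}\eqref{curve:price}, and strict convexity of $\ell_{\varphi,c}$ gives different slopes at $x\pm\delta$, these columns are independent. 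The base value is $\mathbf z(x+\delta)-\mathbf z(x-\delta)\neq 0$, though, so I must cancel it. To do so I would add a fourth trade and use two pairs: $(s_1,t_1)$ near $(x-\delta,x+\delta)$ and $(s_2,t_2)$ near $(x+\delta,x-\delta)$, with sum $[\mathbf z(t_1)-\mathbf z(s_1)]+[\mathbf z(t_2)-\mathbf z(s_2)]$. This sum vanishes at the base point, and its derivative in $(s_1,t_1)$ alone is already invertible. The realizing sequence alternates agent~1 and agent~2: agent~1 moves $x\to s_1$, agent~2 moves $s_1\to t_1$, agent~1 moves $t_1\to s_2$, agent~2 moves $s_2\to t_2$, and agent~1 moves $t_2\to x'$. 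Agent~1's total change is exactly $\mathbf z(x)-\mathbf z(s_1)+\mathbf z(t_1)-\mathbf z(s_2)+\mathbf z(t_2)-\mathbf z(x')$, as required. Here $\mathbf z$ is $C^1$ by Proposition~\ref{prop:level-curve}\eqref{curve:diff}. Applying the implicit/inverse function theorem in $(s_1,t_1)$, with $(s_2,t_2)=(x+\delta,x-\delta)$ fixed and $\mathbf d$ as a parameter, yields solutions for all sufficiently small $\mathbf d$ that depend continuously on $\mathbf d$ and stay within $O(\delta)$ of the base point.

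Finally, I would check that the intermediate states lie in $\Omega$. Every intermediate CFMM coordinate lies within about $2\delta$ of $x$, so it stays positive. Each intermediate holding of agent~1 or agent~2 differs from the initial one by at most a bounded multiple of $\delta+|\mathbf d|$, using the local Lipschitz property of $\mathbf z$. Since $(x,\mathbf a)$ lies in the open set $\Omega$, all holdings have a positive margin $\eta$. Choosing $\delta$ small relative to $\eta$, and then $V$ small enough that $|\mathbf d|$ falls within the inverse-function neighborhood, keeps every intermediate state interior. Each step is a valid trade: the CFMM stays on level $c$, and all holdings remain in $\R_{++}^2$. Since each target in $V$ is reached this way, the neighborhood $V$ has the required property.
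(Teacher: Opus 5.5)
Your proof is correct, and it rests on the same mechanism as the paper's. The chord directions $\mathbf z'$ at two distinct points of the strictly convex liquidity curve are linearly independent. Basing the loop at an offset point (your $\delta$, the paper's $\eta$) therefore removes the degeneracy that makes the inverse function theorem fail on the diagonal.

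The difference is in how the three degrees of freedom are handled. The paper separates them:
\begin{itemize}
\item agent~2 alone repositions the CFMM;
\item a two-trade loop $x_0\to x_0+\delta\to x_0$ gives agent~1 the explicit displacement $\mathbf z(x_0+\delta)-\mathbf z(x_0)$, with arbitrary small $X$-component;
\item a four-trade loop $x_0\to x_0+\eta\to x_0+\eta+b\to x_0+b\to x_0$ gives a pure $Y$-displacement via a one-variable inverse function theorem in $b$, using $m'(0)=\ell'(x_0)-\ell'(x_0+\eta)\neq 0$.
\end{itemize}
You instead fold everything into a single alternating five-trade sequence. You solve for both coordinates at once with a two-dimensional inverse function theorem in $(s_1,t_1)$, keeping the cancelling pair $(x+\delta,x-\delta)$ fixed and absorbing the CFMM repositioning into agent~1's last trade.

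The paper's route is more elementary: the horizontal part needs no implicit-function argument, and only a scalar derivative has to be checked. Yours uses fewer trades and a single application of the inverse function theorem. It also makes the quantifier order explicit: first $\delta$ is chosen against the positivity margin, then $V$ is chosen inside the $\delta$-dependent inverse-function neighborhood so that $(s_1,t_1)$ stays within $\delta$ of the base point. The paper compresses this step into ``after shrinking the neighborhood if necessary.''

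You were right to abandon the three-trade attempt. It fails outright, not merely for lack of an invertible Jacobian: $\mathbf z(t)-\mathbf z(s)$ never equals $(0,v)$ with $v\neq 0$.
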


The proof of Lemma~\ref{lem:local-valid-reachability} is given in
Appendix~\ref{app:local-reachability}.

\begin{theorem}
\label{thm:interior-feasible-reachable}
Let $\varphi$ satisfy Assumptions~\eqref{as:1}--\eqref{as:3}, and fix an interior initial market state $\mathbf R^{\mathsf{in}}$. If
 $\mathbf R$ is an interior feasible state,
then $\mathbf R$ is reachable from $\mathbf R^{\mathsf{in}}$ by a finite
valid trading sequence. Moreover, the trading sequence can be chosen so
that all intermediate market states are interior.
\end{theorem}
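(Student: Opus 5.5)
The plan is to combine the local reachability of Lemma~\ref{lem:local-valid-reachability} with a connectedness argument on the open set $\Omega$. Set $c=\varphi(\mathbf r_0^{\mathsf{in}})>0$ (positive because $\mathbf r_0^{\mathsf{in}}$ is interior and $\varphi$ is strictly increasing with $\varphi=0$ on the axes). Let $\overline{\mathbf r}$ be the total inventory of $\mathbf R^{\mathsf{in}}$. Every interior feasible state $\mathbf R$ has $\varphi(\mathbf r_0)=c$ and the same total inventory, so $\mathbf r_0=\mathbf z(x)$ for a unique $x>0$ by Proposition~\ref{prop:level-curve}. Interior feasible states are therefore exactly the points of $\Omega$ in the coordinates of Lemma~\ref{lem:local-valid-reachability}. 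Both $\mathbf R^{\mathsf{in}}$ and $\mathbf R$ correspond to points of $\Omega$, and it suffices to show that any point of $\Omega$ can be reached from any other by finitely many valid trades through $\Omega$.

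First, I show that $\Omega$ is open and path-connected in $\R^3$. Openness follows from continuity of $\ell_{\varphi,c}$. For connectedness, I describe $\Omega$ as a fibration over $x$. Given $x$, the fiber is the set of $\mathbf a$ with $0<\mathbf a<\overline{\mathbf r}-\mathbf z(x)$ coordinatewise. This is a nonempty open rectangle exactly when $x\in I:=\{x>0:\mathbf z(x)<\overline{\mathbf r}\}$. Because $\ell_{\varphi,c}$ is continuous and strictly decreasing, $I=\{x:\ x<\bar x,\ \ell_{\varphi,c}(x)<\bar y\}$ is an open interval. The rectangles vary continuously with $x$: the path $x\mapsto \tfrac12(\overline{\mathbf r}-\mathbf z(x))$ is a continuous section, and each fiber is convex. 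Joining two points therefore goes as follows: move within the first fiber straight to the section, travel along the section, then move straight within the second fiber. Hence $\Omega$ is path-connected, and in particular connected.

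Second, I run the standard clopen argument. Fix $\omega_0\in\Omega$ corresponding to $\mathbf R^{\mathsf{in}}$. Let $S\subseteq\Omega$ be the set of points reachable from $\omega_0$ by finitely many valid trades with all intermediate states in $\Omega$. By Lemma~\ref{lem:local-valid-reachability}, $\omega_0$ has a neighborhood contained in $S$, so $S$ is nonempty, and every point of $S$ has a neighborhood in $S$ (concatenate sequences), so $S$ is open. For closedness in $\Omega$, take $\omega\in\Omega\cap\overline S$. Lemma~\ref{lem:local-valid-reachability} gives a neighborhood $V$ of $\omega$ all of whose points are reachable from $\omega$, but that is the wrong direction. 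I need points near $\omega$ from which $\omega$ is reachable.

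To get this, I use reversibility of valid trades. If $\mathbf h$ takes state $\mathbf R$ to $\mathbf R'$ via agent $i$, then $-\mathbf h$ by the same agent is valid at $\mathbf R'$ and returns to $\mathbf R$: the CFMM inventory $\mathbf r_0'-\mathbf h=\mathbf r_0$ lies on the same level, and the agent's holding $\mathbf r_i'+\mathbf h=\mathbf r_i\ge0$. So reachability within $\Omega$ is symmetric. Reversing the finite sequence from $\omega$ to $\omega'$, for any $\omega'\in V\cap S$, gives a path from $\omega'$ back to $\omega$, so $\omega\in S$. Thus $S$ is clopen and nonempty in the connected set $\Omega$, hence $S=\Omega$. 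Interior intermediate states are automatic since all of them lie in $\Omega$.

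I expect the only real care to be needed in the connectedness of $\Omega$. Specifically, I must verify that $I$ is an interval and that the fiber rectangles vary continuously so the section argument works. Everything else reduces to Lemma~\ref{lem:local-valid-reachability} and reversibility.
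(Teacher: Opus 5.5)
Your proposal is correct and matches the paper's proof. You identify interior feasible states with $\Omega$, show $\Omega$ is path connected via the midpoint section $x\mapsto\frac12(\overline{\mathbf r}-\mathbf z(x))$ over an interval of admissible $x$, and then close with Lemma~\ref{lem:local-valid-reachability}, reversibility of valid trades, and connectedness. The only difference is packaging: the paper shows that the equivalence classes of the reachability relation (symmetric by reversibility) are open, while you show the reachable set $S$ is clopen, using reversibility for closedness.
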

\begin{proof}
Let \(c=\varphi(\mathbf r_0^{\mathsf{in}})\) and
\(\overline{\mathbf r}
=\mathbf r_0^{\mathsf{in}}+\mathbf r_1^{\mathsf{in}}
+\mathbf r_2^{\mathsf{in}}\). Write the liquidity curve as
\(\mathbf z(x)=(x,\ell_{\varphi,c}(x))\). By the coordinate identification
in Lemma~\ref{lem:local-valid-reachability}, the interior feasible states
with this invariant and total inventory are exactly the points
\((x,\mathbf a)\in\Omega\).

The set \(\Omega\) is path connected. For each admissible \(x\), the possible
holdings of agent~1 form the rectangle
\(0<\mathbf a<\overline{\mathbf r}-\mathbf z(x)\) coordinatewise, and the
admissible values of \(x\) form an interval. Any point can first be joined
within its rectangle to
\(\frac12(\overline{\mathbf r}-\mathbf z(x))\), and these midpoint holdings
can then be joined as \(x\) varies.

Declare two points of \(\Omega\) equivalent if one is reachable from the
other by a finite valid trading sequence whose intermediate states remain in
\(\Omega\). Valid trades are reversible, so this is an equivalence relation.
Lemma~\ref{lem:local-valid-reachability} makes every equivalence class open;
its complement is a union of other open classes and is therefore open.
Connectedness of \(\Omega\) implies that there is only one equivalence
class. The initial and target states are consequently connected by a finite
valid trading sequence with interior intermediate states.
\end{proof}

We do not know whether all interior market states in $\mathcal U(\mathbf R^{\mathsf{in}})$ are improvement reachable. The next result shows that, if the market starts with the Walrasian equilibrium price, then
the Walrasian equilibrium allocation is validly reachable.

\begin{corollary}\label{cor:classical-eq}
Let \(\varphi,u_1,u_2\) satisfy
Assumptions~\eqref{as:1}--\eqref{as:4}. Let
\[
\mathbf R^{\mathsf{in}}
=
(\mathbf r_0^{\mathsf{in}},\mathbf r_1^{\mathsf{in}},
 \mathbf r_2^{\mathsf{in}})
\]
be an interior initial market state, and suppose that
\((\mathbf r_1^W,\mathbf r_2^W,\mathbf q^W)\) is a Walrasian equilibrium of
the pure exchange economy with initial endowments
\((\mathbf r_1^{\mathsf{in}},\mathbf r_2^{\mathsf{in}})\). Write
\(p^W=q_1^W/q_2^W\).

Assume that the initial CFMM price equals the Walrasian price:
\[
p(\varphi,\mathbf r_0^{\mathsf{in}})=p^W.
\]
Then
\[
\mathbf R^W
=
(\mathbf r_0^{\mathsf{in}},\mathbf r_1^W,\mathbf r_2^W)
\]
is a unilateral equilibrium and is validly reachable from
\(\mathbf R^{\mathsf{in}}\).
\end{corollary}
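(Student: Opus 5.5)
The plan has two parts. First, $\mathbf R^W$ is a unilateral equilibrium by Theorem~\ref{thm:1}. Second, it is validly reachable by Theorem~\ref{thm:interior-feasible-reachable}. For both, I will first show that $\mathbf R^W$ is an interior feasible market state.

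\emph{Feasibility.} Condition (ii) holds trivially, since the CFMM inventory is unchanged at $\mathbf r_0^{\mathsf{in}}$. Condition (i) holds because Definition~\ref{def:1}(b) gives $\mathbf r_1^W+\mathbf r_2^W=\mathbf r_1^{\mathsf{in}}+\mathbf r_2^{\mathsf{in}}$. Adding $\mathbf r_0^{\mathsf{in}}$ to both sides yields conservation of the total inventory.

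\emph{Interiority.} The CFMM inventory $\mathbf r_0^{\mathsf{in}}$ is interior by hypothesis. For the agents, I use individual rationality of Walrasian demand. The endowment $\mathbf r_i^{\mathsf{in}}$ satisfies agent $i$'s budget constraint, so $u_i(\mathbf r_i^W)\geq u_i(\mathbf r_i^{\mathsf{in}})>0$. The strict positivity holds because $\mathbf r_i^{\mathsf{in}}$ is interior and $u_i$ is strictly increasing on $\R_{++}^2$ and zero on the axes. Assumption~\eqref{as:1} then forces $\mathbf r_i^W\in\R_{++}^2$, since utility vanishes on the boundary.

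\emph{Price equality.} At an interior Walrasian equilibrium, equation~\eqref{eq:mar} gives $p(u_i,\mathbf r_i^W)=q_1^W/q_2^W=p^W$ for $i=1,2$. Combined with the hypothesis $p(\varphi,\mathbf r_0^{\mathsf{in}})=p^W$, this yields
\[
p(\varphi,\mathbf r_0^{\mathsf{in}})=p(u_1,\mathbf r_1^W)=p(u_2,\mathbf r_2^W),
\]
so Theorem~\ref{thm:1} shows that $\mathbf R^W$ is a unilateral equilibrium.

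Equation~\eqref{eq:mar} is only sketched in the paper, so I will justify it more carefully. I must first check that $q_2^W>0$. If $q_2^W=0$, agent $i$ could raise $y$ without bound at no cost, contradicting the existence of a maximizer, since $u_i$ is strictly increasing. The same argument gives $q_1^W>0$. With both prices positive, the tangency condition follows from the first-order condition for utility maximization on a linear budget line at an interior optimum, using differentiability from Assumption~\eqref{as:3}.

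\emph{Reachability.} Since $\mathbf R^{\mathsf{in}}$ is interior and $\mathbf R^W$ is an interior feasible state, Theorem~\ref{thm:interior-feasible-reachable} applies. It yields a finite valid trading sequence from $\mathbf R^{\mathsf{in}}$ to $\mathbf R^W$, so $\mathbf R^W\in\mathcal R^{\mathsf{val}}(\mathbf R^{\mathsf{in}})$.

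The main obstacle is the interiority and price-equality step, specifically making equation~\eqref{eq:mar} rigorous, including positivity of both price components. The other steps are direct applications of earlier results.
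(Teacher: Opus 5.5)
Your proposal is correct and follows the paper's proof essentially step for step. You establish feasibility from market clearing, interiority of $\mathbf r_i^W$ via individual rationality and zero utility on the axes, and positivity of both Walrasian prices by the unbounded-budget argument; you then obtain the tangency $p(u_i,\mathbf r_i^W)=p^W$, and finish with Theorem~\ref{thm:1} for the unilateral equilibrium and Theorem~\ref{thm:interior-feasible-reachable} for valid reachability.
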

\begin{proof}
The Walrasian equilibrium satisfies $\mathbf r_1^W+\mathbf r_2^W=\mathbf r_1^{\mathsf{in}}+\mathbf r_2^{\mathsf{in}}$, so $\mathbf R^W$ is feasible. Its prices are strictly positive: if one asset had zero price, the budget set would be unbounded in that asset, and strict monotonicity together with Assumptions~\eqref{as:1}--\eqref{as:2} would prevent the maximization problem from having a solution.

Because the initial endowments are interior, each agent can attain positive utility by keeping her endowment. Boundary allocations have zero utility, so each Walrasian optimizer $\mathbf r_i^W$ is interior. The state $\mathbf R^W$ is therefore interior, and
$p(\varphi,\mathbf r_0^{\mathsf{in}})=p^W=q_1^W/q_2^W=p(u_1,\mathbf r_1^W)=p(u_2,\mathbf r_2^W)$. Hence $\mathbf R^W$ is a unilateral equilibrium by Theorem~\ref{thm:1}, and Theorem~\ref{thm:interior-feasible-reachable} makes it validly reachable.
\end{proof}

Corollary~\ref{cor:classical-eq} is an implementation result, not a convergence result. It identifies a finite sequence of valid trades that implements the Walrasian allocation; the individual trades need not improve utility or be best responses. Initial price alignment matters because the target keeps the CFMM at its initial inventory while reallocating the agents' aggregate endowment.

\subsection{Alternating utility-maximizing trades}

We have seen that all interior feasible states, including unilateral equilibria, are validly reachable. A complete characterization of all improvement reachable unilateral equilibria is generally nontrivial. On the other hand, we prove a constructive result: under alternating utility-maximizing behavior, a utility-maximization reachable unilateral equilibrium exists and is selected by the trading process.

\begin{lemma}
\label{lem:alternating-price-bracketing}
Let $\varphi,u_1,u_2$ satisfy Assumptions~\eqref{as:1}--\eqref{as:4}.
Consider an alternating utility-maximizing sequence from an interior state. Suppose that, after agent~1 has traded,
$p_0^1=p_1^1<p_2^1$. Define
\[
A_n:=p_0^{2n-1}=p_1^{2n-1},\qquad
B_n:=p_0^{2n}=p_2^{2n},\qquad n\geq1,
\]
and set $B_0:=p_2^1$. Then
$A_n<A_{n+1}<B_n<B_{n-1}$ for every $n\geq1$. The agents' holdings and the
CFMM inventory converge to an interior market state, and
$A_n$ and $B_n$ converge to the same limit. The analogous conclusions hold
when $p_2^1<p_0^1=p_1^1$, with all inequalities reversed.
\end{lemma}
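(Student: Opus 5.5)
The plan is to induct along the alternating sequence, using Proposition~\ref{prop:single-trade}\eqref{trade:sign} for the direction of each trade and Proposition~\ref{prop:single-trade}\eqref{trade:optimal-implies-price} for the price equality at its end. Two monotonicity facts drive every step. First, along the fixed liquidity curve the CFMM price $p(\varphi,x,\ell_{\varphi,c}(x))$ is strictly decreasing in the CFMM's $X$-inventory (Proposition~\ref{prop:level-curve}\eqref{curve:cx},\eqref{curve:price}). Second, by Assumption~\eqref{as:4}, an agent's indifference price strictly increases when she sells $X$ for $Y$, and strictly decreases when she buys $X$ with $Y$.

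\emph{Base step.} At state $1$ we have $p_0^1=A_1<B_0=p_2^1$. By part~\eqref{trade:sign}, agent~2's best response has $\operatorname{sgn}(\mathbf h)=(-1,1)$: the CFMM loses $X$ and gains $Y$, so its price strictly rises from $A_1$. Agent~2 gains $X$ and loses $Y$, so her price strictly falls from $B_0$. By part~\eqref{trade:optimal-implies-price} the two prices meet at a common value $B_1$, and the monotonicities give $A_1<B_1<B_0$. Agent~1 did not trade, so she still has price $A_1<B_1=p_0^2$.

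\emph{Inductive step.} Agent~1 now faces $p_0>p_1$. By the symmetric argument she sells $X$, the CFMM price falls, her price rises, and they meet at $A_2$ with $A_1<A_2<B_1$. Repeating this, the strict gap between the non-trading agent's price and the new CFMM price is preserved at every step. This gives $A_n<A_{n+1}<B_n<B_{n-1}$ for all $n\ge1$. In particular no trade is ever the zero trade, so every inequality stays strict. Hence $(A_n)$ is increasing and bounded above by $B_1$, $(B_n)$ is decreasing and bounded below by $A_1$, and both converge, with limits $A_\infty\le B_\infty$.

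\emph{Convergence of holdings.} The sign information shows that every trade of agent~1 after step~1 sells $X$ for $Y$, so $x_1^t$ is nonincreasing and $y_1^t$ is nondecreasing. Every trade of agent~2 buys $X$ with $Y$, so $x_2^t$ is nondecreasing and $y_2^t$ is nonincreasing. All coordinates are bounded by total inventory, so each converges. The CFMM inventory then converges by conservation, and its limit stays on the level $c=\varphi(\mathbf r_0^{\mathsf{in}})>0$. The limit of $\mathbf r_0^t$ is therefore interior, since $\varphi$ vanishes on the axes and is continuous. Utility-maximizing trades weakly raise utility, so $u_i(\mathbf r_i^t)\ge u_i(\mathbf r_i^{1})>0$. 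By continuity the limit holdings have positive utility, and by Assumption~\eqref{as:1} they are interior.

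\emph{Equality of the limits.} Prices are continuous on the interior by Assumption~\eqref{as:3}. Along odd indices $p_0^{2n-1}=A_n$, and along even indices $p_0^{2n}=B_n$. Since $\mathbf r_0^t$ converges to an interior point, both subsequences must converge to $p(\varphi,\mathbf r_0^\infty)$, so $A_\infty=B_\infty$. Moreover $p(u_1,\mathbf r_1^\infty)=\lim A_n$ and $p(u_2,\mathbf r_2^\infty)=\lim B_n$, so all three limit prices agree. The reversed case $p_2^1<p_0^1=p_1^1$ is identical with every sign flipped.

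The step I expect to need the most care is the interiority of the limit, because it is what justifies passing to the limit in the prices via continuity. The argument for it is the positive lower bound on utilities given by individual rationality along the path. The strict bracketing itself is routine once the sign pattern of each trade is fixed by part~\eqref{trade:sign}.
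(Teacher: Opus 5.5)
Your proposal is correct and follows essentially the same route as the paper. Both arguments use one-step price bracketing from Proposition~\ref{prop:single-trade}\eqref{trade:optimal-implies-price} and~\eqref{trade:sign}, induction to obtain $A_n<A_{n+1}<B_n<B_{n-1}$, monotone convergence of the holdings from the fixed buy/sell directions after the first trade, interiority of the limit from the positive invariant level and the utility lower bound, and continuity of prices to identify the two limits. The only difference is cosmetic: the paper first confines the whole path to a compact subset of the interior, whereas you prove interiority directly for the limit, which suffices.
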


The proof of Lemma~\ref{lem:alternating-price-bracketing} is given in
Appendix~\ref{app:alternating-price-bracketing}.

\begin{theorem}\label{thm:max-reach-limit}
Let $\varphi,u_1,u_2$ satisfy Assumptions~\eqref{as:1}--\eqref{as:4}.
Starting from an interior market state, suppose that the two agents take
turns trading with the CFMM and each active agent maximizes her own utility.
Then the sequence of market states converges to an interior state
$(\mathbf r_0^*,\mathbf r_1^*,\mathbf r_2^*)$, and
\[
p(\varphi,\mathbf r_0^*)
=p(u_1,\mathbf r_1^*)
=p(u_2,\mathbf r_2^*).
\]
\end{theorem}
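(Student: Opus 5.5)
The plan is to reduce the theorem to Lemma~\ref{lem:alternating-price-bracketing} after a case split on the first round. Without loss of generality agent~1 trades first, at $t=0$, from the interior state $\mathbf R^0$. By Proposition~\ref{prop:single-trade}, her utility-maximizing trade exists, is unique and interior, and yields $p_0^1=p_1^1$. Since agent~2 has not traded, $p_2^1=p_2^0$. Interiority is preserved at every step because each best response is interior by Proposition~\ref{prop:single-trade}\eqref{trade:exist}, and the inactive agent's holding is unchanged. So the whole sequence is well defined and consists of interior states.

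There are three cases after the first trade. If $p_0^1=p_1^1=p_2^1$, then by Proposition~\ref{prop:single-trade}\eqref{trade:zero} every subsequent utility-maximizing trade is zero. The sequence is eventually constant, and its limit satisfies the price equality, which also characterizes a unilateral equilibrium by Theorem~\ref{thm:1}. If $p_0^1=p_1^1<p_2^1$, or the reverse inequality holds, then Lemma~\ref{lem:alternating-price-bracketing} applies directly. It gives convergence of the holdings and the CFMM inventory to an interior state $(\mathbf r_0^*,\mathbf r_1^*,\mathbf r_2^*)$, and $A_n,B_n$ converge to a common limit $p^*$.

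It then remains to identify the prices at the limit. By Assumption~\eqref{as:3}, $\varphi,u_1,u_2$ are $C^1$ on the open quadrant. Their partial derivatives are positive there, which follows from strict monotonicity plus concavity; if needed I would justify this via the level-curve slope in Proposition~\ref{prop:level-curve}\eqref{curve:price}. Hence the price maps $p(\varphi,\cdot)$ and $p(u_i,\cdot)$ are continuous at interior points. The three prices can then be read off separately:
\begin{itemize}
\item Along odd times, $p(\varphi,\mathbf r_0^{2n-1})=A_n\to p^*$, so by continuity $p(\varphi,\mathbf r_0^*)=p^*$.
\item Along odd times also $p(u_1,\mathbf r_1^{2n-1})=A_n$, so $p(u_1,\mathbf r_1^*)=p^*$.
\item Along even times, $p(u_2,\mathbf r_2^{2n})=B_n\to p^*$, so $p(u_2,\mathbf r_2^*)=p^*$.
\end{itemize}
Since the full sequence converges, the limits along odd and even subsequences coincide with $\mathbf r_j^*$. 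This gives the displayed triple equality. If agent~2 moves first, the same argument applies with the roles of the agents exchanged.

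The main obstacle is the content of Lemma~\ref{lem:alternating-price-bracketing}, in particular that the limit is interior and that the two price subsequences have the same limit, rather than a gap $\lim A_n<\lim B_n$. Since that lemma is available, the remaining risk is small. The points needing care are the continuity of $p$ at the limit, which requires the limit to be interior (supplied by the lemma), and the degenerate equal-price case, which the lemma excludes.
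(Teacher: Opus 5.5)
Your proposal is correct and follows essentially the same route as the paper. You split on whether the second agent's price already matches after the first mover's best response, treat the equal-price case as immediate, hand the strict cases to Lemma~\ref{lem:alternating-price-bracketing}, and finish with continuity of the three price maps at the interior limit. Your additional details (positivity of the partial derivatives from strict monotonicity plus concavity, and reading off each price along the appropriate odd or even subsequence) just make explicit what the paper's one-line continuity step leaves implicit.
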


\begin{proof}
Assume that agent~1 moves first. Proposition~\ref{prop:single-trade} makes
the CFMM price equal to agent~1's indifference price immediately after that
trade. If agent~2 has the same price, every later utility-maximizing trade is
zero and the result is immediate. Otherwise, one of the two strict orderings
$p_0^1=p_1^1<p_2^1$ or $p_2^1<p_0^1=p_1^1$ holds. By symmetry it is enough
to consider the first.

Lemma~\ref{lem:alternating-price-bracketing} shows that the two post-trade
price subsequences bracket one another, that the market state converges to
an interior limit, and that the price subsequences have the same limit.
Continuity of the three price functions then yields the stated price
equality at the limiting state.
\end{proof}

\begin{corollary}
Under the assumptions of Theorem~\ref{thm:max-reach-limit}, the limiting market state is a unilateral equilibrium. Moreover, if the trading sequence starts from $\mathbf R^{\mathsf{in}}$ and each trade is utility-maximizing, then the limit belongs to $\mathcal U(\mathbf R^{\mathsf{in}})$ and is Pareto optimal.
\end{corollary}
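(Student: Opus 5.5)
The plan has three steps: identify the limit as a unilateral equilibrium, show it lies in the feasible-improvement set, and then apply the welfare result.

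\emph{Step 1 (unilateral equilibrium).} Theorem~\ref{thm:max-reach-limit} gives an interior limit $\mathbf R^*=(\mathbf r_0^*,\mathbf r_1^*,\mathbf r_2^*)$ with $p(\varphi,\mathbf r_0^*)=p(u_1,\mathbf r_1^*)=p(u_2,\mathbf r_2^*)$. Theorem~\ref{thm:1} then shows directly that $\mathbf R^*$ is a unilateral equilibrium.

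\emph{Step 2 (membership in $\mathcal U(\mathbf R^{\mathsf{in}})$).} This requires feasibility and weak improvement for both agents.

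For feasibility, each trade $\mathbf h^t$ is valid. Hence every $\mathbf R^t$ conserves total inventory and satisfies $\varphi(\mathbf r_0^t)=\varphi(\mathbf r_0^{\mathsf{in}})$. These identities pass to the limit because $\varphi$ is continuous and the sum map is linear, so $\mathbf R^*\in\mathcal F(\mathbf R^{\mathsf{in}})$.

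For weak improvement, the zero trade is always valid. So each utility-maximizing trade gives the active agent utility at least her pre-trade level, while the inactive agent's holding is unchanged. Consequently $u_i(\mathbf r_i^t)$ is nondecreasing in $t$ for each $i$. Continuity of $u_i$ then gives $u_i(\mathbf r_i^*)=\lim_t u_i(\mathbf r_i^t)\geq u_i(\mathbf r_i^{\mathsf{in}})$.

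Together these show $\mathbf R^*\in\mathcal U(\mathbf R^{\mathsf{in}})$. Equivalently, $\mathbf R^*$ lies in $\mathcal R^{\mathsf{max}}(\mathbf R^{\mathsf{in}})\subseteq\mathcal U(\mathbf R^{\mathsf{in}})$ by the inclusion proposition, which uses continuity for infinite sequences.

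\emph{Step 3 (Pareto optimality).} The initial state $\mathbf R^{\mathsf{in}}$ is interior, being the starting state of Theorem~\ref{thm:max-reach-limit}. Proposition~\ref{prop:unilateral-implies-Pareto} then applies: every unilateral equilibrium in $\mathcal U(\mathbf R^{\mathsf{in}})$ is Pareto optimal.

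The only step needing care is the limit passage in Step~2, which requires continuity of $\varphi$ and $u_i$ at the limit. This is guaranteed by Assumption~\eqref{as:1}, and the limit is in any case interior. I therefore expect no real obstacle.
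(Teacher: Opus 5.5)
Your proposal is correct and follows essentially the same route as the paper. It uses Theorem~\ref{thm:max-reach-limit} with Theorem~\ref{thm:1} for the equilibrium property, the fact that each agent's utility is nondecreasing along utility-maximizing trades for membership in $\mathcal U(\mathbf R^{\mathsf{in}})$, and Proposition~\ref{prop:unilateral-implies-Pareto} for Pareto optimality, and it additionally spells out the continuity argument for feasibility and the utility bounds at the limit, which the paper leaves implicit.
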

\begin{proof}
Theorem~\ref{thm:max-reach-limit} gives an interior limiting state satisfying $p(\varphi,\mathbf r_0^*)=p(u_1,\mathbf r_1^*)=p(u_2,\mathbf r_2^*)$. Theorem~\ref{thm:1} therefore makes it a unilateral equilibrium. Each utility-maximizing trade weakly raises the active agent's utility and leaves the other agent's utility unchanged, so the limit belongs to $\mathcal U(\mathbf R^{\mathsf{in}})$. Pareto optimality follows from Proposition~\ref{prop:unilateral-implies-Pareto}.
\end{proof}

This convergence is a best-response dynamic rather than Walrasian t\^atonnement. There is no auctioneer adjusting a price while allocations remain fixed: every best response changes the pool's actual inventory, and therefore changes both the price and the feasible frontier faced by the next agent. Convergence follows from endogenous price bracketing together with monotone movements in the agents' holdings.

The preceding discussion also identifies the set of utility-maximization reachable market states. Once the initial state is fixed, the only
choice is which agent moves first and how many nonzero utility-maximizing
trades are performed. Indeed, by Proposition~\ref{prop:single-trade}
\eqref{trade:unique}, the active agent's utility-maximizing trade is
unique. Moreover, after an agent trades, the CFMM price equals that
agent's indifference price, so an immediate second trade by the same
agent is the zero trade by Proposition~\ref{prop:single-trade}
\eqref{trade:zero}. Hence, up to inserting zero trades, every
utility-maximizing trading sequence is one of the two alternating paths,
depending on whether agent~1 or agent~2 moves first, stopped after a
finite number of trades. Thus the utility-maximization reachable states are precisely the
states along these two alternating utility-maximizing trading sequences, including the limiting case as in Theorem~\ref{thm:max-reach-limit}.

\subsection{First-mover effects under alternating utility maximization}

In this subsection, we analyze the dynamics of the trading sequence generated by agents' sequential utility-maximizing behavior.

The next proposition shows that, given two inventories on the same liquidity curve whose CFMM prices lie on the same side of the agent's indifference price, the one farther from that price leads to a strictly higher post-trade utility for the agent.

\begin{proposition}\label{prop:first-mover}
Let $\varphi$ and $u_i$ satisfy Assumptions~\eqref{as:1}--\eqref{as:4}. Fix $i\in\{1,2\}$, an interior agent holding $\mathbf r_i\in\R_{++}^2$, and two interior CFMM inventories $\mathbf r_0,\widetilde{\mathbf r}_0\in\R_{++}^2$ on the same liquidity curve. Set
\[
p_0=p(\varphi,\mathbf r_0),\qquad
\widetilde p_0=p(\varphi,\widetilde{\mathbf r}_0),\qquad
p_i=p(u_i,\mathbf r_i).
\]
Let $\mathbf r_i^*$ and $\widetilde{\mathbf r}_i^*$ be the agent's post-trade holdings after the respective utility-maximizing trades. If
$p_0>\widetilde p_0>p_i$ or $p_0<\widetilde p_0<p_i$, then
$u_i(\mathbf r_i^*)>u_i(\widetilde{\mathbf r}_i^*)$.
\end{proposition}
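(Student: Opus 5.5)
The plan is to compare the two one-agent feasible frontiers directly. Neither optimal trade needs to be computed. I treat the case $p_0>\widetilde p_0>p_i$; the case $p_0<\widetilde p_0<p_i$ follows by the same argument with the roles of $X$ and $Y$ exchanged, i.e.\ with all inequalities reversed.

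\emph{Setup.} Let $c=\varphi(\mathbf r_0)=\varphi(\widetilde{\mathbf r}_0)$ and $\ell=\ell_{\varphi,c}$, and write $\mathbf r_0=(x_0,y_0)$ and $\widetilde{\mathbf r}_0=(\widetilde x_0,\widetilde y_0)$. Since $\ell'=-p(\varphi,\cdot)$ is strictly increasing by Proposition~\ref{prop:level-curve}, $p_0>\widetilde p_0$ forces $x_0<\widetilde x_0$, and hence $y_0>\widetilde y_0$. Set $a:=\widetilde x_0-x_0>0$ and $b:=y_0-\widetilde y_0=\ell(x_0)-\ell(\widetilde x_0)>0$. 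Put $\mathbf T=\mathbf r_0+\mathbf r_i$, so that $\widetilde{\mathbf T}:=\widetilde{\mathbf r}_0+\mathbf r_i=\mathbf T+(a,-b)$. As in \eqref{eq:one_agent_feasible_curve}, the agent's attainable holdings are $\mathbf T-\mathbf z$ and $\widetilde{\mathbf T}-\mathbf z$, respectively, with $\mathbf z$ ranging over the liquidity curve.

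\emph{Location of the optimum from $\widetilde{\mathbf r}_0$.} Let $\widetilde{\mathbf z}^*=(\widetilde z,\ell(\widetilde z))$ be the post-trade CFMM inventory of the optimal trade from $\widetilde{\mathbf r}_0$, so that $\widetilde{\mathbf r}_i^*=\widetilde{\mathbf T}-\widetilde{\mathbf z}^*$. Because $\widetilde p_0>p_i$, Proposition~\ref{prop:single-trade}\eqref{trade:sign} gives $\widetilde z>\widetilde x_0$.

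\emph{Key step: a strictly better point from $\mathbf r_0$.} From $\mathbf r_0$ the agent can choose the CFMM point $\mathbf z'=(\widetilde z-a,\ell(\widetilde z-a))$. This point is interior, since $\widetilde z-a>\widetilde x_0-a=x_0>0$. The resulting holding is
\[
\mathbf T-\mathbf z'=\bigl(\widetilde T_x-\widetilde z,\;T_y-\ell(\widetilde z-a)\bigr).
\]
Its $X$-coordinate coincides with that of $\widetilde{\mathbf r}_i^*$. The $Y$-coordinates differ by
\[
\bigl(T_y-\ell(\widetilde z-a)\bigr)-\bigl(\widetilde T_y-\ell(\widetilde z)\bigr)
= b-\bigl(\ell(\widetilde z-a)-\ell(\widetilde z)\bigr).
\]
By strict convexity of $\ell$ (Proposition~\ref{prop:level-curve}\eqref{curve:cx}), the map $z\mapsto\ell(z-a)-\ell(z)$ is strictly decreasing. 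At $z=\widetilde x_0$ it equals $\ell(x_0)-\ell(\widetilde x_0)=b$. Since $\widetilde z>\widetilde x_0$, the displayed difference is strictly positive. Hence $\mathbf T-\mathbf z'$ has the same $X$ and strictly more $Y$ than $\widetilde{\mathbf r}_i^*$, so it lies in $\R_{++}^2$ and the trade $\mathbf z'-\mathbf r_0$ is valid. Strict monotonicity (Assumption~\eqref{as:1}) then gives
\[
u_i(\mathbf r_i^*)\ge u_i(\mathbf T-\mathbf z')>u_i(\widetilde{\mathbf r}_i^*).
\]

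\emph{Main obstacle.} The delicate point is the sign of the convexity comparison. It hinges on knowing $\widetilde z>\widetilde x_0$, which uses the hypothesis $\widetilde p_0>p_i$. Without that ordering the translated frontier need not lie above the other one at the relevant abscissa. So I would verify the sign part of Proposition~\ref{prop:single-trade} and the monotone-secant property carefully, and check that the translated frontier point stays interior.
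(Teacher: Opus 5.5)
Your proof is correct and is essentially the paper's argument. Your comparison point $\mathbf z'=(\widetilde z-a,\ell(\widetilde z-a))$ is exactly the trade that sells from $\mathbf r_0$ the same amount $m=\widetilde z-\widetilde x_0$ of $X$ as the optimal trade from $\widetilde{\mathbf r}_0$, and your strict monotonicity of $z\mapsto\ell(z-a)-\ell(z)$ is the same strict-convexity comparison the paper writes as $\ell(x)-\ell(x+m)>\ell(\widetilde x)-\ell(\widetilde x+m)$; the paper writes out the buying case explicitly, but your formula also works there directly, since the sign reversals of $a$ and of $\widetilde z-\widetilde x_0$ cancel.
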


\begin{proof}
Write $\mathbf r_i=(a,b)$, $\mathbf r_0=(x,\ell(x))$, and $\widetilde{\mathbf r}_0=(\widetilde x,\ell(\widetilde x))$, where $\ell=\ell_{\varphi,c}$ is the common liquidity curve. Define $P(s)=p(\varphi,s,\ell(s))=-\ell'(s)$. Proposition~\ref{prop:level-curve} implies that $P$ is strictly decreasing.

Let $\mathbf h^*$ and $\widetilde{\mathbf h}^*$ be the unique utility-maximizing trades from $(\mathbf r_0,\mathbf r_i)$ and $(\widetilde{\mathbf r}_0,\mathbf r_i)$, respectively. We compare their post-trade utilities.

First suppose $p(\varphi,\mathbf r_0)>p(\varphi,\widetilde{\mathbf r}_0)>p(u_i,\mathbf r_i)$. Since $P$ is strictly decreasing, $x<\widetilde x$. In both markets the agent sells $X$. Write $\widetilde{\mathbf h}^*=(m,\ell(\widetilde x+m)-\ell(\widetilde x))$ with $m>0$. The corresponding holding is $\widetilde{\mathbf r}_i^*=(a-m,b+\ell(\widetilde x)-\ell(\widetilde x+m))$.

From $\mathbf r_0$, consider selling the same amount $m$ of $X$. The valid trade is $\widehat{\mathbf h}=(m,\ell(x+m)-\ell(x))$, and the resulting holding is $\widehat{\mathbf r}_i=(a-m,b+\ell(x)-\ell(x+m))$. Since $x<\widetilde x$ and $-\ell'$ is strictly decreasing,
\[
\ell(x)-\ell(x+m)=\int_x^{x+m}-\ell'(s)\,ds
>\int_{\widetilde x}^{\widetilde x+m}-\ell'(s)\,ds
=\ell(\widetilde x)-\ell(\widetilde x+m).
\]
Thus $\widehat{\mathbf r}_i$ has the same $X$-coordinate as $\widetilde{\mathbf r}_i^*$ but more $Y$. Strict monotonicity gives $u_i(\widehat{\mathbf r}_i)>u_i(\widetilde{\mathbf r}_i^*)$, and optimality of $\mathbf h^*$ gives $u_i(\mathbf r_i-\mathbf h^*)\geq u_i(\widehat{\mathbf r}_i)>u_i(\widetilde{\mathbf r}_i^*)$.

Now suppose $p(\varphi,\mathbf r_0)<p(\varphi,\widetilde{\mathbf r}_0)<p(u_i,\mathbf r_i)$. Then $x>\widetilde x$, and in both markets the agent buys $X$. Write $\widetilde{\mathbf h}^*=(-m,\ell(\widetilde x-m)-\ell(\widetilde x))$ with $m>0$. From $\mathbf r_0$, consider the valid trade $\widehat{\mathbf h}=(-m,\ell(x-m)-\ell(x))$. Since $x>\widetilde x$ and $-\ell'$ is strictly decreasing,
\[
\ell(x-m)-\ell(x)=\int_{x-m}^{x}-\ell'(s)\,ds
<\int_{\widetilde x-m}^{\widetilde x}-\ell'(s)\,ds
=\ell(\widetilde x-m)-\ell(\widetilde x).
\]
The agent therefore pays less $Y$ for the same amount of $X$ from $\mathbf r_0$. The candidate holding from $\mathbf r_0$ has the same $X$-coordinate as $\widetilde{\mathbf r}_i^*$ but more $Y$. Strict monotonicity and optimality again give $u_i(\mathbf r_i-\mathbf h^*)>u_i(\widetilde{\mathbf r}_i^*)$.
\end{proof}

The previous proposition gives a local comparison of the terms faced by a
single agent when the CFMM inventory is changed before that agent trades.
We now apply this comparison to the first two trades of the alternating
utility-maximizing dynamics. There are two qualitatively different cases.
If both agents' indifference prices lie on the same side of the CFMM price,
then both agents initially want to trade in the same direction: both want
to buy \(X\), or both want to sell \(X\). In that case, provided the first
trade does not move the CFMM price past the other agent's indifference
price, the first mover receives more favorable terms. If instead the CFMM
price lies between the two agents' indifference prices, then one agent
wants to buy \(X\) and the other wants to sell \(X\). In that case, the
first mover moves the CFMM price in the direction favorable to the other
agent, producing a first-mover disadvantage in the first round.

Let \(T_i(\mathbf R)\) denote the market state obtained from an interior
state \(\mathbf R\) after agent \(i\) makes her unique utility-maximizing
trade against the CFMM. Starting from an interior state
\[
\mathbf R^0=(\mathbf r_0^0,\mathbf r_1^0,\mathbf r_2^0),
\]
write
\[
p_0^0=p(\varphi,\mathbf r_0^0),
\qquad
p_i^0=p(u_i,\mathbf r_i^0),\quad i=1,2.
\]
Define the two alternating two-step paths
\[
\mathbf R^{12}:=T_2(T_1(\mathbf R^0)),
\qquad
\mathbf R^{21}:=T_1(T_2(\mathbf R^0)).
\]
Thus in \(\mathbf R^{12}\), agent~1 moves first and agent~2 moves second,
whereas in \(\mathbf R^{21}\), agent~2 moves first and agent~1 moves
second.

\begin{corollary}
\label{cor:first-mover-effect}
Let \(\varphi,u_1,u_2\) satisfy Assumptions~\eqref{as:1}--\eqref{as:4},
and let \(\mathbf R^0\) be an interior initial market state.

Let \(\mathbf R^i=T_i(\mathbf R^0)\), and write
\[
q_i:=p(\varphi,\mathbf r_0^i),
\qquad i=1,2,
\]
for the CFMM price after agent \(i\) moves first.

\begin{enumerate}[(i)]
\item Suppose that the agents' initial prices lie on the same side of the
initial CFMM price:
\[
(p_0^0-p_1^0)(p_0^0-p_2^0)>0.
\]
Assume in addition that the first trade by either agent does not move the
CFMM price past the other agent's initial indifference price, that is,
\[
(p_0^0-p_1^0)(q_2-p_1^0)>0
\quad\text{and}\quad
(p_0^0-p_2^0)(q_1-p_2^0)>0.
\]
Then there is a first-mover advantage in the first round:
\[
u_1(\mathbf r_1^{12})>u_1(\mathbf r_1^{21}),
\qquad
u_2(\mathbf r_2^{21})>u_2(\mathbf r_2^{12}).
\]

\item Suppose that the initial CFMM price lies strictly between the two
agents' indifference prices:
\[
(p_0^0-p_1^0)(p_0^0-p_2^0)<0.
\]
Then there is a first-mover disadvantage in the first round:
\[
u_1(\mathbf r_1^{12})<u_1(\mathbf r_1^{21}),
\qquad
u_2(\mathbf r_2^{21})<u_2(\mathbf r_2^{12}).
\]
\end{enumerate}
\end{corollary}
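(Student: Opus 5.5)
The plan is to reduce each inequality to a single application of Proposition~\ref{prop:first-mover}, applied to one agent who starts from her initial holding $\mathbf r_i^0$.

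The key observation is that a trade by one agent leaves the other agent's holding untouched. Take agent~1. In $\mathbf R^{12}$ her final holding is her post-trade holding after her utility-maximizing trade against $\mathbf r_0^0$, since agent~2's later trade does not change it. In $\mathbf R^{21}$ her final holding is her post-trade holding after her utility-maximizing trade against $\mathbf r_0^2$, the CFMM inventory left by agent~2. In both paths agent~1 starts from the same holding $\mathbf r_1^0$. Valid trades preserve the invariant, so the inventories $\mathbf r_0^0$ and $\mathbf r_0^2$ lie on the same liquidity curve, and both are interior by Proposition~\ref{prop:single-trade}\eqref{trade:exist}. Hence comparing $u_1(\mathbf r_1^{12})$ with $u_1(\mathbf r_1^{21})$ is exactly the comparison in Proposition~\ref{prop:first-mover}, with prices $p_0^0$ and $q_2$ against $p_1^0$. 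Agent~2 is handled symmetrically, with prices $p_0^0$ and $q_1$ against $p_2^0$.

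The main step is to locate $q_2$ (and $q_1$) relative to $p_0^0$. By Proposition~\ref{prop:single-trade}\eqref{trade:sign}, if $p_0^0>p_2^0$ then agent~2 sells $X$. This strictly increases the CFMM's $X$ inventory, and since $P(s)=-\ell'(s)$ is strictly decreasing (Proposition~\ref{prop:level-curve}), we get $q_2<p_0^0$. If instead $p_0^0<p_2^0$, then $q_2>p_0^0$. The same holds for $q_1$ with $p_1^0$. I expect this strictness, together with keeping the sign cases straight, to be the only real obstacle; the rest is bookkeeping.

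\emph{Case (i).} Assume both $p_1^0,p_2^0<p_0^0$; the other orientation is symmetric with all inequalities reversed. Then $q_2<p_0^0$, and the hypothesis $(p_0^0-p_1^0)(q_2-p_1^0)>0$ gives $q_2>p_1^0$. So $p_0^0>q_2>p_1^0$. Proposition~\ref{prop:first-mover}, with $\mathbf r_0=\mathbf r_0^0$ and $\widetilde{\mathbf r}_0=\mathbf r_0^2$, yields $u_1(\mathbf r_1^{12})>u_1(\mathbf r_1^{21})$. Likewise $p_0^0>q_1>p_2^0$ yields $u_2(\mathbf r_2^{21})>u_2(\mathbf r_2^{12})$.

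\emph{Case (ii).} Assume $p_1^0<p_0^0<p_2^0$; the reverse is symmetric. Agent~2 buys $X$, so $q_2>p_0^0>p_1^0$. Proposition~\ref{prop:first-mover}, now with $\mathbf r_0=\mathbf r_0^2$ (the inventory farther from $p_1^0$) and $\widetilde{\mathbf r}_0=\mathbf r_0^0$, gives $u_1(\mathbf r_1^{21})>u_1(\mathbf r_1^{12})$. Similarly, agent~1 sells $X$, so $q_1<p_0^0<p_2^0$, which gives $u_2(\mathbf r_2^{12})>u_2(\mathbf r_2^{21})$. No extra non-overshooting hypothesis is needed in this case, because the first mover always pushes the price away from the second agent's indifference price, which is exactly the favourable direction for the second agent.
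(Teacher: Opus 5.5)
Your proposal is correct and follows essentially the same route as the paper: each inequality is reduced to Proposition~\ref{prop:first-mover}, applied to the second mover's unchanged initial holding. The position of $q_i$ relative to $p_0^0$ comes from the trade-sign result, and the no-crossing hypothesis is used only in part~(i). The only cosmetic difference is that you obtain $q_2<p_0^0$ from the strict monotonicity of $P=-\ell'$ along the liquidity curve, whereas the paper cites the fact that the post-trade common price lies strictly between the two pre-trade prices.
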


\begin{proof}
Proposition~\ref{prop:single-trade}~\eqref{trade:sign} implies the following. If $p_0^0>p_i^0$, agent $i$ sells $X$ to the CFMM and receives $Y$; the CFMM price falls. If $p_0^0<p_i^0$, agent $i$ buys $X$ and pays $Y$; the CFMM price rises. In either case, Proposition~\ref{prop:single-trade}~\eqref{trade:optimal-implies-price} makes the post-trade common price lie strictly between the two pre-trade prices.

\proofpart{Part~(i)} Suppose first that $p_1^0,p_2^0<p_0^0$. Both agents want to sell $X$. If agent~2 moves first, the CFMM price changes from $p_0^0$ to a value $q_2$ satisfying $p_2^0<q_2<p_0^0$. The no-crossing assumption gives $q_2>p_1^0$, so $p_0^0>q_2>p_1^0$.

Agent~1 has the same holding before her trade on both two-step paths. In $\mathbf R^{12}$ she trades against $p_0^0$, whereas in $\mathbf R^{21}$ she trades against $q_2$. Both prices exceed $p_1^0$, but $p_0^0$ is farther from it. Proposition~\ref{prop:first-mover} therefore gives $u_1(\mathbf r_1^{12})>u_1(\mathbf r_1^{21})$. Interchanging the agents gives $u_2(\mathbf r_2^{21})>u_2(\mathbf r_2^{12})$.

The case $p_1^0,p_2^0>p_0^0$ is symmetric. Both agents want to buy $X$, and the no-crossing condition keeps the first mover's post-trade CFMM price below the other agent's indifference price. The first mover therefore trades against the price farther from her own indifference price. Proposition~\ref{prop:first-mover} gives the same pair of utility inequalities.

\proofpart{Part~(ii)} Suppose first that $p_1^0<p_0^0<p_2^0$. Agent~1 wants to sell $X$, while agent~2 wants to buy it. If agent~2 moves first, she raises the CFMM price to $q_2$ with $p_0^0<q_2<p_2^0$. Since $p_1^0<p_0^0<q_2$, agent~1 faces a more favorable selling price when moving second. Thus $u_1(\mathbf r_1^{21})>u_1(\mathbf r_1^{12})$.

Similarly, if agent~1 moves first, she lowers the CFMM price to $q_1$ with $p_1^0<q_1<p_0^0$. Since $q_1<p_0^0<p_2^0$, agent~2 faces a more favorable buying price when moving second. Hence $u_2(\mathbf r_2^{12})>u_2(\mathbf r_2^{21})$. The case $p_2^0<p_0^0<p_1^0$ is identical after interchanging the agents.
\end{proof}

Corollary~\ref{cor:first-mover-effect} is a first-round comparison and
does not by itself determine the welfare ordering at the limiting
equilibria. Let \(\mathbf R^{12,\infty}\) and
\(\mathbf R^{21,\infty}\) denote the limits of the two alternating
utility-maximizing sequences that begin with agents \(1\) and \(2\),
respectively; these limits exist by Theorem~\ref{thm:max-reach-limit}.

\begin{conjecture}
\label{conj:first-mover-limit}
Under the assumptions of Corollary~\ref{cor:first-mover-effect}, the
first-round welfare ordering persists at the limiting equilibria. Thus, under
part~(i),
\[
u_1(\mathbf r_1^{12,\infty})>u_1(\mathbf r_1^{21,\infty}),
\qquad
u_2(\mathbf r_2^{21,\infty})>u_2(\mathbf r_2^{12,\infty}),
\]
whereas under part~(ii) both inequalities are reversed.
\end{conjecture}

Examples~\ref{ex:6} and~\ref{ex:low} provide numerical support for
Conjecture~\ref{conj:first-mover-limit}, but a general proof is left open.

The economic intuition is direct: Proposition~\ref{prop:first-mover} says that, when buying an asset, an agent is better off facing a lower price, and when selling an asset, an agent is better off facing a higher price. Corollary~\ref{cor:first-mover-effect} says that if both agents want to buy or both agents want to sell, moving first has an advantage. If one agent wants to buy and the other wants to sell, waiting for the other agent to move the price in one's favor has an advantage. This is the closed-market analogue of transaction-order dependence in decentralized exchanges studied, for example, by \citet{XMP2023}: even without outside arbitrage, the first trade changes the state variable that determines the next trader's terms.

We first present two numerical calibrations illustrating the evolution of the agents' utilities under the two possible trading orders.

\begin{example}\label{ex:6}
Let $\varphi=u_1=u_2=\log x+\log y$. We consider a large-reserve calibration.

\begin{figure}[htbp]
 \centering
 \includegraphics[width=0.78\linewidth]{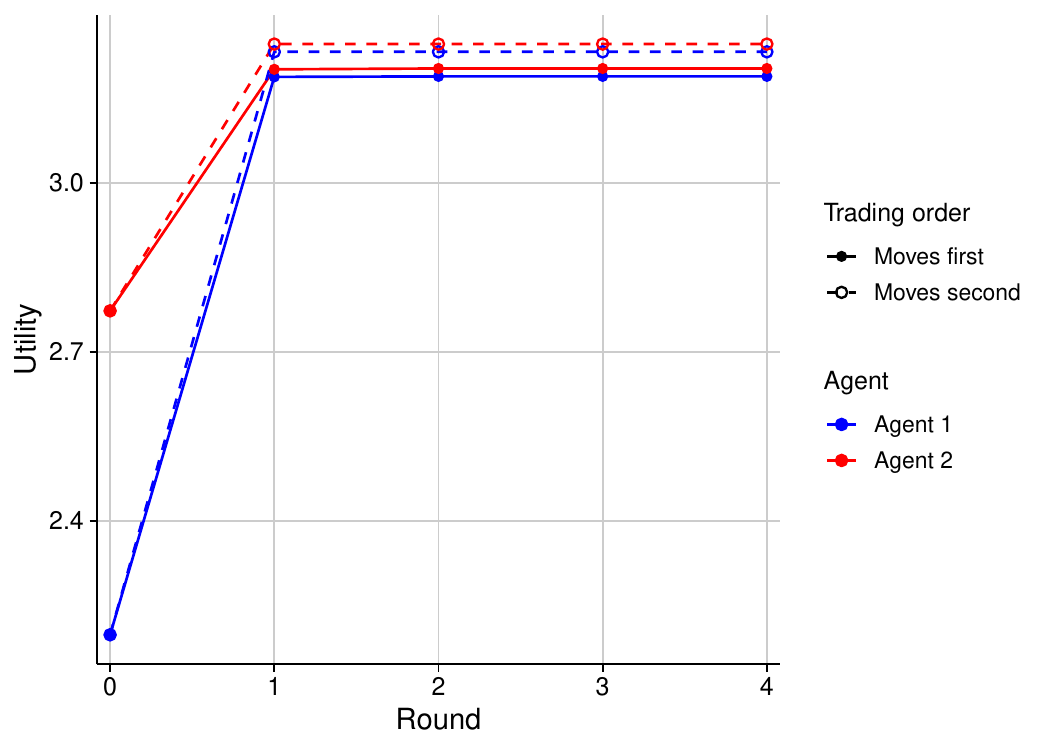}
 \caption{Utilities under alternating utility-maximizing trades with initial market state $\mathbf R^0=((100,100),(9,1),(2,8))$. The horizontal axis counts rounds, with one trade by each agent per round.}
 \label{fig:util-alt}
\end{figure}

Figure~\ref{fig:util-alt} shows rapid numerical convergence of the alternating utility-maximizing sequences, even though the agents' initial marginal rates of substitution are far from the initial CFMM price $p_0^0$. Corollary~\ref{cor:first-mover-effect} establishes a first-mover disadvantage in the first round. In this calibration, the plotted late-stage utilities retain the same ordering: each agent is better off when moving second. This provides numerical support for Conjecture~\ref{conj:first-mover-limit}.

\end{example}

\begin{example}\label{ex:low}
Let $\varphi=u_1=u_2=\log x+\log y$. We consider a second calibration with smaller CFMM reserves and the same agents' initial holdings as in Example~\ref{ex:6}. 
Figure~\ref{fig:util-alt_2} again shows a first-mover disadvantage in the first round and the same ordering in the plotted late-stage utilities, providing further numerical support for Conjecture~\ref{conj:first-mover-limit}. 

\begin{figure}[htbp]
 \centering
 \includegraphics[width=0.78\linewidth]{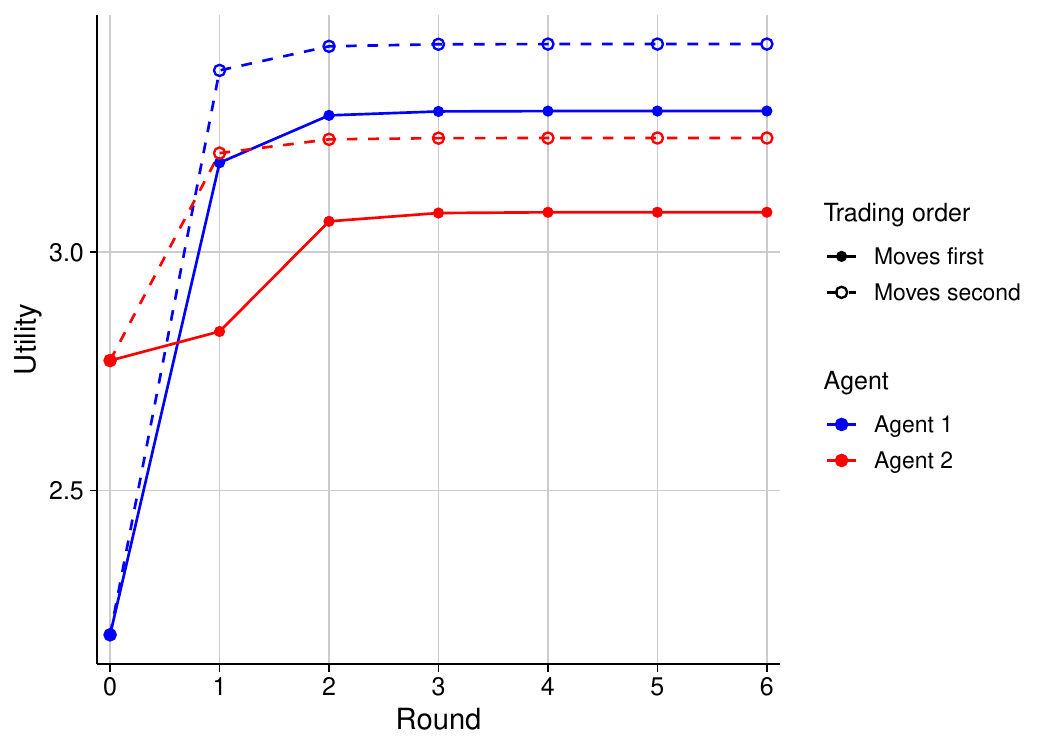}
 \caption{Utilities under alternating utility-maximizing trades with initial market state $\mathbf R^0=((3,6),(9,1),(2,8))$. The horizontal axis counts rounds, with one trade by each agent per round.}
 \label{fig:util-alt_2}
\end{figure}
\end{example}

The two figures also suggest that convergence can be quite fast in these constant-product calibrations. The larger-reserve calibration appears to settle in fewer plotted rounds.

The next example addresses a related but distinct issue: how the limiting unilateral equilibrium compares with the Walrasian benchmark as the CFMM reserve scale changes.

\begin{example}\label{ex:liquidity-welfare}
Let $\varphi=u_1=u_2=\log x+\log y$, and fix the agents' initial holdings at $\mathbf r_1^0=(9,1)$ and $\mathbf r_2^0=(1,9)$. In each calibration in Table~\ref{tab:market_states_2}, the initial CFMM price is $2$, while the initial CFMM inventory and hence the reserve scale differ. Agent~1 moves first, after which the agents alternate utility-maximizing trades. The corresponding Walrasian equilibrium assigns $(5,5)$ to each agent, with logarithmic utility $\log 25\approx3.219$.

Table~\ref{tab:market_states_2} reports late-stage iterates that approximate the limiting unilateral equilibria. The computations suggest two opposing effects. First, since agent~1 moves first and the initial CFMM price lies between the two agents' initial indifference prices, the first-round effect from Corollary~\ref{cor:first-mover-effect} works against agent~1. This effect is visible when $\lambda=0.5$, where agent~1's reported late-stage utility is below agent~2's. Second, the initial CFMM price is fixed at $2$, above the Walrasian price $1$ for the two agents' aggregate endowment. As the reserve scale increases, the pool price is less affected by the agents' trades and the limiting price remains closer to the initial CFMM price. In this calibration, this favors the initially $X$-rich agent~1, whose reported utility increases with $\lambda$, while agent~2's reported utility moves in the opposite direction.

For the intermediate reserve scale $\lambda=1$, these two effects nearly offset each other: both reported utility values exceed $\log 25\approx3.219$. This comparison does not mean that the CFMM creates resources. The agents' aggregate post-trade bundle differs from their aggregate endowment because the CFMM supplies one asset and absorbs the other, while total system inventory and the CFMM invariant remain conserved. More generally, the example shows that both traders can be better off than at the corresponding Walrasian allocation, but this need not occur for all reserve scales or all initial states.

\begin{table}[htbp]
\centering

\begin{subtable}{\linewidth}
\centering
\scriptsize
\begin{tabular}{ccccccc}
\toprule
{} & $\mathbf r^t_0$ & $p_0^t$ & $\mathbf r^t_1$ & $u_1(\mathbf r_1^t)$ & $\mathbf r^t_2$ & $u_2(\mathbf r_2^t)$ \\
\midrule
$t=0$ & $(0.5,\,1)$ & 2.000 & $(9,\,1)$ & 2.197 & $(1,\,9)$ & 2.197 \\
$t=80$ & $(0.6909,\,0.7237)$ & 1.048 & $(4.855,\,5.086)$ & 3.207 & $(4.954,\,5.190)$ & 3.247 \\
\bottomrule
\end{tabular}
\caption{Reserve scale $\lambda=0.5$: $\mathbf r_0^0=(\lambda,2\lambda)$}
\end{subtable}

\vspace{1em}

\begin{subtable}{\linewidth}
\centering
\scriptsize
\begin{tabular}{ccccccc}
\toprule
{} & $\mathbf r^t_0$ & $p_0^t$ & $\mathbf r^t_1$ & $u_1(\mathbf r_1^t)$ & $\mathbf r^t_2$ & $u_2(\mathbf r_2^t)$ \\
\midrule
$t=0$ & $(1,\,2)$ & 2.000 & $(9,\,1)$ & 2.197 & $(1,\,9)$ & 2.197 \\
$t=80$ & $(1.354,\,1.477)$ & 1.091 & $(4.823,\,5.262)$ & 3.234 & $(4.823,\,5.261)$ & 3.234 \\
\bottomrule
\end{tabular}
\caption{Reserve scale $\lambda=1$: $\mathbf r_0^0=(\lambda,2\lambda)$}
\end{subtable}

\vspace{1em}

\begin{subtable}{\linewidth}
\centering
\scriptsize
\begin{tabular}{ccccccc}
\toprule
{} & $\mathbf r^t_0$ & $p_0^t$ & $\mathbf r^t_1$ & $u_1(\mathbf r_1^t)$ & $\mathbf r^t_2$ & $u_2(\mathbf r_2^t)$ \\
\midrule
$t=0$ & $(2,\,4)$ & 2.000 & $(9,\,1)$ & 2.197 & $(1,\,9)$ & 2.197 \\
$t=32$ & $(2.619,\,3.055)$ & 1.167 & $(4.815,\,5.617)$ & 3.297 & $(4.567,\,5.328)$ & 3.192 \\
\bottomrule
\end{tabular}
\caption{Reserve scale $\lambda=2$: $\mathbf r_0^0=(\lambda,2\lambda)$}
\end{subtable}

\caption{Late-stage market states and logarithmic utilities at different CFMM reserve scales. The index \(t\) counts individual trades.}
\label{tab:market_states_2}
\end{table}
\end{example}

Taken together, the numerical examples illustrate phenomena that are not captured by the static price-equality condition alone. Trading order, reserve scale, and the initial CFMM price can all affect the distribution of trader welfare. In the constant-product calibrations, trading order can also change the limiting allocation; because the agents share the same homothetic preference, Proposition~\ref{prop:unique-price} implies that the two limiting unilateral equilibria nevertheless have the same unilateral equilibrium price. These observations are illustrative, and a general comparative-static theory for reserve scale is left for future work.

\section{Conclusion}\label{sec:con}

This paper develops an equilibrium framework for a closed CFMM economy.
Unlike the Walrasian benchmark, traders do not face a fixed linear budget
constraint: a trade moves the market maker's inventory along a liquidity
curve and changes the terms available to the next trader. We interpret a
market state as a unilateral no-trade equilibrium when it admits no profitable
unilateral valid deviation. Under our assumptions, an interior state is a
unilateral equilibrium exactly when the CFMM marginal price equals the
traders' marginal rates of substitution. For an interior initial state,
every individually rational unilateral equilibrium is Pareto optimal relative
to the fixed CFMM invariant.

The inventory mechanism also changes aggregation. A weak representative
agent can be constructed after an equilibrium is fixed through a weighted
sup-convolution of the traders' utilities. A state-independent strong
representative agent exists precisely when aggregate frontiers collapse to
the level-curve family of a single utility, or equivalently when the traders
share a common homothetic preference. Thus the distribution of holdings
across heterogeneous traders is generally an economically relevant state
variable.

The dynamic results distinguish what the mechanism permits from what
myopic optimization selects. From any interior initial state, every interior
feasible state relative to that initial state is reachable by finitely many
valid trades, while alternating utility-maximizing trades
converge to a Pareto optimal unilateral equilibrium. Trading order affects welfare
because the first trade changes the inventory state faced by the second
trader. We establish both first-mover advantage and disadvantage in the
first round and conjecture that the resulting welfare ordering persists at
the limiting equilibrium.

The model deliberately excludes fees, outside trading venues, strategic
liquidity provision, stochastic arrivals, and more than two assets. These
features are natural directions for future work. The central message of the
benchmark is that a CFMM does not simply reproduce a competitive market at a
different price. It replaces the fixed budget line by an endogenous
inventory curve. Marginal-price equality survives as a local equilibrium
condition, whereas welfare, aggregation, reachability, and path dependence
are governed by the shape and current position of that curve.

\clearpage
\appendix
\section{Technical proofs}\label{app:technical-proofs}

\subsection{Proofs for Section~\ref{sec:uf}}
\label{app:utility-proofs}

\begin{proof}[Proof of Proposition~\ref{prop:level-curve}]
Let $x'>0$. Since $u(x',0)=0$ and $\lim_{y\to\infty}u(x',y)=\infty$, continuity and the intermediate value theorem give a $y'$ such that $u(x',y')=c$. Strict monotonicity makes $y'$ unique. Define $\ell(x')=y'$. Then $s_{u,c}=\ell^g$, the graph of $\ell$.

\proofpart{Part~\eqref{curve:cts-dec}} Suppose $x_1<x_2$. If $\ell(x_1)\leq \ell(x_2)$, strict monotonicity of $u$ would give $u(x_1,\ell(x_1))<u(x_2,\ell(x_2))$, a contradiction. Thus $\ell$ is strictly decreasing. If $\ell$ were discontinuous at some $x\in\R_{++}$, its one-sided limits $y^-:=\ell(x^-)$ and $y^+:=\ell(x^+)$ would exist. Without loss of generality, suppose $y^-\neq y:=\ell(x)$. Continuity of $u$ would then imply $u(x,y^-)=c=u(x,y)$, again contradicting strict monotonicity.

\proofpart{Part~\eqref{curve:asymptote}} Let $x_n\downarrow0$. By part~\eqref{curve:cts-dec}, the sequence $\ell(x_n)$ is increasing. If $\ell(x_n)\uparrow M<\infty$, then $c=\lim_{n\to\infty}u(x_n,\ell(x_n))\leq\lim_{n\to\infty}u(x_n,M)=u(0,M)=0$, a contradiction. Similarly, if $x_n\uparrow\infty$ and $\ell(x_n)\downarrow m>0$, then $c=\lim_{n\to\infty}u(x_n,\ell(x_n))\geq\lim_{n\to\infty}u(x_n,m)=\infty$, which is impossible.

\proofpart{Part~\eqref{curve:ray-intercept}} Let $f(x)=\ell(x)/x$. The function $f$ is continuous, with $f(x)\to\infty$ as $x\downarrow0$ and $f(x)\to0$ as $x\to\infty$. Moreover, $f$ is strictly decreasing: if $0<x_1<x_2$, then $\ell(x_1)>\ell(x_2)>0$, and hence
\[
\frac{\ell(x_1)}{x_1}
>
\frac{\ell(x_2)}{x_1}
>
\frac{\ell(x_2)}{x_2}.
\]
The intermediate value theorem gives existence, and strict monotonicity gives uniqueness. Finally, if $k_1>k_2$ and $f(x_i)=k_i$, then $x_1<x_2$; otherwise $x_1\ge x_2$ would imply $k_1=f(x_1)\le f(x_2)=k_2$, a contradiction.

\proofpart{Part~\eqref{curve:cx}} Let $\lambda\in(0,1)$. Strict concavity gives $u((1-\lambda)x_1+\lambda x_2,(1-\lambda)\ell(x_1)+\lambda\ell(x_2))>c$. Strict monotonicity of $u$ then yields $\ell((1-\lambda)x_1+\lambda x_2)<(1-\lambda)\ell(x_1)+\lambda\ell(x_2)$, so $\ell$ is strictly convex.

\proofpart{Parts~\eqref{curve:diff} and~\eqref{curve:price}} Fix $x_0>0$ and let $y_0=\ell(x_0)$. Since $u$ is continuously differentiable on $\R_{++}^2$ and $\partial u/\partial y(x_0,y_0)>0$, the implicit function theorem makes $\ell$ continuously differentiable near $x_0$. Because $x_0$ was arbitrary, $\ell$ is continuously differentiable on $\R_{++}$. Moreover, $\ell'(x)=-(\partial u/\partial x)/(\partial u/\partial y)(x,\ell(x))=-p(u,x,\ell(x))$.

\proofpart{Part~\eqref{curve:axis-tangent}} Since $\ell$ is strictly convex, $\ell'$ is strictly increasing, so $\lim_{x\downarrow0}\ell'(x)$ exists. If this limit were $M>-\infty$, then $\ell(0^+)=\ell(1)-\int_0^1\ell'(x)\dd x\leq\ell(1)-M$, where the integral is improper. This contradicts part~\eqref{curve:asymptote}. Likewise, $\ell(x)\to0$ as $x\to\infty$ implies $\ell'(x)\to0$. Continuity and strict monotonicity of $\ell'$ now show that, for each $p>0$, there is a unique $x>0$ such that $\ell'(x)=-p$.
\end{proof}

\begin{proof}[Proof of Proposition~\ref{prop:equivalent-utility}]
First suppose that $u(\mathbf r)=u(\mathbf r')$. The two points lie on the
same level curve of $u$. Since $\mathcal L_u=\mathcal L_v$, that curve is
also a level curve of $v$, and therefore $v(\mathbf r)=v(\mathbf r')$.

Next suppose that $u(\mathbf r)<u(\mathbf r')$. Set
$c=u(\mathbf r)$, $c'=u(\mathbf r')$, $d=v(\mathbf r)$, and
$d'=v(\mathbf r')$. Equality of the level-curve families gives
$s_{u,c}=s_{v,d}$ and $s_{u,c'}=s_{v,d'}$. Since $u$ is strictly
increasing and $c<c'$, the curve $s_{u,c'}$ lies strictly above
$s_{u,c}$. Strict monotonicity of $v$ therefore gives $d<d'$. The case
$u(\mathbf r)>u(\mathbf r')$ is analogous.
\end{proof}

\begin{proof}[Proof of Proposition~\ref{prop:homothetic-preference}]
\proofpart{(i) \(\Rightarrow\) (ii)} Let $f$ satisfy $u=f\circ v$, put $d=f^{-1}(c)$ and $d'=f^{-1}(c')$, and take $k=d/d'$. Then
\[
(x,y)\in\ell^g_{u,c'}\iff(x,y)\in\ell^g_{v,d'}\iff(kx,ky)\in\ell^g_{v,d}\iff(kx,ky)\in\ell^g_{u,c}.
\]

\proofpart{(ii) \(\Rightarrow\) (i)} Fix $c'=1$. For any $(x,y)\in\R_{++}^2$, Proposition~\ref{prop:level-curve}~\eqref{curve:ray-intercept} gives a unique $x'$ such that $\ell_1(x')/x'=y/x$. Define $v(x,y)=x/x'$ and set $v(x,0)=v(0,y)=v(0,0)=0$.

For each $c>0$, there is a scalar $k(c)>0$ such that $\ell_{u,c}^g=k(c)\ell_{u,1}^g$. This scalar is unique because each positive ray meets each level curve exactly once. If $c_1<c_2$, strict monotonicity of $u$ implies $k(c_1)<k(c_2)$. If $u(x,y)=c$, then $(x,y)=k(c)(x',\ell_{u,1}(x'))$ for the ray-intersection point used above, so $v(x,y)=k(c)$. Thus $v=k\circ u$ on $\R_{++}^2$, with $k$ strictly increasing, and the two utilities share their lowest level on the coordinate axes. Hence $v$ is equivalent to $u$. The definition also gives $v(ax,ay)=av(x,y)$ for every $a>0$, so $v$ is homogeneous.

\proofpart{(ii) \(\Rightarrow\) (iii)} Fix $x,y,k>0$, and let $c'=u(x,y)$ and $c=u(kx,ky)$. Choose $k'>0$ such that $\ell_c^g=k'\ell_{c'}^g$. Then $k'y=k'\ell_{c'}(x)=\ell_c(k'x)$. Proposition~\ref{prop:level-curve}~\eqref{curve:ray-intercept} gives $k'=k$, and differentiation yields $-p(u,x,y)=\ell'_{c'}(x)=k^{-1}k\ell'_c(kx)=-p(u,kx,ky)$.

\proofpart{(iii) \(\Rightarrow\) (ii)} Fix $c,c'>0$ and $(x,y)\in\R_{++}^2$ with $u(x,y)=c$. Proposition~\ref{prop:level-curve}~\eqref{curve:ray-intercept} gives $(x',y')\in\R_{++}^2$ with $u(x',y')=c'$ and $y'/x'=y/x$. Put $k=x/x'>0$ and define $f(z)=k\ell_{c'}(z/k)$. Then $f(x)=y$, and for every $z>0$,
\[
f'(z)=\ell'_{c'}\left(\frac{z}{k}\right)=-p\left(u,\frac{z}{k},\ell_{c'}\left(\frac{z}{k}\right)\right)=-p(u,z,f(z)),
\]
where the final equality uses (iii). Consequently,
\[
\frac{d}{dz}u(z,f(z))
=\frac{\partial u}{\partial y}(z,f(z))\bigl(p(u,z,f(z))+f'(z)\bigr)=0.
\]
Thus $u(z,f(z))$ is constant. Since $f(x)=y$ and $u(x,y)=c$, we have $u(z,f(z))=c$ for all $z>0$. Uniqueness of the level-curve representation gives $f(z)=\ell_c(z)$, hence $\ell_c(z)=k\ell_{c'}(z/k)$ and $\ell_c^g=k\ell_{c'}^g$.

\proofpart{Part~\eqref{homo:1}} Suppose $0<x_1\leq x_2$ and $y_1\geq y_2>0$, with at least one inequality strict. Then $y_1/x_1>y_2/x_2$. Let $c=u(x_2,y_2)$, and choose $x\in\R_{++}$ such that $\ell_c(x)/x=y_1/x_1$. Proposition~\ref{prop:level-curve}~\eqref{curve:ray-intercept} implies $x<x_2$. Therefore $p(u,x_1,y_1)=p(u,x,\ell_c(x))=-\ell'_c(x)>-\ell'_c(x_2)=p(u,x_2,y_2)$.

\proofpart{Part~\eqref{homo:2}} Define $g(r)=p(u,1,r)$ for $r>0$. By (iii), $p(u,x,y)=p(u,1,y/x)=g(y/x)$. Part~\eqref{homo:1} shows that $r_1<r_2$ implies $g(r_1)<g(r_2)$, so $g$ is strictly increasing.
\end{proof}

\subsection{Compactness of the feasible-improvement set}
\label{app:compact-feasible-improvement}

\begin{proof}[Proof of Lemma~\ref{lem:compact-feasible-improvement}]
Let
$\overline{\mathbf r}:=\mathbf r_0^{\mathsf{in}}
+\mathbf r_1^{\mathsf{in}}+\mathbf r_2^{\mathsf{in}}$.
Every feasible state belongs to the compact box
\[
[0,\overline{\mathbf r}]^3
:=\left\{(\mathbf r_0,\mathbf r_1,\mathbf r_2)\in(\R_+^2)^3:
0\leq\mathbf r_i\leq\overline{\mathbf r},\ i=0,1,2\right\}.
\]
The feasible set is closed in this box because it is defined by
\[
\mathbf r_0+\mathbf r_1+\mathbf r_2=\overline{\mathbf r},
\qquad
\varphi(\mathbf r_0)=\varphi(\mathbf r_0^{\mathsf{in}}),
\]
and $\varphi$ is continuous. Hence
$\mathcal F(\mathbf R^{\mathsf{in}})$ is compact. Continuity of $u_1$ and
$u_2$ makes
\[
\mathcal U(\mathbf R^{\mathsf{in}})
=
\left\{\mathbf R\in\mathcal F(\mathbf R^{\mathsf{in}}):
 u_i(\mathbf r_i)\geq u_i(\mathbf r_i^{\mathsf{in}}),\ i=1,2\right\}
\]
a closed subset of $\mathcal F(\mathbf R^{\mathsf{in}})$. It is nonempty
because it contains $\mathbf R^{\mathsf{in}}$.
\end{proof}

\subsection{Regularity of infimal convolutions}
\label{app:infimal-convolution-regularity}

\begin{proof}[Proof of Lemma~\ref{lem:infimal-convolution-regularity}]
Write \(f=\ell_{u_1,c_1}\) and \(g=\ell_{u_2,c_2}\). For fixed \(x>0\),
consider \(H_x(a)=f(a)+g(x-a)\) on \(a\in(0,x)\). By
Proposition~\ref{prop:level-curve}, \(H_x(a)\to\infty\) as either
\(a\downarrow0\) or \(a\uparrow x\). Hence the minimum is attained in the
interior. Strict convexity of \(f\) and \(g\) makes the minimizer unique;
write it as \(a(x)\), and put \(x_1(x)=a(x)\) and
\(x_2(x)=x-a(x)\). The first-order condition is
\(f'(x_1(x))=g'(x_2(x))\).

The minimizer depends continuously on \(x\). Indeed, if \(x_k\to x\), the values at the fixed feasible split $x_k/2$ remain bounded. A minimizing sequence can therefore approach neither endpoint, because $f(a)\to\infty$ as $a\downarrow0$ and $g(x_k-a)\to\infty$ as $a\uparrow x_k$. Every convergent subsequence of \(a(x_k)\) must then minimize \(H_x\); uniqueness forces its limit to be \(a(x)\). Thus the whole sequence converges.

Let \(F=f\boxplus g\). For small \(h\), optimality at \(x\) and \(x+h\)
gives
\[
F(x+h)-F(x)
\leq
g(x_2(x)+h)-g(x_2(x))
\]
and
\[
F(x+h)-F(x)
\geq
g(x+h-a(x+h))-g(x-a(x+h)).
\]
Taking the right- and left-hand difference quotients separately, and using continuity of \(a(\cdot)\) and \(g'\), shows that \(F'(x)=g'(x_2(x))\). The first-order condition also gives
\(F'(x)=f'(x_1(x))\), and continuity of the minimizing split makes \(F'\)
continuous.

To prove strict convexity, take \(x\neq y\), their unique minimizing splits,
and \(\theta\in(0,1)\). The convex combination of the two splits is feasible
for \(\theta x+(1-\theta)y\). At least one component of the two splits
differs, so strict convexity of \(f\) and \(g\) yields
\[
F(\theta x+(1-\theta)y)
<
\theta F(x)+(1-\theta)F(y).
\]
The price identity follows from
Proposition~\ref{prop:level-curve}~\eqref{curve:price}.
\end{proof}

\subsection{Weighted sup-convolution}
\label{app:weighted-sup-convolution}

\begin{proof}[Proof of Lemma~\ref{lem:weighted-sup-convolution}]
For a fixed \(\mathbf r\in\R_+^2\), the allocation set
\[
\mathcal A(\mathbf r)
=
\left\{
(\mathbf r_1,\ldots,\mathbf r_n)\in(\R_+^2)^n:
\sum_{i=1}^n\mathbf r_i=\mathbf r
\right\}
\]
is nonempty and compact. The objective is continuous, so the supremum is
finite and attained.

To prove concavity, let \(\mathbf r,\mathbf s\in\R_+^2\), choose optimal
allocations \((\mathbf r_i)\) and \((\mathbf s_i)\), and fix
\(\theta\in[0,1]\). The allocation
\((\theta\mathbf r_i+(1-\theta)\mathbf s_i)_{i=1}^n\) is feasible for
\(\theta\mathbf r+(1-\theta)\mathbf s\). Concavity of the \(u_i\) gives
\[
U^w(\theta\mathbf r+(1-\theta)\mathbf s)
\geq
\theta U^w(\mathbf r)+(1-\theta)U^w(\mathbf s).
\]

For continuity, let \(\mathbf r^k\to\mathbf r\), and choose a subsequence along which \(U^w(\mathbf r^k)\) converges to its limsup. If \((\mathbf r_i^k)\) is optimal for \(\mathbf r^k\), uniform boundedness gives a further subsequence converging to an allocation \((\overline{\mathbf r}_i)\) of \(\mathbf r\). Continuity of the objective yields
\(\limsup_k U^w(\mathbf r^k)\leq U^w(\mathbf r)\). Conversely, take an
optimal allocation \((\mathbf r_i)\) of \(\mathbf r\). Coordinate by
coordinate, scale the positive aggregate coordinates proportionally from
\(r_j\) to \(r_j^k\); when \(r_j=0\), allocate the vanishing quantity
\(r_j^k\) to one agent. This produces feasible allocations
\((\widetilde{\mathbf r}_i^k)\) of \(\mathbf r^k\) with
\(\widetilde{\mathbf r}_i^k\to\mathbf r_i\). Continuity of the utilities
then gives
\(\liminf_k U^w(\mathbf r^k)\geq U^w(\mathbf r)\).

If each $u_i$ is increasing, assigning any additional resources to one
agent shows that $U^w$ is increasing. If each $u_i$ also satisfies
Assumption~\eqref{as:1}, then any allocation of an aggregate bundle on a
coordinate axis gives every agent a bundle on that same axis. Each term in
the objective is therefore zero, and $U^w$ vanishes on the coordinate
axes.

Finally, concavity and differentiability at the interior points
\(\mathbf r_i^*\) imply, for every feasible split
\(\sum_i\mathbf r_i=\mathbf r\),
\[
\sum_{i=1}^n w_i u_i(\mathbf r_i)
\leq
\sum_{i=1}^n w_i u_i(\mathbf r_i^*)
+
\sum_{i=1}^n
w_i\nabla u_i(\mathbf r_i^*)\cdot(\mathbf r_i-\mathbf r_i^*).
\]
Under the common-gradient condition, the last sum equals
\(\boldsymbol\lambda\cdot(\mathbf r-\mathbf r_A^*)\). At
\(\mathbf r=\mathbf r_A^*\), this proves optimality of the proposed
allocation. Taking the supremum over all splits for an arbitrary
\(\mathbf r\) gives~\eqref{eq:weighted-sup-supergradient}.
\end{proof}

\subsection{The tangent-intercept parametrization}
\label{app:tangent-transform}

\begin{proof}[Proof of Lemma~\ref{lem:cont-of-transform}]
Fix $q>0$. Proposition~\ref{prop:level-curve} gives $\ell_{u,c}(x)\to\infty$ as $x\downarrow0$ and $\ell_{u,c}(x)\to0$ as $x\to\infty$. Hence $x\mapsto qx+\ell_{u,c}(x)$ tends to infinity at both endpoints. The infimum defining $A_{\ell_{u,c}}(q)$ is therefore attained and lies in $(0,\infty)$.

We first prove strict monotonicity. Let $0<c_1<c_2$. For every $x>0$, strict monotonicity of $u$ gives $\ell_{u,c_1}(x)<\ell_{u,c_2}(x)$. If $x_2$ minimizes $A_{\ell_{u,c_2}}(q)$, then
\[
A_{\ell_{u,c_2}}(q)=qx_2+\ell_{u,c_2}(x_2)>qx_2+\ell_{u,c_1}(x_2)\geq A_{\ell_{u,c_1}}(q).
\]
Thus $c\mapsto A_{\ell_{u,c}}(q)$ is strictly increasing.

Next we prove continuity. Fix $c_0>0$ and choose $0<c_-<c_0<c_+$. For $c\in[c_-,c_+]$, let $x_c$ minimize $x\mapsto qx+\ell_{u,c}(x)$. Monotonicity gives $A_{\ell_{u,c}}(q)\leq A_{\ell_{u,c_+}}(q)=:M$, so $qx_c\leq M$, $\ell_{u,c}(x_c)\leq M$, and $x_c\leq M/q$. Since $\ell_{u,c}(x)\geq\ell_{u,c_-}(x)$ and $\ell_{u,c_-}(x)\to\infty$ as $x\downarrow0$, there is an $a>0$ such that no minimizer lies in $(0,a)$. Therefore every $x_c$ lies in the compact interval $K=[a,M/q]$.

The map $(c,x)\mapsto\ell_{u,c}(x)$ is continuous on $[c_-,c_+]\times K$. Indeed, it is defined implicitly by $u(x,\ell_{u,c}(x))=c$, and the implicit function theorem gives continuity at every point. Compactness then makes this continuity uniform on the rectangle. If $c_n\to c_0$, let $x_n$ and $x_0$ be corresponding minimizers. Then
\[
A_{\ell_{u,c_n}}(q)-A_{\ell_{u,c_0}}(q)
\leq\sup_{x\in K}\left|\ell_{u,c_n}(x)-\ell_{u,c_0}(x)\right|,
\]
and the same bound holds with the two $A$-terms reversed. Uniform convergence on $K$ therefore gives $A_{\ell_{u,c_n}}(q)\to A_{\ell_{u,c_0}}(q)$.

It remains to show that $\alpha_u(c)=A_{\ell_{u,c}}(q_0)$ maps onto $(0,\infty)$. First, $\alpha_u(c)\downarrow0$ as $c\downarrow0$. Given $\varepsilon>0$, choose $x,y>0$ with $q_0x+y<\varepsilon$ and set $c_\varepsilon=u(x,y)$. If $0<c\leq c_\varepsilon$, then $0<\alpha_u(c)\leq q_0x+\ell_{u,c}(x)\leq q_0x+y<\varepsilon$. Second, $\alpha_u(c)\to\infty$ as $c\to\infty$. Otherwise, there would be $c_n\to\infty$ and $M<\infty$ with $\alpha_u(c_n)\leq M$. If $x_n$ is a minimizer and $y_n=\ell_{u,c_n}(x_n)$, then $0\leq x_n\leq M/q_0$ and $0\leq y_n\leq M$. The points $(x_n,y_n)$ lie in a compact set, so continuity of $u$ makes $u(x_n,y_n)$ bounded, contradicting $u(x_n,y_n)=c_n\to\infty$.

The map $\alpha_u$ is continuous and strictly increasing, with endpoint limits $0$ and $\infty$, so it is onto $(0,\infty)$. Existence and uniqueness of the level curve with $A_{\ell_{u,c}}(q_0)=t$ follow immediately.
\end{proof}

\subsection{Collapse of aggregate frontiers}
\label{app:collapse-proof}

\begin{proof}[Proof of Lemma~\ref{lem:collapse}]
Suppose first that $u_1,u_2$, and $v$ are equivalent to the same homothetic preference. Then $\Lambda_{u_1}=\Lambda_{u_2}=\Lambda_v$. Let $\ell$ be one level curve in this common family. Proposition~\ref{prop:homothetic-preference} implies that arbitrary level curves have the form $f(x)=a\ell(x/a)$ and $g(x)=b\ell(x/b)$ for some $a,b>0$. Convexity of $\ell$ gives
\begin{align*}
(f\boxplus g)(x)
&=\inf_{x_1+x_2=x}\left\{a\ell(x_1/a)+b\ell(x_2/b)\right\}\\
&=(a+b)\ell\left(\frac{x}{a+b}\right),
\end{align*}
with equality at $x_1=ax/(a+b)$ and $x_2=bx/(a+b)$. Thus $f\boxplus g\in\Lambda_v$. Conversely, every positive level curve of $v$ is obtained by taking $a=b$ equal to half its dilation factor. Hence~\eqref{eq:collapse_statement} holds.

Conversely, assume~\eqref{eq:collapse_statement}. For each $q>0$, the infimum defining $A_f(q)$ is attained at the unique $x_q$ satisfying $f'(x_q)=-q$.

The supporting-line representation $f(x)=\sup_{q>0}\{A_f(q)-qx\}$ shows that $f\mapsto A_f$ is injective. The transform is also additive: for every $q>0$, $A_{f\boxplus g}(q)=A_f(q)+A_g(q)$. Indeed,
\begin{align*}
A_{f\boxplus g}(q)
&=\inf_x\{qx+(f\boxplus g)(x)\}\\
&=\inf_{x_1,x_2}\{q(x_1+x_2)+f(x_1)+g(x_2)\}\\
&=A_f(q)+A_g(q).
\end{align*}

Let $\mathcal A_u:=\{A_{\ell_{u,c}}:c>0\}$, with pointwise addition. The collapse condition is equivalent to $\mathcal A_{u_1}+\mathcal A_{u_2}=\mathcal A_v$. As $c\downarrow0$, $A_{\ell_{u,c}}(q)\downarrow0$ for every $q>0$.

We first prove $\mathcal A_{u_1}\subseteq\mathcal A_v$ and $\mathcal A_{u_2}\subseteq\mathcal A_v$. Fix $A\in\mathcal A_{u_1}$ and choose $C_n\in\mathcal A_{u_2}$ with $C_n(q)\downarrow0$ for every $q>0$. Then $A+C_n\in\mathcal A_v$. Their values at a fixed $q_0$ converge to $A(q_0)$; Lemma~\ref{lem:cont-of-transform} and the one-to-one parametrization by the $q_0$-intercept imply that $A+C_n$ converges pointwise to the unique member of $\mathcal A_v$ with that intercept. Since it also converges pointwise to $A$, we obtain $A\in\mathcal A_v$. The argument for $\mathcal A_{u_2}$ is identical.

The inclusions are equalities. For fixed $q_0>0$, each family $\mathcal A_u$ contains exactly one element for each prescribed positive value at $q_0$. Hence $\mathcal A_{u_1}=\mathcal A_{u_2}=\mathcal A_v$, and in particular $\mathcal A_v+\mathcal A_v=\mathcal A_v$.

For $t>0$, let $B_t\in\mathcal A_v$ be the unique element with $B_t(q_0)=t$. Closure under addition gives $B_s+B_t=B_{s+t}$ for $s,t>0$. For each fixed $q>0$, continuity and additivity imply $B_t(q)=tB_1(q)$. Let $f_t$ be the unique level curve with $A_{f_t}(q_0)=t$. Then $A_{f_t}=tA_{f_1}$. The dilated curve $g_t(x)=tf_1(x/t)$ has the same transform, so injectivity yields $f_t(x)=tf_1(x/t)$. Thus the level curves of $v$ are mutual dilations, and $v$ represents a homothetic preference.

Finally, $\mathcal A_{u_1}=\mathcal A_{u_2}=\mathcal A_v$ and injectivity imply $\Lambda_{u_1}=\Lambda_{u_2}=\Lambda_v$, and hence $\mathcal L_{u_1}^+=\mathcal L_{u_2}^+=\mathcal L_v^+$. The zero level sets also coincide with the coordinate axes. Proposition~\ref{prop:equivalent-utility} therefore shows that $u_1,u_2$, and $v$ are equivalent.
\end{proof}

\subsection{Local valid-trade controllability}
\label{app:local-reachability}

\begin{proof}[Proof of Lemma~\ref{lem:local-valid-reachability}]
Fix \((x,\mathbf a)\in\Omega\). Agent~2 can move the CFMM from
\(\mathbf z(x)\) to \(\mathbf z(x')\), for any sufficiently close \(x'\),
by submitting the valid trade
\(\mathbf h=\mathbf z(x')-\mathbf z(x)\). This leaves agent~1's holding
unchanged.

It remains to generate arbitrary small changes in agent~1's holding while
returning the CFMM to a fixed point \(\mathbf z(x_0)\). For small
\(\delta\), consider the two-trade loop
\(x_0\to x_0+\delta\to x_0\), with agent~2 moving first and agent~1
second. Agent~1's net change is
\[
\mathbf z(x_0+\delta)-\mathbf z(x_0)
=
(\delta,\ell(x_0+\delta)-\ell(x_0)),
\]
which has a nonzero \(X\)-component.

To generate a vertical displacement, fix a small \(\eta\neq0\). For \(b\)
near zero, consider the four-trade loop
\[
x_0\longrightarrow x_0+\eta
\longrightarrow x_0+\eta+b
\longrightarrow x_0+b
\longrightarrow x_0,
\]
where agents \(2,1,2,1\) execute the successive trades. Agent~1's net
change is
\[
\mathbf z(x_0+\eta)-\mathbf z(x_0+\eta+b)
+\mathbf z(x_0+b)-\mathbf z(x_0).
\]
Its \(X\)-component is zero, and its \(Y\)-component is
\[
m(b)
=
\ell(x_0+\eta)-\ell(x_0+\eta+b)
+\ell(x_0+b)-\ell(x_0).
\]
We have \(m(0)=0\) and
\(m'(0)=-\ell'(x_0+\eta)+\ell'(x_0)\neq0\) for sufficiently small
\(\eta\neq0\), because \(\ell'\) is strictly increasing. The inverse
function theorem therefore lets the four-trade loop generate any
sufficiently small vertical displacement.

Given a sufficiently small
\(\Delta=(\Delta_x,\Delta_y)\), first use the two-trade loop with
\(\delta=\Delta_x\). It produces
\((\Delta_x,\ell(x_0+\Delta_x)-\ell(x_0))\). Then use the
four-trade loop to add the vertical amount
\(\Delta_y-[\ell(x_0+\Delta_x)-\ell(x_0)]\). After shrinking the
neighborhood if necessary, every intermediate CFMM inventory and agent
holding remains strictly positive. Combining this construction with the
initial small move of the CFMM proves the claimed local reachability.
\end{proof}

\subsection{Alternating price bracketing}
\label{app:alternating-price-bracketing}

\begin{proof}[Proof of Lemma~\ref{lem:alternating-price-bracketing}]
The entire path stays in a compact subset of $(\R_{++}^2)^3$. Total
inventory is conserved, and the CFMM remains on the fixed positive invariant
level $\varphi(\mathbf r_0^t)=\varphi(\mathbf r_0^0)$. Hence every
coordinate is bounded above, while the CFMM cannot approach an axis. The
zero trade is available whenever an agent moves, so each agent's utility is
nondecreasing along the path and remains strictly positive. Since utility is
zero on the coordinate axes, neither agent can approach the boundary.

A single best response brackets prices. If the CFMM price is below the
active agent's indifference price, the agent buys $X$: the CFMM price rises,
the agent's price falls, and the two post-trade prices coincide. Their common
value therefore lies strictly between the pre-trade prices. The case in
which the CFMM price is above the agent's price is symmetric. This follows
from Proposition~\ref{prop:single-trade}~\eqref{trade:optimal-implies-price}
and~\eqref{trade:sign}.

Suppose that, after agent~1's $n$th trade,
$p_0=p_1=A_n<B_{n-1}=p_2$. Agent~2 buys $X$, so the one-trade bracketing
property gives $A_n<B_n<B_{n-1}$. Agent~1 then faces
$p_1=A_n<B_n=p_0$ and sells $X$, which gives
$A_n<A_{n+1}<B_n$. Thus
\[
A_n<A_{n+1}<B_n<B_{n-1},\qquad n\geq1.
\]
Consequently, $(A_n)$ is increasing, $(B_n)$ is decreasing, and both
converge.

After the first trade, agent~2 always buys $X$ and pays $Y$, whereas
agent~1 always sells $X$ and receives $Y$. Hence
$x_2^t$ and $y_1^t$ are nondecreasing, while $y_2^t$ and $x_1^t$ are
nonincreasing. These bounded coordinate sequences converge. Conservation
then makes the CFMM inventory converge as well. The compact-interior
argument above shows that the limiting state is interior.

The CFMM price converges by continuity. Since $(A_n)$ and $(B_n)$ are its
odd and even subsequences, they must have the same limit, say $L$. Between
successive trades, agent~1's price equals the relevant $A_n$, and agent~2's
price equals the relevant $B_n$. Thus all three prices converge to $L$.
The case with the reverse initial ordering follows by reversing all
inequalities and interchanging buying with selling.
\end{proof}
 
\bibliographystyle{apalike}
\bibliography{ref}

\end{document}